\newcommand{\lyxaddress}[1]{
	\par {\raggedright #1
	\vspace{1.4em}
	\noindent\par}
}
\newtheorem{theorem}{Theorem}
\newtheorem{proposition}[theorem]{Proposition}
\newtheorem{lemma}[theorem]{Lemma}
\newtheorem{corollary}[theorem]{Corollary}
\newtheorem*{lemma*}{Lemma}
\theoremstyle{remark}
\newtheorem{remark}[theorem]{Remark}
\newtheorem*{remark*}{Remark}
\newtheorem*{remarks*}{Remarks}
\newtheorem*{example*}{Example}
\newtheorem*{question*}{QUESTION}
\newtheorem*{conjecture*}{CONJECTURE}
\theoremstyle{definition}
\newtheorem{definition}[theorem]{Definition}
\newtheorem*{definition*}{Definition}
\newtheorem*{notation*}{Notation}
\newcommand{\Spec}{\mathop\mathrm{spec}\nolimits}
\newcommand{\supp}{\mathop\mathrm{supp}\nolimits}
\renewcommand{\span}{\mathop\mathrm{span}\nolimits}
\newcommand{\diag}{\mathop\mathrm{diag}\nolimits}
\newcommand{\Tr}{\mathop\mathrm{tr}\nolimits}
\begin{document}
\title{A family of orthogonal polynomials corresponding to Jacobi matrices
with a trace class inverse}
\author{Pavel \v{S}\v{t}ov\'\i\v{c}ek}
\maketitle

\lyxaddress{Department of Mathematics, Faculty of Nuclear Science, Czech Technical
University in Prague, Trojanova 13, 120~00 Praha, Czech Republic}
\begin{abstract}
\noindent Assume that $\{a_{n};\,n\geq0\}$ is a sequence of positive
numbers and $\sum a_{n}^{\,-1}<\infty$. Let $\alpha_{n}=ka_{n}$,
$\beta_{n}=a_{n}+k^{2}a_{n-1}$ where $k\in(0,1)$ is a parameter,
and let $\{P_{n}(x)\}$ be an orthonormal polynomial sequence defined
by the three-term recurrence
\[
\alpha_{0}P_{1}(x)+(\beta_{0}-x)P_{0}(x)=0,\ \alpha_{n}P_{n+1}(x)+(\beta_{n}-x)P_{n}(x)+\alpha_{n-1}P_{n-1}(x)=0
\]

\noindent for $n\geq1$, with $P_{0}(x)=1$. Let $J$ be the corresponding
Jacobi (tridiagonal) matrix, i.e. $J_{n,n}=\beta_{n}$, $J_{n,n+1}=J_{n+1,n}=\alpha_{n}$
for $n\geq0$. Then $J^{-1}$ exists and belongs to the trace class.
We derive an explicit formula for $P_{n}(x)$ as well as for the characteristic
function of $J$ and describe the orthogonality measure for the polynomial
sequence. As a particular case, the modified $q$-Laguerre polynomials
are introduced and studied.
\end{abstract}
\noindent \begin{flushleft}
\emph{Keywords}: orthogonal polynomials; Jacobi matrix; $q$-Laguerre
polynomials\emph{}\\
\emph{MSC codes}: 33C47; 33D45; 47B36
\par\end{flushleft}

\section{Introduction}

A semi-infinite Jacobi (tridiagonal) matrix will be written in the
form
\begin{equation}
\mathcal{J}=\left(\begin{array}{ccccccc}
\beta_{0} & \alpha_{0}\\
\alpha_{0} & \beta_{1} & \alpha_{1}\\
 & \alpha_{1} & \beta_{2} & \alpha_{2}\\
 &  & \ddots & \ddots & \ddots\\
 &  &  & \ddots & \ddots\\
\\
\end{array}\right)\!.\label{eq:J}
\end{equation}
$\mathcal{J}$ is always assumed to be real and non-decomposable,
i.e. $\alpha_{n}\neq0$ for all $n\geq0$. Suppose we are given a
sequence of positive numbers $\{a_{n};\,n\geq0\}$ such that
\begin{equation}
\sum_{n=0}^{\infty}\frac{1}{a_{n}}<\infty,\label{eq:1-an-conv}
\end{equation}
and a parameter $k\in(0,1)$. We will focus on sequences $\{\alpha_{n}\}$
and $\{\beta_{n}\}$ defined as
\begin{equation}
\alpha_{n}:=ka_{n},\ \beta_{n}:=a_{n}+k^{2}a_{n-1}\ \,\text{for}\ n\in\mathbb{Z}_{+}\label{eq:alpha-beta-n}
\end{equation}
(with $\mathbb{Z}_{+}$ standing for non-negative integers).

We aim to study the orthonormal polynomial sequence $\{P_{n}(x);\,n\geq0\}$
defined by the three-term recurrence with initial data:
\begin{eqnarray}
 &  & \alpha_{0}P_{1}(x)+(\beta_{0}-x)P_{0}(x)=0,\nonumber \\
 &  & \alpha_{n}P_{n+1}(x)+(\beta_{n}-z)P_{n}(x)+\alpha_{n-1}P_{n-1}(x)=0\ \,\text{for}\ n\geq1,\label{eq:Pn-recurr}\\
 &  & P_{0}(x)=1.\nonumber 
\end{eqnarray}
Our goal is derivation of a formula for the polynomials $P_{n}(x)$
and a description of the respective orthogonality measure. This task
is closely related to analysis of spectral properties of an operator
in the Hilbert space $\ell^{2}(\mathbb{Z}_{+})$ represented by the
Jacobi matrix (\ref{eq:J}), (\ref{eq:alpha-beta-n}).

A short remark concerning the matrix operator is worthwhile. Let us
denote the canonical basis in $\ell^{2}(\mathbb{Z}_{+})$ as $\{\pmb{e}_{n};\,n\geq0\}$.
The matrix (\ref{eq:J}) clearly represents a symmetric operator in
$\ell^{2}(\mathbb{Z}_{+})$, with the domain being equal to the linear
hull of $\{\pmb{e}_{n}\}$. The symmetric operator will be called
$\stackrel{\mbox{{\ }}_{\circ}}{J}$. The deficiency indices of $\stackrel{\mbox{{\ }}_{\circ}}{J}$
are either $(0,0)$ or $(1,1)$. As is well known, this happens if
and only if the Hamburger moment problem for $\{P_{n}(x)\}$ is or
is not determinate, respectively \cite{Akhiezer}. Under our assumptions
it will turn out that $\stackrel{\mbox{{\ }}_{\circ}}{J}$ is essentially
self-adjoint. We will denote its self-adjoint closure by the symbol
$J$.

In order to formulate the result we introduce two complex functions
\begin{eqnarray}
\mathfrak{F}(z) & := & 1+\sum_{m=1}^{\infty}(-1)^{m}\label{eq:Fgt}\\
 &  & \times\bigg(\,\sum_{0\leq j_{1}<j_{2}<\ldots<j_{m}<\infty}\frac{(1-k^{2(j_{1}+1)})(1-k^{2(j_{2}-j_{1})})\cdots\left(1-k^{2(j_{m}-j_{m-1})}\right)}{(1-k^{2})^{m}a_{j_{1}}a_{j_{2}}\cdots a_{j_{m}}}\bigg)z^{m}\nonumber 
\end{eqnarray}
and
\begin{eqnarray}
\mathfrak{W}(z) & := & \sum_{m=0}^{\infty}(-1)^{m}\label{eq:Wgt}\\
 &  & \hskip-1.4em\times\bigg(\,\sum_{0\leq j_{0}<j_{1}<\ldots<j_{m}<\infty}\frac{k^{2j_{0}}(1-k^{2(j_{1}-j_{0})})(1-k^{2(j_{2}-j_{1})})\cdots(1-k^{2(j_{m}-j_{m-1})})}{(1-k^{2})^{m}a_{j_{0}}a_{j_{1}}a_{j_{2}}\ldots a_{j_{m}}}\bigg)z^{m}\nonumber 
\end{eqnarray}
With the above assumptions it is readily seen that the functions are
both entire. Moreover, we extend definition (\ref{eq:Wgt}) by defining
a countable family of entire functions, indexed by $n\in\mathbb{Z}_{+}$,
\begin{eqnarray}
\mathfrak{W}_{n}(z) & := & \sum_{m=0}^{\infty}(-1)^{m}\label{eq:Wgt-n}\\
 &  & \hskip-1.8em\times\bigg(\,\sum_{n\leq j_{0}<j_{1}<\ldots<j_{m}<\infty}\frac{k^{2j_{0}}(1-k^{2(j_{1}-j_{0})})(1-k^{2(j_{2}-j_{1})})\cdots(1-k^{2(j_{m}-j_{m-1})})}{(1-k^{2})^{m}a_{j_{0}}a_{j_{1}}a_{j_{2}}\ldots a_{j_{m}}}\bigg)z^{m}.\nonumber 
\end{eqnarray}
Hence $\mathfrak{W}_{0}(z)\equiv\mathfrak{W}(z)$. It is straightforward
to derive the estimate
\begin{equation}
|\mathfrak{W}_{n}(z)|\leq\frac{k^{2n}}{\min\{a_{j};\,j\geq n\}}\,\exp\!\bigg(\,\frac{|z|}{1-k^{2}}\,\sum_{j=n+1}^{\infty}\frac{1}{a_{j}}\bigg).\label{eq:Wn-estim}
\end{equation}

The most essential properties of $J$ and $\{P_{n}(x)\}$ are described
in the following two theorems.

\begin{theorem}\label{thm:J} Let $\{a_{n};\,n\geq0\}$ be a sequence
of positive numbers satisfying (\ref{eq:1-an-conv}) and\linebreak{}
$k\in(0,1)$. Furthermore, let $\{\alpha_{n}\}$ and $\{\beta_{n}\}$
be the sequences introduced in (\ref{eq:alpha-beta-n}). Then the
symmetric operator $\stackrel{\mbox{{\ }}_{\circ}}{J}$ in $\ell^{2}(\mathbb{Z}_{+})$
which is defined on the linear hull of $\{\pmb{e}_{n};\,n\geq0\}$
and represented by the Jacobi matrix (\ref{eq:J}) is essentially
self-adjoint. The spectrum of its closure, a self-adjoint operator
$J$, satisfies
\begin{equation}
\Spec(J)=\Spec_{p}(J)=\{\lambda_{j};\,j\in\mathbb{Z}_{+}\},\label{eq:spec-J}
\end{equation}
with all eigenvalues $\lambda_{j}$ being positive and simple. Moreover,
it holds true that
\[
\sum_{j=0}^{\infty}\frac{1}{\lambda_{j}}=\sum_{j=0}^{\infty}\frac{1-k^{2j+2}}{(1-k^{2})a_{j}}<\infty
\]
so that $J^{-1}$ exists and belongs to the trace class. For every
$j\in\mathbb{Z}_{+}$, an eigenvector corresponding to the eigenvalue
$\lambda_{j}$ can be chosen as the column vector
\begin{equation}
\big(\Phi_{0}(\lambda_{j}),\Phi_{1}(\lambda_{j}),\Phi_{2}(\lambda_{j}),\ldots\big)^{T}\in\ell^{2}(\mathbb{Z}_{+}),\ \text{where}\ \Phi_{n}(z):=(-1)^{n}k^{-n}\mathfrak{W}_{n}(z),\label{eq:Phin}
\end{equation}
with $\mathfrak{W}_{n}(z)$ defined in (\ref{eq:Wgt-n}). The entire
function $\mathfrak{F}(z)$ defined in (\ref{eq:Fgt}) is a characteristic
function of $J$ in the sense that, for all $z\in\mathbb{C}$, 
\begin{equation}
\mathfrak{F}(z)=\prod_{j=0}^{\infty}\bigg(1-\frac{z}{\lambda_{j}}\bigg).\label{eq:Fgt-prod}
\end{equation}
\end{theorem}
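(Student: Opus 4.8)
The key structural fact is that the parametrization (\ref{eq:alpha-beta-n}) is exactly a bidiagonal (Cholesky-type) factorization $J=LL^{T}$, where $L$ is the lower bidiagonal operator determined by $L_{n,n}=\sqrt{a_{n}}$ and $L_{n+1,n}=k\sqrt{a_{n}}$ (with $a_{-1}:=0$): one checks at once that $(LL^{T})_{n,n}=a_{n}+k^{2}a_{n-1}=\beta_{n}$ and $(LL^{T})_{n,n+1}=\sqrt{a_{n}}\,k\sqrt{a_{n}}=ka_{n}=\alpha_{n}$. The inverse $K:=L^{-1}$ is lower triangular with $K_{m,n}=(-1)^{m-n}k^{m-n}a_{m}^{-1/2}$ for $m\geq n$, and the first thing I would verify, using (\ref{eq:1-an-conv}), is that $K$ is Hilbert--Schmidt: $\|K\|_{\mathrm{HS}}^{2}=\sum_{m=0}^{\infty}\sum_{n=0}^{m}k^{2(m-n)}a_{m}^{-1}=\sum_{m=0}^{\infty}a_{m}^{-1}\sum_{l=0}^{m}k^{2l}=\sum_{m=0}^{\infty}\frac{1-k^{2m+2}}{(1-k^{2})a_{m}}<\infty$. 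Hence $K$ is bounded, the positive operator $G:=K^{*}K$ is trace class with $\Tr G=\|K\|_{\mathrm{HS}}^{2}$ equal to the asserted sum, and $G\stackrel{\mbox{{\ }}_{\circ}}{J}\phi=(L^{T})^{-1}L^{-1}LL^{T}\phi=\phi$ for every finitely supported $\phi$.

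Essential self-adjointness I would obtain from the Weyl limit-point criterion, i.e.\ by exhibiting one solution of the recurrence that is not in $\ell^{2}$. At $z=0$ the sequence $u_{n}=(-1)^{n}k^{-n}$ solves the bulk three-term relation (a one-line cancellation using $\beta_{n}=a_{n}+k^{2}a_{n-1}$, $\alpha_{n}=ka_{n}$), and evidently $u\notin\ell^{2}$. Since the limit-point/limit-circle alternative is independent of the spectral parameter, the recurrence is in the limit-point case, so the deficiency indices of $\stackrel{\mbox{{\ }}_{\circ}}{J}$ are $(0,0)$ and $\stackrel{\mbox{{\ }}_{\circ}}{J}$ is essentially self-adjoint with self-adjoint closure $J$. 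Passing to the closure in $G\stackrel{\mbox{{\ }}_{\circ}}{J}=I$ gives $GJ=I$ on $\Dom(J)$; since $G=K^{*}K$ is injective, $G^{-1}$ is self-adjoint and $J\subseteq G^{-1}$, whence $J=G^{-1}$ and $J^{-1}=G$ is compact (indeed trace class). Consequently $\Spec(J)=\Spec_{p}(J)$ is a discrete set of eigenvalues accumulating only at $+\infty$; they are positive because $\langle\stackrel{\mbox{{\ }}_{\circ}}{J}\phi,\phi\rangle=\|L^{T}\phi\|^{2}\geq0$, and $\Tr(J^{-1})=\sum_{j}\lambda_{j}^{-1}$ equals the sum found above.

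To locate the eigenvalues I would establish the two defining properties of $\mathfrak{W}_{n}$ and $\mathfrak{F}$. Splitting the nested sum in (\ref{eq:Wgt-n}) according to whether its smallest index equals $n$ yields $\mathfrak{W}_{n}-\mathfrak{W}_{n+1}=k^{2n}a_{n}^{-1}T_{n}$ for an explicit series $T_{n}$; the bulk recurrence for $\Phi_{n}(z)=(-1)^{n}k^{-n}\mathfrak{W}_{n}(z)$ is then equivalent, coefficient by coefficient in $z$, to the telescoping identity $T_{n}-T_{n-1}=zk^{-2n}\mathfrak{W}_{n}$, while the $n=0$ equation evaluates the boundary defect as $(\beta_{0}-z)\Phi_{0}(z)+\alpha_{0}\Phi_{1}(z)=\mathfrak{F}(z)$. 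I expect this bookkeeping of the multiple sums to be the main obstacle of the whole argument, although each identity reduces to a finite rearrangement at every order in $z$. Granting it, the estimate (\ref{eq:Wn-estim}) shows $\Phi(z)=(\Phi_{n}(z))_{n}\in\ell^{2}$ for every $z$, and since the limit-point property makes the $\ell^{2}$ solution space one-dimensional for each $z$ (hence spanned by $\Phi(z)$), a number $\lambda$ is an eigenvalue of $J$ exactly when $\Phi(\lambda)$ additionally satisfies the $n=0$ equation, i.e.\ exactly when $\mathfrak{F}(\lambda)=0$; the eigenvector is then $(\Phi_{n}(\lambda))_{n}$ and the eigenvalue is simple because the full solution space is one-dimensional. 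In particular every zero of $\mathfrak{F}$ produces an eigenvector of the self-adjoint operator $J$, so all zeros are real and positive.

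Finally, for (\ref{eq:Fgt-prod}) I would invoke Hadamard's factorization theorem. Bounding each factor $1-k^{2(\cdot)}$ by $1$ gives $|[z^{m}]\mathfrak{F}|\leq C^{m}/m!$ with $C=(1-k^{2})^{-1}\sum_{j}a_{j}^{-1}$, so $|\mathfrak{F}(z)|\leq e^{C|z|}$ and $\mathfrak{F}$ is entire of order at most $1$; its zeros $\lambda_{j}>0$ satisfy $\sum_{j}\lambda_{j}^{-1}<\infty$, hence have genus $0$, and together with $\mathfrak{F}(0)=1$ this yields $\mathfrak{F}(z)=e^{az}\prod_{j}(1-z/\lambda_{j})$. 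Comparing the coefficient of $z^{1}$ on the two sides — on the left $-\sum_{j}\frac{1-k^{2j+2}}{(1-k^{2})a_{j}}=-\Tr(J^{-1})=-\sum_{j}\lambda_{j}^{-1}$, on the right $a-\sum_{j}\lambda_{j}^{-1}$ — forces $a=0$, and therefore $\mathfrak{F}(z)=\prod_{j}(1-z/\lambda_{j})$, completing the proof.
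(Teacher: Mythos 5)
Your architecture is sound, and two of its pillars genuinely differ from the paper's. For the trace-class claim the paper uses the factorization $\mathcal{J}=(I+kE)J_{0}(I+kE^{T})$ only to get the lower bound $J\geq a_{\min}(1-k)^{2}$; the value of $\Tr J^{-1}$ is then computed from the explicit formula (\ref{eq:wn}) for the functions of the second kind via the identity $\Tr J^{-1}=\sum_{n}w_{n}(0)P_{n}(0)$ quoted from \cite{Stovicek-JAT}. Your route --- $K=L^{-1}$ is Hilbert--Schmidt, $G=K^{*}K$ is trace class, $G\stackrel{\mbox{{\ }}_{\circ}}{J}=I$ on finite vectors, hence $J=G^{-1}$ by maximality of self-adjoint operators --- is more self-contained and yields the same trace. (Your essential self-adjointness argument is the paper's in different clothing: the non-$\ell^{2}$ solution $u_{n}=(-1)^{n}k^{-n}$ is exactly $P_{n}(0)$, and the paper invokes determinacy of the moment problem rather than the limit-point criterion.) For the product formula the paper again leans on \cite{Stovicek-JAT}: it proves the multi-sum identity $\mathfrak{F}(z)=1-z\sum_{n}w_{n}(0)P_{n}(z)$ by matching Taylor coefficients and inherits (\ref{eq:Fgt-prod}) from the cited theorem, whereas you use Hadamard factorization. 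Your treatment of the recurrences for $\Phi_{n}$ is the same splitting of the nested sums that the paper performs; note the paper's clean device for the $n=0$ boundary identity, namely adjoining a member $a_{-1}>0$ and letting $a_{-1}\to0^{+}$, cf.\ (\ref{eq:lim-a-1}).

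The Hadamard step, however, has a genuine gap. Hadamard's theorem gives $\mathfrak{F}(z)=e^{az}\prod_{\rho}(1-z/\rho)$ with the product running over zeros counted \emph{with multiplicity} (and the genus-$0$ form needs $\sum_{\rho}|\rho|^{-1}<\infty$ with multiplicity). You have shown that the zero set of $\mathfrak{F}$ is $\{\lambda_{j}\}$ and that each $\lambda_{j}$ is simple \emph{as an eigenvalue of $J$}, but not that it is a simple \emph{zero of the entire function} $\mathfrak{F}$. If $\lambda_{j}$ has multiplicity $m_{j}$, your comparison of the $z^{1}$ coefficients yields $a=\sum_{j}(m_{j}-1)/\lambda_{j}$, which vanishes if and only if all $m_{j}=1$; so the argument is circular unless simplicity of the zeros is proved independently. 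The paper supplies exactly this missing ingredient in Proposition \ref{thm:Phi-eigenvec}: a Christoffel--Darboux computation gives $\sum_{n}\Phi_{n}(\lambda)^{2}=-\mathfrak{F}'(\lambda)\mathfrak{W}(\lambda)$ at every zero $\lambda$ of $\mathfrak{F}$, and since $\pmb{\Phi}(\lambda)\neq0$ (for instance because $\mathfrak{W}_{n}(\lambda)\big/\sum_{j\geq n}k^{2j}a_{j}^{-1}\to1$ as $n\to\infty$, by the same tail estimates that give (\ref{eq:Wn-estim})), it forces $\mathfrak{F}'(\lambda)\neq0$. The same non-vanishing of $\pmb{\Phi}(\lambda)$ is also silently used in your step ``every zero of $\mathfrak{F}$ produces an eigenvector'' and in the assertion that the $\ell^{2}$ solution space is spanned by $\Phi(z)$. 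With these two points added, your proof closes.
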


\begin{theorem}\label{thm:P} Let $\{a_{n};\,n\geq0\}$ be a sequence
of positive numbers satisfying (\ref{eq:1-an-conv}), and $k\in(0,1)$.
Furthermore, let $\{P_{n}(x);\,n\geq0\}$ be the orthonormal polynomial
sequence defined in (\ref{eq:alpha-beta-n}), (\ref{eq:Pn-recurr}).
Then, for all $n\in\mathbb{Z}_{+}$,
\begin{eqnarray}
(-1)^{n}k^{n}P_{n}(x) & = & 1+\sum_{m=1}^{n}(-1)^{m}\label{eq:Pn}\\
 &  & \hskip-4em\times\bigg(\,\sum_{0\leq j_{1}<j_{2}<\ldots<j_{m}\leq n-1}\frac{(1-k^{2(j_{1}+1)})(1-k^{2(j_{2}-j_{1})})\cdots(1-k^{2(j_{m}-j_{m-1})})}{(1-k^{2})^{m}a_{j_{1}}a_{j_{2}}\cdots a_{j_{m}}}\bigg)x^{m}.\nonumber 
\end{eqnarray}
The Hamburger moment problem for $\{P_{n}(x)\}$ is determinate. The
unique orthogonality measure is supported on the zero set of the function
$\mathfrak{F}(z)$ defined in (\ref{eq:Fgt}), i.e. the measure support
coincides with the spectrum $\{\lambda_{j};\,j\geq0\}$ of the operator
$J$, as described in Theorem~\ref{thm:J}. It holds true that
\[
\forall s,t\in\mathbb{Z}_{+},\ \sum_{j=0}^{\infty}\mu_{j}P_{s}(\lambda_{j})P_{t}(\lambda_{j})=\delta_{s,t}
\]
where the masses $\mu_{j}$, $j\in\mathbb{Z}_{+}$, are given by
\begin{equation}
\mu_{j}=-\frac{\mathfrak{W}(\lambda_{j})}{\mathfrak{F}'(\lambda_{j})},\label{eq:mass-muj}
\end{equation}
with $\mathfrak{W}(z)$ being defined in (\ref{eq:Wgt}). \end{theorem}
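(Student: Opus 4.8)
I would treat the three assertions of Theorem~\ref{thm:P} in turn: the closed form \eqref{eq:Pn}, then determinacy together with the location of the measure, and finally the masses \eqref{eq:mass-muj} with the attendant orthonormality relation.

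\textbf{The formula for $P_n$.} I would set $R_n(x):=(-1)^nk^nP_n(x)$, so that $R_0\equiv1$, and substitute $\alpha_n=ka_n$, $\beta_n=a_n+k^2a_{n-1}$ (with the convention $a_{-1}=0$, whence $\beta_0=a_0$) into \eqref{eq:Pn-recurr}. Clearing the factor $(-1)^nk^n$ turns the three-term recurrence into
\[
a_n\bigl(R_{n+1}-R_n\bigr)=k^2a_{n-1}\bigl(R_n-R_{n-1}\bigr)-x\,R_n,\qquad n\ge0 ,
\]
and the right-hand side of \eqref{eq:Pn} is precisely $R_n$. Writing $R_n(x)=\sum_{m\ge0}c^{(n)}_m x^m$, the difference $c^{(n+1)}_m-c^{(n)}_m$ collects exactly the terms with largest index $j_m=n$, which pulls the single gap-weight $(1-k^{2(n-j_{m-1})})$ out of an $(m-1)$-fold sum. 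Matching the coefficient of $x^m$ in the displayed recurrence then reduces, for each admissible prefix $0\le j_1<\dots<j_{m-1}=\ell\le n-1$, to the elementary identity $1-k^{2(n-\ell)}=k^2\bigl(1-k^{2(n-1-\ell)}\bigr)+(1-k^2)$ (with the case $\ell=n-1$ immediate). Since $R_0$ and the recurrence determine the whole sequence, \eqref{eq:Pn} follows.

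\textbf{Determinacy, support, orthonormality.} By Theorem~\ref{thm:J} the operator $\stackrel{\circ}{J}$ is essentially self-adjoint, and by the classical correspondence recalled in the Introduction this is equivalent to determinacy of the Hamburger moment problem. The unique orthogonality measure is then the spectral measure $\mu$ of $J$ for the cyclic vector $\pmb e_0$, and the unitary map $U:\ell^2(\mathbb Z_+)\to L^2(\mu)$ with $U\pmb e_n=P_n$ intertwines $J$ with multiplication by $x$. Since $\Spec(J)=\{\lambda_j\}$ is a simple pure point spectrum equal to the zero set of $\mathfrak F$, the measure $\mu=\sum_j\mu_j\delta_{\lambda_j}$ is supported there, and transporting $\langle\pmb e_s,\pmb e_t\rangle=\delta_{s,t}$ through $U$ yields $\sum_j\mu_jP_s(\lambda_j)P_t(\lambda_j)=\delta_{s,t}$ at once.

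\textbf{The masses.} It remains to evaluate $\mu_j=\mu(\{\lambda_j\})$. Fixing $j$, the sequence $(P_n(\lambda_j))_n$ solves the recurrence together with the top equation, hence spans the one-dimensional eigenspace at $\lambda_j$; normalising gives $\mu_j=\bigl(\sum_nP_n(\lambda_j)^2\bigr)^{-1}$. The eigenvector $\Phi_n(z)=(-1)^nk^{-n}\mathfrak W_n(z)$ of Theorem~\ref{thm:J} is $\ell^2$ for every $z$ by \eqref{eq:Wn-estim} and solves the same recurrence for $n\ge1$, so at $z=\lambda_j$ it is proportional to $(P_n(\lambda_j))_n$; comparing the entries at $n=0$ (where $P_0=1$, $\Phi_0=\mathfrak W$) gives $\Phi_n(\lambda_j)=\mathfrak W(\lambda_j)P_n(\lambda_j)$ with $\mathfrak W(\lambda_j)\ne0$. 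Summing $\alpha_n\bigl(\Phi_{n+1}(z)\Phi_n(\zeta)-\Phi_n(z)\Phi_{n+1}(\zeta)\bigr)$ (a discrete Green/Wronskian identity) and using the boundary relation $\alpha_0\Phi_1(z)+(\beta_0-z)\Phi_0(z)=\mathfrak F(z)$ produces
\[
(z-\zeta)\sum_{n=0}^{\infty}\Phi_n(z)\Phi_n(\zeta)=\mathfrak F(\zeta)\Phi_0(z)-\mathfrak F(z)\Phi_0(\zeta).
\]
Setting $\zeta=\lambda_j$ and letting $z\to\lambda_j$ gives $\sum_{n}\Phi_n(\lambda_j)^2=-\mathfrak W(\lambda_j)\,\mathfrak F'(\lambda_j)$; dividing by $\mathfrak W(\lambda_j)^2$ and inverting yields \eqref{eq:mass-muj}.

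I expect the real work to sit in this last step. One must verify that $(\Phi_n(z))_n$ solves the three-term recurrence for all $z$ and that its boundary defect equals $\mathfrak F$; both rest on a two-term recurrence for the $\mathfrak W_n$ (of the same shape as the one found above for $R_n$) and on the closed relation $\mathfrak F(z)=(a_0-z)\mathfrak W(z)-a_0\mathfrak W_1(z)$, which I would extract from the series \eqref{eq:Fgt}--\eqref{eq:Wgt-n}. Establishing these identities, and justifying the interchange of limit and summation in the Green identity through the geometric bound \eqref{eq:Wn-estim}, is the main obstacle; the remaining spectral-theoretic steps are standard.
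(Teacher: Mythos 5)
Your proposal is correct and follows essentially the same route as the paper: direct verification of \eqref{eq:Pn} through the coefficient identity $1-k^{2(n-\ell)}=k^{2}\bigl(1-k^{2(n-1-\ell)}\bigr)+(1-k^{2})$, determinacy from the self-adjointness statement, and the masses via the square-summable solution $\Phi_{n}(z)=(-1)^{n}k^{-n}\mathfrak{W}_{n}(z)$, the Wronskian/boundary relation with $\mathfrak{F}$, and the Christoffel--Darboux limit giving $\sum_{n}\Phi_{n}(\lambda_j)^{2}=-\mathfrak{F}'(\lambda_j)\mathfrak{W}(\lambda_j)$. The two identities you defer as ``the main obstacle'' --- the two-term recurrence for $\mathfrak{W}_{n}$ and $\mathfrak{F}(z)=(a_{0}-z)\mathfrak{W}(z)-a_{0}\mathfrak{W}_{1}(z)$ --- are exactly the content of the paper's Lemma~\ref{thm:Phin} (equations \eqref{eq:Phi-recurr}, \eqref{eq:Phi-recurr0}) and do hold by the series manipulation you indicate; the paper obtains the boundary relation by the equivalent device of adjoining $a_{-1}$ and letting $a_{-1}\to0^{+}$.
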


\begin{remark} Let us mention an interpretation of the function $\mathfrak{W}(z)$
which is related to the notion of the associated Jacobi matrix. By
definition, the associated Jacobi matrix $\mathcal{J}^{(1)}$ is obtained
from $\mathcal{J}$ given in (\ref{eq:J}) by deleting the first row
and column. Under the same assumption as in Theorem \ref{thm:J},
there exists exactly one self-adjoint operator $J^{(1)}$ whose matrix
in the canonical basis equals $\mathcal{J}^{(1)}$, $J^{(1)}$ is
positive definite and its inverse belongs to the trace class, see
Proposition \ref{thm:associated} below. $\mathfrak{W}(z)$ is a characteristic
function of $J^{(1)}$ in the sense that, for all $z\in\mathbb{C}$,
\[
\mathfrak{W}(z)=\mathfrak{W}(0)\prod_{j=0}^{\infty}\bigg(1-\frac{z}{\lambda_{j}^{(1)}}\bigg)
\]
where $\Spec(J^{(1)})=\Spec_{p}(J^{(1)})=\{\lambda_{j}^{(1)};\,j\in\mathbb{Z}_{+}\}$.
The proof of this fact is omitted, however, since it is not substantial
for the rest of the paper. \end{remark}

The remainder of the paper is devoted to a proof of Theorems \ref{thm:J},
\ref{thm:P} and to an application. The paper is organized as follows.
In Section \ref{sec:Preliminaries} we provide a brief summary of
some basic notions and preparatory results. Sections \ref{sec:Proofs}
contains a proof of Theorems \ref{thm:J} and \ref{thm:P}. This task
is accomplished by proving a series of separate propositions and lemmas.
In Section \ref{sec:Auxiliary} we derive some summation formulas
which are needed in the following section, Section \ref{sec:qLaguerre},
in which a modification of $q$-Laguerre polynomials is proposed.
It turns out that the orthogonality measure for the modified $q$-Laguerre
polynomials is supported on the roots of a Jackson $q$-Bessel function
of the second kind.

\section{Preliminaries\label{sec:Preliminaries}}

It can be shown that if there exists exactly one self-adjoint operator
$J$ whose matrix in the canonical basis equals $\mathcal{J}$ then
a similar assertion is true for the associated Jacobi matrix $\mathcal{J}^{(1)}$
as well \cite{Stovicek-JAT}. The self-adjoint operator corresponding
to $\mathcal{J}^{(1)}$ will be denoted $J^{(1)}$.

\begin{proposition}\label{thm:associated} Assume that there exists
exactly one self-adjoint operator $J$ in $\ell^{2}(\mathbb{Z}_{+})$
whose matrix in the canonical basis equals a Jacobi matrix $\mathcal{J}$.
Assume further that $J$ is bounded below by a positive constant and
$J^{-1}$ belongs to the trace class. Then the same is true for the
self-adjoint operator $J^{(1)}$ corresponding to the associated Jacobi
matrix $\mathcal{J}^{(1)}$. \end{proposition}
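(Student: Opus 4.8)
The plan is to show that the associated Jacobi matrix inherits the three key properties—essential self-adjointness (i.e.\ uniqueness of the self-adjoint realization), positive definiteness with a positive lower bound, and trace-class inverse—from $J$. The natural tool is a rank-one-type relationship between $J$ and $J^{(1)}$. Concretely, if $P$ denotes the orthogonal projection onto the subspace spanned by $\{\pmb{e}_{n};\,n\geq1\}$, then the compression $P J P$ restricted to that subspace is unitarily equivalent (via the obvious shift identifying $\pmb{e}_{n}$ with $\pmb{e}_{n-1}$) to the operator represented by $\mathcal{J}^{(1)}$. Thus $J^{(1)}$ is a compression of $J$, and the whole proof rests on controlling how spectral and operator-ideal properties behave under this compression together with the cited uniqueness transfer from \cite{Stovicek-JAT}.

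First I would record that the uniqueness of the self-adjoint realization for $\mathcal{J}^{(1)}$ is already granted: the cited result of \cite{Stovicek-JAT} states that if $\mathcal{J}$ admits exactly one self-adjoint operator then so does $\mathcal{J}^{(1)}$, and this is precisely the hypothesis together with its stated consequence at the start of the section. So the essential-self-adjointness half of the claim requires no further work. The remaining task is purely to establish positive definiteness with a positive lower bound and the trace-class property for the inverse of $J^{(1)}$.

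Next I would exploit the variational/compression picture. Since $J^{(1)}$ is the compression of $J$ to the coisometrically embedded subspace $\operatorname{ran}P$, the min-max principle gives an interlacing-type bound: for the lower edge of the spectrum one has $\inf\Spec(J^{(1)}) = \inf_{\psi}\langle J\psi,\psi\rangle$ over unit vectors $\psi\in\operatorname{ran}P$, which is bounded below by $\inf\Spec(J)>0$. Hence $J^{(1)}$ is bounded below by the same positive constant, so it is positive definite and boundedly invertible with $\|(J^{(1)})^{-1}\|\leq\|J^{-1}\|$. For the trace-class statement I would compare the inverses directly. Writing the Schur complement / resolvent identity associated with deleting the first row and column, one obtains a formula expressing $(J^{(1)})^{-1}$ in terms of the compression of $J^{-1}$ to $\operatorname{ran}P$ perturbed by a rank-one correction (coming from the off-diagonal coupling $\alpha_{0}$). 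Since $J^{-1}$ is trace class, its compression $P J^{-1}P$ is trace class, and adding a rank-one operator preserves the trace class; therefore $(J^{(1)})^{-1}$ is trace class as well.

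The main obstacle is making the last comparison rigorous, i.e.\ producing the exact algebraic relation between $(J^{(1)})^{-1}$ and $P J^{-1}P$. A clean way is to use the cofactor/Cramer description of inverses of Jacobi matrices: entries of $J^{-1}$ and of $(J^{(1)})^{-1}$ are ratios of the solutions of the second-order recurrence, and deleting the first row and column shifts the indices in a controlled manner, so that $(J^{(1)})^{-1}_{m,n}$ differs from $(J^{-1})_{m+1,n+1}$ only through boundary terms. Once this identity is in hand, the estimate $\sum_{n}(J^{(1)})^{-1}_{n,n}<\infty$ follows immediately from $\Tr(J^{-1})<\infty$, and positivity of the diagonal guarantees that summability of the diagonal is equivalent to the trace-class property for the positive operator $(J^{(1)})^{-1}$. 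This completes the transfer of all three properties from $J$ to $J^{(1)}$.
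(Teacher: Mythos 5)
Your argument is correct and reaches the same conclusion, but it is organized around a different algebraic identity than the paper's. The paper works with the block-diagonal operator $\beta_{0}\oplus J^{(1)}$, which differs from $J$ by the rank-two off-diagonal perturbation $\alpha_{0}(\pmb{e}_{0}\pmb{e}_{1}^{T}+\pmb{e}_{1}\pmb{e}_{0}^{T})$; it then posits $(\beta_{0}\oplus J^{(1)})^{-1}=J^{-1}+J^{-1}\big(\sum x_{s,t}\pmb{e}_{s}\pmb{e}_{t}^{T}\big)J^{-1}$, solves a $2\times2$ linear system for the $x_{s,t}$, and reads off an explicit formula for $\Tr\big((J^{(1)})^{-1}\big)$ in terms of $\Tr(J^{-1})$ and matrix elements of $J^{-1}$, $J^{-2}$. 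You instead invoke the Schur complement: $PJ^{-1}P$ restricted to $\operatorname{ran}P$ equals $\big(J^{(1)}-\tfrac{\alpha_{0}^{2}}{\beta_{0}}\,\pmb{e}_{0}\pmb{e}_{0}^{T}\big)^{-1}$, which is trace class as a compression of $J^{-1}$, and then you undo a rank-one perturbation via the resolvent identity. Both are finite-rank perturbation arguments and require the same level of care about domains (the paper itself only "outlines" the computation); yours is somewhat more conceptual and avoids solving the linear system, while the paper's yields an explicit trace formula. You also make explicit the lower-bound transfer $J^{(1)}\geq\inf\Spec(J)>0$ via the quadratic form on the core $\span\{\pmb{e}_{n};\,n\geq1\}$, which the paper leaves tacit; that is a genuine improvement in completeness. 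One small imprecision: your aside that $(J^{(1)})^{-1}_{m,n}$ differs from $(J^{-1})_{m+1,n+1}$ "only through boundary terms" is not right as stated --- the difference is a genuinely full rank-one operator, nonzero at all index pairs --- but since rank one suffices for the trace-class conclusion, this does not affect the validity of the argument; just phrase it as a rank-one correction rather than a boundary correction.
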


\begin{proof} The proof is computational and is outlined below. From
the fact that the matrix of $J$ in the canonical basis is tridiagonal
it follows that
\begin{eqnarray*}
\beta_{0}\langle e_{0},J^{-1}\pmb{e}_{0}\rangle+\alpha_{0}\langle\pmb{e}_{1},J^{-1}\pmb{e}_{0}\rangle & = & 1,\\
\beta_{0}^{\,2}\,\langle\pmb{e}_{0},J^{-1}\pmb{e}_{0}\rangle-\alpha_{0}^{\,2}\,\langle\pmb{e}_{1},J^{-1}\pmb{e}_{1}\rangle & = & \beta_{0}.
\end{eqnarray*}
Using these relations it is straightforward to show that for
\begin{eqnarray*}
M & := & \left(\begin{array}{cc}
1 & 0\\
0 & 1
\end{array}\right)-\alpha_{0}\left(\begin{array}{cc}
\langle\pmb{e}_{1},J^{-1}\pmb{e}_{0}\rangle & \langle\pmb{e}_{1},J^{-1}\pmb{e}_{1}\rangle\\
\langle\pmb{e}_{0},J^{-1}\pmb{e}_{0}\rangle & \langle\pmb{e}_{1},J^{-1}\pmb{e}_{0}\rangle
\end{array}\right)\\
\noalign{\smallskip} & = & \left(\begin{array}{cc}
\beta_{0}\,\langle\pmb{e}_{0},J^{-1}\pmb{e}_{0}\rangle & -\alpha_{0}\,\langle\pmb{e}_{1},J^{-1}\pmb{e}_{1}\rangle\\
-\alpha_{0}\,\langle\pmb{e}_{0},J^{-1}\pmb{e}_{0}\rangle & \beta_{0}\,\langle\pmb{e}_{0},J^{-1}\pmb{e}_{0}\rangle
\end{array}\right)
\end{eqnarray*}
we have
\[
\det(M)=\beta_{0}\,\langle\pmb{e}_{0},J^{-1}\pmb{e}_{0}\rangle.
\]

Denote as $\beta_{0}\oplus J^{(1)}$ the block diagonal matrix with
blocks $(\beta_{0})$ and $J^{(1)}$. Equation
\[
(\beta_{0}\oplus J^{(1)})\!\bigg(J^{-1}+J^{-1}\bigg(\,\sum_{s=0}^{1}\sum_{t=0}^{1}x_{s,t}\,\pmb{e}_{s}\pmb{e}_{t}{}^{T}\bigg)J^{-1}\bigg)\!=I
\]
results in four equations for four unknowns $x_{s,t}$, $0\leq s,t\leq1$,
written in the matrix form
\[
\alpha_{0}\!\left(\begin{array}{cc}
0 & 1\\
1 & 0
\end{array}\right)=\left(\left(\begin{array}{cc}
1 & 0\\
0 & 1
\end{array}\right)-\alpha_{0}\!\left(\begin{array}{cc}
\left\langle \pmb{e}_{1},J^{-1}\pmb{e}_{0}\right\rangle  & \left\langle \pmb{e}_{1},J^{-1}\pmb{e}_{1}\right\rangle \\
\left\langle \pmb{e}_{0},J^{-1}\pmb{e}_{0}\right\rangle  & \left\langle \pmb{e}_{1},J^{-1}\pmb{e}_{0}\right\rangle 
\end{array}\right)\right)\!\left(\begin{array}{cc}
x_{0,0} & x_{0,1}\\
x_{1,0} & x_{1,1}
\end{array}\right)\!.
\]
A unique solution is readily obtained. Since the inverse is unique
we have
\[
(\beta_{0}\oplus J^{(1)})^{-1}=J^{-1}+J^{-1}\bigg(\,\sum_{s=0}^{1}\sum_{t=0}^{1}x_{s,t}\,\pmb{e}_{s}\pmb{e}_{t}{}^{T}\bigg)J^{-1}
\]
where
\[
x_{0,0}=\frac{\alpha_{0}^{\,2}\langle\pmb{e}_{1},J^{-1}\pmb{e}_{1}\rangle}{\beta_{0}\langle\pmb{e}_{0},J^{-1}\pmb{e}_{0}\rangle}\,,\ x_{0,1}=x_{1,0}=\alpha_{0},\ x_{1,1}=\frac{\alpha_{0}^{\,2}}{\beta_{0}}\,.
\]
Whence
\[
\Tr\big((J^{(1)})^{-1}\big)=\Tr(J^{-1})+\sum_{s=0}^{1}\sum_{t=0}^{1}x_{s,t}\langle\pmb{e}_{s},J^{-2}\,\pmb{e}_{t}\rangle-\frac{1}{\beta_{0}}.
\]
This shows the proposition. \end{proof}

We still assume that $J$ is a unique self-adjoint operator in $\ell^{2}(\mathbb{Z}_{+})$
whose matrix in the canonical basis equals $\mathcal{J}$, as given
in (\ref{eq:J}). Let $\mu$ be the unique Borel probability measure
on $\mathbb{R}$ which is an orthogonality measure for the orthonormal
polynomial sequence $\{P_{n}(x)\}$ defined in (\ref{eq:Pn-recurr}).
Furthermore, let $w_{n}(z)$, $n\geq0$, be the so called functions
of the second kind \cite{VanAssche},
\begin{equation}
\forall z\in\varrho(J),\,\forall n\geq0,\ w_{n}(z):=\int\frac{P_{n}(\lambda)}{\lambda-z}\,\text{d}\mu(\lambda).\label{eq:fce-2nd-def}
\end{equation}
Here $\varrho(J)$ denotes the resolvent set of $J$. In particular,
\begin{equation}
w(z)\equiv w_{0}(z)=\int\frac{\text{d}\mu(\lambda)}{\lambda-z}\label{eq:Weyl-def}
\end{equation}
is the Weyl function of $J$. Thus the Weyl function is the Stieltjes
transformation of the measure $\mu$.

One can check that \cite{Stovicek-JAT}
\[
\forall z\in\varrho(J),\,\forall n\geq0,\ w_{n}(z)P_{n}(z)=\langle\pmb{e}_{n},(J-z)^{-1}\pmb{e}_{n}\rangle.
\]
Particularly, if $J$ is bounded below by a positive constant $\gamma$
and so $0\in\varrho(J)$ then
\begin{equation}
\sum_{n=0}^{\infty}\langle\pmb{e}_{n},J^{-1}\pmb{e}_{n}\rangle=\sum_{n=0}^{\infty}w_{n}(0)P_{n}(0).\label{eq:tr-Jinv}
\end{equation}
Hence $J^{-1}$ belongs to the trace class if and only if this sum
converges and, if so, the sum equals $\Tr J^{-1}$. Moreover, the
support of the unique orthogonality measure is contained in $[\,\gamma,\infty)$.
Necessarily, all roots of the polynomials $P_{n}(z)$, $n\in\mathbb{N}$,
are contained in this interval, too \cite{Berg}. We have
\begin{equation}
\forall z\in\mathbb{C}\setminus[\,\gamma,\infty),\,\forall n\geq0,\ w_{n}(z)=-\bigg(\sum_{j=n}^{\infty}\frac{1}{\alpha_{j}P_{j}(z)P_{j+1}(z)}\bigg)P_{n}(z).\label{eq:wn}
\end{equation}

In addition, as also proven in \cite{Stovicek-JAT}, if $J$ is bounded
below by a positive constant and $J^{-1}$ is a trace class operator
then
\begin{equation}
\mathfrak{F}_{\text{char}}(z):=1-z\sum_{n=0}^{\infty}w_{n}(0)P_{n}(z)\label{eq:Fchar}
\end{equation}
is a characteristic function of $J$ in the sense that
\[
\mathfrak{F}_{\text{char}}(z)=\prod_{j=0}^{\infty}\bigg(1-\frac{z}{\lambda_{j}}\bigg)\ \text{where}\ \Spec(J)=\Spec_{p}(J)=\{\lambda_{j};\,j\geq0\}.
\]

\section{Proofs of Theorems \ref{thm:J} and \ref{thm:P}\label{sec:Proofs}}

All the claims contained in Theorems \ref{thm:J} and \ref{thm:P}
will be proven step by step as separate lemmas or propositions while
assuming everywhere that $\{a_{n}\}$ is a sequence of positive numbers
satisfying (\ref{eq:1-an-conv}) and the sequences $\{\alpha_{n}\}$
and $\{\beta_{n}\}$ are given by equations (\ref{eq:alpha-beta-n}).
Furthermore, $\{P_{n}(x)\}$ denotes the orthonormal polynomial sequence
defined in (\ref{eq:Pn-recurr}).

\begin{proposition}\label{thm:J-trclass} The symmetric operator
$\stackrel{\mbox{{\ }}_{\circ}}{J}$ in $\ell^{2}(\mathbb{Z}_{+})$,
with the matrix $\mathcal{J}$ given in (\ref{eq:J}), is essentially
self-adjoint, and the Hamburger moment problem for $\{P_{n}(x)\}$
is determinate. The closure of $\stackrel{\mbox{{\ }}_{\circ}}{J}$,
i.e. the self-adjoint operator $J$, is positive definite and $J^{-1}$
belongs to the trace class. \end{proposition}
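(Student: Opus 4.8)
The plan is to read all three assertions off an explicit factorization of the Jacobi matrix. With the natural convention $a_{-1}=0$ (so that $\beta_{0}=a_{0}$), a direct check from (\ref{eq:alpha-beta-n}) shows that $\mathcal{J}=LL^{T}$, where $L$ is the lower bidiagonal matrix with diagonal entries $L_{n,n}=\sqrt{a_{n}}$ and subdiagonal entries $L_{n+1,n}=k\sqrt{a_{n}}$: indeed $(LL^{T})_{n,n}=k^{2}a_{n-1}+a_{n}=\beta_{n}$ and $(LL^{T})_{n,n+1}=\sqrt{a_{n}}\cdot k\sqrt{a_{n}}=ka_{n}=\alpha_{n}$. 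Since both $L$ and $L^{T}$ map finitely supported sequences into finitely supported sequences, on the linear hull of $\{\pmb{e}_{n}\}$ the symmetric operator acts by $\stackrel{\mbox{{\ }}_{\circ}}{J}x=L(L^{T}x)$.

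First I would invert $L$. Forward substitution in $Lz=y$ yields the lower triangular matrix $(L^{-1})_{n,m}=(-1)^{n-m}k^{\,n-m}/\sqrt{a_{n}}$ for $0\le m\le n$ (and $0$ otherwise), whose telescoping product structure makes the Hilbert--Schmidt norm computable: by (\ref{eq:1-an-conv}),
\[
\|L^{-1}\|_{\mathrm{HS}}^{2}=\sum_{n=0}^{\infty}\frac{1}{a_{n}}\sum_{m=0}^{n}k^{2(n-m)}=\sum_{n=0}^{\infty}\frac{1-k^{2n+2}}{(1-k^{2})\,a_{n}}<\infty .
\]
Thus $B:=L^{-1}$ is a bounded, Hilbert--Schmidt, injective operator on $\ell^{2}(\mathbb{Z}_{+})$ (its triangular matrix has nonvanishing diagonal). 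Consequently $K:=B^{*}B=(L^{-1})^{T}L^{-1}$ is positive, injective and, being a product of two Hilbert--Schmidt operators, trace class; its inverse $\widetilde{J}:=K^{-1}$, defined on the dense range of $K$, is self-adjoint and bounded below by $\|K\|^{-1}>0$. Because the matrix identities $L^{-1}L=LL^{-1}=I$ hold entrywise as finite sums, applying them to finitely supported vectors gives $K\,\stackrel{\mbox{{\ }}_{\circ}}{J}\,x=x$, hence $\stackrel{\mbox{{\ }}_{\circ}}{J}\subseteq\widetilde{J}$ and $\widetilde{J}^{-1}=K$. Once I know $J=\widetilde{J}$, this already delivers positive definiteness of $J$, the trace class property of $J^{-1}=K$, and the value $\Tr(J^{-1})=\sum_{n}(1-k^{2n+2})/((1-k^{2})a_{n})$ announced in Theorem \ref{thm:J}.

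The one genuinely non-formal step is to upgrade the inclusion $\stackrel{\mbox{{\ }}_{\circ}}{J}\subseteq\widetilde{J}$ to essential self-adjointness, i.e.\ to $\overline{\stackrel{\mbox{{\ }}_{\circ}}{J}}=\widetilde{J}$. Since $\widetilde{J}$ is self-adjoint and $\stackrel{\mbox{{\ }}_{\circ}}{J}\subseteq\widetilde{J}\subseteq(\stackrel{\mbox{{\ }}_{\circ}}{J})^{*}$, it is enough to prove the reverse inclusion $(\stackrel{\mbox{{\ }}_{\circ}}{J})^{*}\subseteq\widetilde{J}$. For $f\in\Dom((\stackrel{\mbox{{\ }}_{\circ}}{J})^{*})$ with $g:=(\stackrel{\mbox{{\ }}_{\circ}}{J})^{*}f$, I would set $h:=f-Kg$ and check, using $K\stackrel{\mbox{{\ }}_{\circ}}{J}x=x$ and the self-adjointness of $K$, that $\langle\stackrel{\mbox{{\ }}_{\circ}}{J}x,h\rangle=0$ for every finitely supported $x$; equivalently $h\in\ell^{2}$ is a formal null vector, $\mathcal{J}h=0$. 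The three-term recurrence then forces $h_{n}=h_{0}P_{n}(0)$, and a short induction based on (\ref{eq:Pn-recurr}) and (\ref{eq:alpha-beta-n}) gives $P_{n}(0)=(-1)^{n}k^{-n}$. As $k\in(0,1)$, the sequence $(P_{n}(0))_{n}$ is not square summable, so $h=0$, $f=Kg\in\Dom(\widetilde{J})$ and $\widetilde{J}f=g$. This proves $(\stackrel{\mbox{{\ }}_{\circ}}{J})^{*}=\widetilde{J}$, whence $\stackrel{\mbox{{\ }}_{\circ}}{J}$ is essentially self-adjoint with closure $J=\widetilde{J}$; by the equivalence recalled in the introduction, determinacy of the Hamburger moment problem for $\{P_{n}(x)\}$ follows at once. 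I expect this last paragraph to be the crux: the factorization produces a self-adjoint extension effortlessly, but excluding a one-dimensional deficiency space hinges on exhibiting the non-$\ell^{2}$ solution $P_{n}(0)=(-1)^{n}k^{-n}$, and it is here that the hypothesis $k<1$ is used decisively.
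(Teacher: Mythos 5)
Your proof is correct, and while it shares its two key ingredients with the paper's argument, it assembles them differently in a way worth noting. The factorization is the same one the paper uses: your $L$ is exactly $(I+kE)J_0^{1/2}$ in the paper's decomposition $\mathcal{J}=(I+kE)J_0(I+kE^{T})$, and the closed form $P_n(0)=(-1)^n k^{-n}$ is the paper's equation (\ref{eq:Pn0}). The differences are in how the conclusions are extracted. For essential self-adjointness the paper goes through the classical criterion: $\sum_n P_n(0)^2=\infty$ forces determinacy of the moment problem, which is equivalent to essential self-adjointness (both steps by citation of Akhiezer); you instead prove $(\stackrel{\mbox{{\ }}_{\circ}}{J})^{*}\subseteq\widetilde{J}$ by hand, showing that any $f$ in the adjoint's domain differs from $K(\stackrel{\mbox{{\ }}_{\circ}}{J})^{*}f$ by a formal null vector proportional to $(P_n(0))$, which is not square summable -- essentially re-deriving the criterion in this special case and then reading determinacy off afterwards. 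For the trace-class property the routes genuinely diverge: the paper invokes $\Tr J^{-1}=\sum_n w_n(0)P_n(0)$ via the functions of the second kind (formulas (\ref{eq:wn}), (\ref{eq:tr-Jinv}) quoted from the literature) and computes the sum, whereas you invert the bidiagonal factor explicitly, observe $(LL^{T})^{-1}=(L^{-1})^{T}L^{-1}$ is a product of two Hilbert--Schmidt operators, and get $\Tr J^{-1}=\|L^{-1}\|_{\mathrm{HS}}^{2}$, arriving at the same value $\sum_n(1-k^{2n+2})/((1-k^2)a_n)$. Your version is more self-contained and elementary; the price is that it does not produce the intermediate formula (\ref{eq:wn0}) for $w_n(0)$, which the paper reuses in the proof of Proposition \ref{thm:Fchar}, so in the context of the whole paper the second-kind-function computation is not wasted effort.
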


\begin{proof} From (\ref{eq:alpha-beta-n}), (\ref{eq:Pn-recurr})
we infer that the three-term recurrence in case $x=0$ can be reduced
to a two-term recurrence, namely we get
\[
ka_{n}\big(P_{n+1}(0)+k^{-1}P_{n}(0)\big)+k^{2}a_{n-1}\big(P_{n}(0)+k^{-1}P_{n-1}(0)\big)=0\ \,\text{for}\ n\geq1.
\]
Taking into account the initial data we have
\begin{equation}
P_{n}(0)=(-1)^{n}k^{-n}\ \,\text{for}\ n\geq0.\label{eq:Pn0}
\end{equation}
Thus the sum $\sum_{n}P_{n}(0)^{2}$ is divergent and the Hamburger
moment problem is determinate \cite{Akhiezer}. This in turn means
that the symmetric operator $\stackrel{\mbox{{\ }}_{\circ}}{J}$ is
essentially self-adjoint \cite{Akhiezer}.

The matrix $\mathcal{J}$ can be decomposed,
\[
\mathcal{J}=(I+kE)J_{0}(I+kE^{T}),
\]
where $I$ is the unite matrix, $J_{0}:=\diag(a_{0},a_{1},a_{2},\ldots)$
and $E$ is a strictly lower triangular matrix whose only nonzero
elements are $E_{n+1,n}=1$, $n\geq0$. Since all factors are band
matrices the product is well defined. It follows that
\[
\forall\pmb{f}\in\span\{\pmb{e}_{n}\},\ \langle\pmb{f},J\pmb{f}\rangle\geq a_{\text{min}}\,\|(I+kE^{T})\pmb{f}\|^{2}\geq a_{\text{min}}\,(1-k)^{2}\,\|\pmb{f}\|^{2}
\]
where
\[
a_{\text{min}}:=\min\{a_{n};\,n\geq0\}.
\]
Since $\span\{\pmb{e}_{n}\}$ is a core for $J$ we have
\[
J\geq\gamma\ \,\text{where}\ \gamma:=a_{\text{min}}\,(1-k)^{2}>0.
\]

In view of (\ref{eq:Pn0}) and recalling (\ref{eq:wn}), (\ref{eq:tr-Jinv})
we get
\begin{equation}
w_{n}(0)=(-1)^{n}\sum_{j=n}^{\infty}\frac{k^{2j-n}}{a_{j}}=(-1)^{n}k^{n}\sum_{j=0}^{\infty}\frac{k^{2j}}{a_{n+j}}\label{eq:wn0}
\end{equation}
and
\begin{eqnarray*}
\Tr J^{-1}=\sum_{n=0}^{\infty}w_{n}(0)P_{n}(0) & = & -\sum_{n=0}^{\infty}P_{n}(0)^{2}\sum_{j=n}^{\infty}\frac{1}{\alpha_{j}P_{j}(0)P_{j+1}(0)}\,=\,\sum_{n=0}^{\infty}k^{-2n}\sum_{j=n}^{\infty}\frac{k^{2j}}{a_{j}}\\
 & = & \sum_{j=0}^{\infty}\frac{1}{a_{j}}\sum_{n=0}^{j}k^{2(j-n)}\,=\,\sum_{j=0}^{\infty}\frac{1-k^{2j+2}}{(1-k^{2})a_{j}}\,<\infty.
\end{eqnarray*}
Hence $J^{-1}$ belongs to the trace class \end{proof}

\begin{remark}\label{rem:tr-Jinv} We have derived that
\[
\Tr J^{-1}=\sum_{j=0}^{\infty}\frac{1-k^{2j+2}}{(1-k^{2})a_{j}}\,.
\]
\end{remark}

\begin{proposition}\label{thm:Pn} The polynomial sequence $\{P_{n}(x)\}$
defined in (\ref{eq:Pn-recurr}) satisfies (\ref{eq:Pn}).

\end{proposition}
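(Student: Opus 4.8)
The plan is to pass to the renormalized polynomials $Q_n(x) := (-1)^n k^n P_n(x)$ and to show that the right-hand side of (\ref{eq:Pn}), which I denote $R_n(x)$, satisfies the very same recurrence and initial data as $Q_n(x)$. Since a two-term recurrence with prescribed first entry has a unique solution, this forces $R_n \equiv Q_n$, which is exactly the assertion of the proposition.

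First I would rewrite the three-term recurrence (\ref{eq:Pn-recurr}) in terms of $Q_n$. Substituting $P_n = (-1)^n k^{-n} Q_n$, using $\alpha_n = k a_n$ and $\beta_n = a_n + k^2 a_{n-1}$ (with the convention $a_{-1} = 0$, so that $\beta_0 = a_0$, consistent with the factorization used in the proof of Proposition~\ref{thm:J-trclass}), and cancelling the common factor $(-1)^n k^{-n}$, the recurrence collapses to
\[
a_n Q_{n+1}(x) = (a_n + k^2 a_{n-1} - x)\, Q_n(x) - k^2 a_{n-1}\, Q_{n-1}(x), \qquad n \ge 0,
\]
together with $Q_0(x) = 1$. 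For $n = 0$ the coefficient $k^2 a_{-1}$ vanishes and this reduces to the first relation in (\ref{eq:Pn-recurr}); hence the single displayed recurrence, supplemented by $Q_0 = 1$, determines the whole sequence uniquely. It is convenient to record it in the telescoped form $a_n(Q_{n+1} - Q_n) = -x\,Q_n + k^2 a_{n-1}(Q_n - Q_{n-1})$.

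Next, writing $R_n(x) = \sum_{m=0}^n (-1)^m c_m^{(n)} x^m$ with $c_0^{(n)} = 1$ and, for $m \ge 1$,
\[
c_m^{(n)} = \sum_{0 \le j_1 < \cdots < j_m \le n-1} \frac{\prod_{i=1}^m \big(1 - k^{2(j_i - j_{i-1})}\big)}{(1-k^2)^m\, a_{j_1} \cdots a_{j_m}}, \qquad j_0 := -1,
\]
I would reduce the telescoped recurrence to the coefficientwise identity
\[
a_n\big(c_m^{(n+1)} - c_m^{(n)}\big) = c_{m-1}^{(n)} + k^2 a_{n-1}\big(c_m^{(n)} - c_m^{(n-1)}\big).
\]
The key observation is that $c_m^{(n)} - c_m^{(n-1)}$ is precisely the partial sum restricted to tuples whose top index satisfies $j_m = n-1$, while $c_m^{(n+1)} - c_m^{(n)}$ is the partial sum restricted to $j_m = n$. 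Factoring out the last fraction together with $a_n$ (resp. $a_{n-1}$) and the common weight $w(\mathbf{j}) = \prod_{i=1}^{m-1}\frac{1-k^{2(j_i-j_{i-1})}}{(1-k^2)a_{j_i}}$ of the truncated tuple $\mathbf{j} = (j_1,\ldots,j_{m-1})$, both sides become sums over tuples $-1 = j_0 < j_1 < \cdots < j_{m-1}$. After cancelling the contribution of the tuples with $j_{m-1} = n-1$ (where the pinned factor equals $(1-k^2)/(1-k^2) = 1$), all that remains is the single termwise identity
\[
\frac{1 - k^{2(n-j)}}{1-k^2} = 1 + k^2\,\frac{1 - k^{2(n-1-j)}}{1-k^2}, \qquad j = j_{m-1} \le n-2,
\]
which is immediate upon clearing denominators.

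The bulk of the work — and the only place demanding care — is the index bookkeeping in this last step: correctly identifying the restricted sums, separating off the $j_{m-1} = n-1$ terms before the telescoping identity applies, and checking the boundary cases. These are $m = 0$ (trivial, since $c_{-1}^{(n)} = 0$ and $c_0^{(n)} \equiv 1$), $m = 1$ (where $\mathbf{j}$ is empty, $w = 1$, and the pinned index is $j_0 = -1$), and the leading coefficient $m = n+1$ (for which $c_{n+1}^{(n)} = c_{n+1}^{(n-1)} = 0$, since one cannot choose $n+1$ indices from $\{0,\ldots,n-1\}$, reducing the recurrence to $a_n c_{n+1}^{(n+1)} = c_n^{(n)}$). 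The base cases $R_0 = 1$ and $R_1(x) = 1 - x/a_0$ are verified directly against $Q_0$ and $Q_1$, completing the uniqueness argument.
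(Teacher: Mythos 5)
Your proposal is correct and follows essentially the same route as the paper: both verify that the right-hand side of (\ref{eq:Pn}) satisfies the recurrence by isolating the terms with maximal index ($j_m=n$, resp.\ $j_m=n-1$) in the differences $\pi_{n+1}-\pi_n$ and $\pi_n-\pi_{n-1}$, extending the range of the pinned sum using the vanishing factor, and reducing everything to the elementary identity $\frac{1-k^{2(n-j)}}{1-k^2}=1+k^2\,\frac{1-k^{2(n-1-j)}}{1-k^2}$. Your coefficientwise bookkeeping with $c_m^{(n)}$ and the convention $j_0=-1$ is just a more explicit organization of the same computation.
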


\begin{proof} Denote by $\pi_{n}(x)$ the RHS of (\ref{eq:Pn}).
We have to verify that the sequence $\{(-1)^{n}k^{-n}\pi_{n}(x)\}$
satisfies (\ref{eq:Pn-recurr}). The initial condition is actually
fulfilled since $\pi_{0}(x)=1$. Furthermore, $\pi_{1}(x)=1-x/a_{0}$,
and thus the beginning of the recurrence for $n=0$ is immediately
seen to be satisfied, too. So we can focus on the recurrence for $n\geq1$.

In view of (\ref{eq:alpha-beta-n}), the recurrence we wish to prove
takes the form
\begin{equation}
a_{n}\big(\pi_{n+1}(x)-\pi_{n}(x)\big)-k^{2}a_{n-1}\big(\pi_{n}(x)-\pi_{n-1}(x)\big)=-x\pi_{n}(x).\label{eq:pin-reccur}
\end{equation}
Note that in the outer sum on the RHS of (\ref{eq:Pn}) we can write
$\sum_{m=1}^{\infty}$ instead of $\sum_{m=1}^{n}$ owing to the constraint
on the indices $j_{i}$ in the inner sum. Considering the formula
for $\pi_{n+1}(x)$ and separating the cases $j_{m}<n$ and $j_{m}=n$
we obtain
\begin{eqnarray*}
\pi_{n+1}(x) & = & \pi_{n}(x)+\frac{(1-k^{2(n+1)})x}{(1-k^{2})a_{n}}+\sum_{m=2}^{\infty}(-1)^{m}x^{m}\\
 &  & \hskip-6em\times\sum_{0\leq j_{1}<j_{2}<\ldots<j_{m-1}<n}\frac{(1-k^{2(j_{1}+1)})(1-k^{2(j_{2}-j_{1})})\cdots(1-k^{2(j_{m-1}-j_{m-2})})(1-k^{2(n-j_{m-1})})}{(1-k^{2})^{m}a_{j_{1}}a_{j_{2}}\cdots a_{j_{m-1}}a_{n}}.
\end{eqnarray*}
This can be rewritten as
\begin{eqnarray*}
a_{n}\big(\pi_{n+1}(x)-\pi_{n}(x)\big) & = & x\,\frac{1-k^{2(n+1)}}{1-k^{2}}-x\sum_{m=1}^{\infty}(-1)^{m}\\
 &  & \hskip-6em\times\bigg(\,\sum_{0\leq j_{1}<j_{2}<\ldots<j_{m}\leq n-1}\frac{(1-k^{2(j_{1}+1)})(1-k^{2(j_{2}-j_{1})})\cdots(1-k^{2(j_{m}-j_{m-1})})}{(1-k^{2})^{m}a_{j_{1}}a_{j_{2}}\cdots a_{j_{m}}}\\
 &  & \qquad\qquad\times\,\frac{1-k^{2(n-j_{m})}}{1-k^{2}}\bigg)x^{m}.
\end{eqnarray*}
A similar formula holds for $a_{n-1}\big(\pi_{n}(x)-\pi_{n-1}(x)\big)$
where the constraint $\ldots<j_{m}\leq n-2$ can be replaced by $\ldots<j_{m}\leq n-1$
owing to the factor $(1-k^{2(n-1-j_{m})})$. Now it is straightforward
to evaluate the LHS of (\ref{eq:pin-reccur}) and consequently to
find out that equation (\ref{eq:pin-reccur}) is actually valid. \end{proof}

\begin{proposition}\label{thm:Fchar} The function $\mathfrak{F}(z)$
defined in (\ref{eq:Fgt}) is a characteristic function of $J$ in
the sense that equations (\ref{eq:spec-J}), (\ref{eq:Fgt-prod})
hold. \end{proposition}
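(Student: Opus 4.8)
The plan is to identify $\mathfrak{F}$ with the characteristic function $\mathfrak{F}_{\text{char}}$ furnished by the general theory recalled in (\ref{eq:Fchar}). Proposition \ref{thm:J-trclass} guarantees that $J$ is bounded below by $\gamma>0$ and that $J^{-1}$ is trace class; in particular $J^{-1}$ is a positive, compact, self-adjoint operator whose spectrum is a sequence of positive eigenvalues accumulating only at $0$. Because $\mathcal{J}$ is a non-decomposable Jacobi matrix, an $\ell^2$ eigenvector is determined up to a scalar by its zeroth entry (a vanishing zeroth entry forces, through the recurrence with $\alpha_n\neq0$, the whole vector to vanish), so every eigenvalue is simple. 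Passing to reciprocals, $\Spec(J)=\Spec_p(J)=\{\lambda_j\}$ with $\lambda_j>0$ simple and $\lambda_j\to\infty$, which is exactly (\ref{eq:spec-J}); and the cited result applies verbatim to give $\mathfrak{F}_{\text{char}}(z)=\prod_{j}(1-z/\lambda_j)$. Consequently it suffices to prove the single identity $\mathfrak{F}(z)=\mathfrak{F}_{\text{char}}(z)$ for all $z\in\mathbb{C}$; then (\ref{eq:Fgt-prod}) is immediate.

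To compare the two entire functions I would first put the summands of $\mathfrak{F}_{\text{char}}$ into closed form. Abbreviating $\pi_n(z):=(-1)^n k^n P_n(z)$ for the polynomial of (\ref{eq:Pn}) and reading (\ref{eq:wn0}) as $w_n(0)=(-1)^n k^n c_n$ with $c_n:=\sum_{j\ge n}k^{2(j-n)}/a_j$, the factors $(-1)^n$ and the powers of $k$ cancel pairwise and one obtains
\[
w_n(0)P_n(z)=c_n\,\pi_n(z),
\]
so that $\mathfrak{F}_{\text{char}}(z)=1-z\sum_{n\ge0}c_n\pi_n(z)$. It remains a purely combinatorial matter to check that this power series agrees with $\mathfrak{F}(z)$ in (\ref{eq:Fgt}) coefficient by coefficient.

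Writing $\pi_n(z)=\sum_{p=0}^{n}(-1)^p B^{(n)}_p z^p$ (so $B^{(n)}_0=1$, and $B^{(n)}_p$ is the inner sum in (\ref{eq:Pn}) with $m$ replaced by $p$, carrying the constraint $j_p\le n-1$), the coefficient of $z^m$ in $\mathfrak{F}_{\text{char}}$ equals $(-1)^m\sum_{n\ge m-1}c_n B^{(n)}_{m-1}$, while that of $\mathfrak{F}$ is $(-1)^m$ times the $m$-fold sum in (\ref{eq:Fgt}). Hence everything reduces to the identity (with the convention $j_0:=-1$)
\[
\sum_{0\le j_1<\cdots<j_m<\infty}\frac{\prod_{i=1}^{m}(1-k^{2(j_i-j_{i-1})})}{(1-k^2)^m a_{j_1}\cdots a_{j_m}}=\sum_{n\ge m-1}c_n\sum_{0\le j_1<\cdots<j_{m-1}\le n-1}\frac{\prod_{i=1}^{m-1}(1-k^{2(j_i-j_{i-1})})}{(1-k^2)^{m-1}a_{j_1}\cdots a_{j_{m-1}}}.
\]
I would prove it by inserting $c_n=\sum_{j\ge n}k^{2(j-n)}/a_j$, renaming the running index $j$ as $j_m$, and interchanging the order of summation so that the only free index left is $n$, subject to $j_{m-1}<n\le j_m$. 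The resulting geometric sum $\sum_{n=j_{m-1}+1}^{j_m}k^{2(j_m-n)}=(1-k^{2(j_m-j_{m-1})})/(1-k^2)$ supplies precisely the one missing factor, converting the right-hand side into the left-hand side. The one genuinely delicate point—and the place I expect to spend the most care—is justifying these rearrangements of multiple series: using $k\in(0,1)$ and the hypothesis $\sum_j a_j^{-1}<\infty$ one has the uniform bound $|\pi_n(z)|\le\exp\!\big(|z|(1-k^2)^{-1}\sum_j a_j^{-1}\big)$ together with $\sum_n c_n=\Tr J^{-1}<\infty$, whence $\sum_n c_n\pi_n(z)$ converges absolutely and locally uniformly and every multiple sum above converges absolutely; Fubini's theorem then legitimizes both the interchange of summation and the termwise comparison of the Taylor coefficients, establishing $\mathfrak{F}=\mathfrak{F}_{\text{char}}$ and completing the proof.
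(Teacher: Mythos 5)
Your proof is correct and follows essentially the same route as the paper: both reduce the claim to showing $\mathfrak{F}=\mathfrak{F}_{\text{char}}$ with $\mathfrak{F}_{\text{char}}$ from (\ref{eq:Fchar}), substitute the closed forms (\ref{eq:Pn}) and (\ref{eq:wn0}), and match Taylor coefficients by interchanging the order of summation and evaluating the geometric sum $\sum k^{2(j_m-n)}=(1-k^{2(j_m-j_{m-1})})/(1-k^2)$. The additional points you spell out (simplicity of the eigenvalues and the absolute-convergence justification for the rearrangements) are consistent with what the paper delegates to the cited general result of \cite{Stovicek-JAT}.
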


\begin{proof} As explained in Section \ref{sec:Preliminaries}, we
have to show that the function $\mathfrak{F}_{\text{char}}(z)$ defined
in (\ref{eq:Fchar}) equals the entire function $\mathfrak{F}(z)$.
To this end we can use the already proven formulas (\ref{eq:Pn})
and (\ref{eq:wn0}). Thus we can evaluate the coefficient at $z^{m+1}$,
$m\in\mathbb{N}$, on the RHS of (\ref{eq:Fchar}). For an $m$-tuple
of indices $j_{1},\ldots,j_{m}\in\mathbb{Z}_{+}$ let us denote
\[
s(j_{1},\ldots,j_{m}):=\frac{\left(1-k^{2(j_{1}+1)}\right)\left(1-k^{2(j_{2}-j_{1})}\right)\ldots\left(1-k^{2(j_{m}-j_{m-1})}\right)}{(1-k^{2})^{m}a_{j_{1}}a_{j_{2}}\ldots a_{j_{m}}}.
\]
Then the coefficient at $z^{m+1}$ equals, up to the sign $(-1)^{m+1}$,
\begin{eqnarray*}
 &  & \sum_{n=m}^{\infty}\sum_{j=0}^{\infty}\frac{k^{2j}}{a_{n+j}}\,\sum_{0\leq j_{1}<j_{2}<\ldots<j_{m}\leq n-1}s(j_{1},\ldots,j_{m})\\
 &  & =\,\sum_{0\leq j_{1}<j_{2}<\ldots<j_{m}<\infty}\,\,\sum_{n=j_{m}+1}^{\infty}\sum_{j=0}^{\infty}\frac{k^{2j}}{a_{n+j}}\,s(j_{1},\ldots,j_{m})\\
 &  & =\sum_{0\leq j_{1}<j_{2}<\ldots<j_{m}<\infty}\,\,\sum_{j_{m+1}=j_{m}+1}^{\infty}\sum_{j=0}^{j_{m+1}-j_{m}-1}k^{2j}\,\frac{s(j_{1},\ldots,j_{m})}{a_{j_{m+1}}}\\
 &  & =\,\sum_{0\leq j_{1}<j_{2}<\ldots<j_{m}<j_{m+1}<\infty}\frac{s(j_{1},\ldots,j_{m})\,(1-k^{2(j_{m+1}-j_{m})})}{(1-k^{2})a_{j_{m+1}}},
\end{eqnarray*}
and this is in agreement with (\ref{eq:Fgt}). One can proceed similarly
for the coefficient standing at $z$. This coefficient equals, again
up to a sign,
\[
\sum_{n=0}^{\infty}\sum_{j=0}^{\infty}\frac{k^{2j}}{a_{n+j}}=\sum_{j_{1}=0}^{\infty}\sum_{j=0}^{j_{1}}\frac{k^{2j}}{a_{j_{1}}}=\sum_{j_{1}=0}^{\infty}\frac{1-k^{2(j_{1}+1)}}{(1-k^{2})a_{j_{1}}}\,.
\]
This shows that $\mathfrak{F}(z)=\mathfrak{F}_{\text{char}}(z)$.
\end{proof}

\begin{lemma}\label{thm:Phin} The sequence of functions $\{\Phi_{n}(z)\}$
defined in (\ref{eq:Phin}) and (\ref{eq:Wgt-n}) obeys the recurrence
equation
\begin{equation}
\alpha_{n}\Phi_{n+1}(z)+(\beta_{n}-z)\Phi_{n}(z)+\alpha_{n-1}\Phi_{n-1}(z)=0\text{ }\ \text{for}\ n\geq1,\label{eq:Phi-recurr}
\end{equation}
and also
\begin{equation}
\alpha_{0}\Phi_{1}(z)+(\beta_{0}-z)\Phi_{0}(z)-\mathfrak{F}(z)=0.\label{eq:Phi-recurr0}
\end{equation}
\end{lemma}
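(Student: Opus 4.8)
The plan is to substitute $\Phi_n(z)=(-1)^n k^{-n}\mathfrak{W}_n(z)$ directly into (\ref{eq:Phi-recurr}) and exploit $\alpha_n=ka_n$, $\beta_n=a_n+k^2 a_{n-1}$ to reduce the assertion to an identity purely among the functions $\mathfrak{W}_n$. Multiplying (\ref{eq:Phi-recurr}) by $(-1)^n k^n$ and simplifying, the case $n\geq 1$ is seen to be equivalent to
\[
a_n\big(\mathfrak{W}_n(z)-\mathfrak{W}_{n+1}(z)\big)-k^2 a_{n-1}\big(\mathfrak{W}_{n-1}(z)-\mathfrak{W}_n(z)\big)=z\,\mathfrak{W}_n(z),
\]
whereas (\ref{eq:Phi-recurr0}), after using $\beta_0=a_0$ (the convention $a_{-1}=0$), reduces to $a_0\big(\mathfrak{W}_0(z)-\mathfrak{W}_1(z)\big)=z\,\mathfrak{W}_0(z)+\mathfrak{F}(z)$. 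Both identities I would verify by comparing coefficients of $z^m$, in the same spirit as the proof of Proposition \ref{thm:Pn}.

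The structural observation driving the computation is that $\mathfrak{W}_n-\mathfrak{W}_{n+1}$ keeps exactly those terms of $\mathfrak{W}_n$ whose smallest summation index equals $n$, i.e. $j_0=n$. This isolates the factor $k^{2n}/a_n$ and leaves, for each $m\geq 1$, a multi-sum over $n<j_1<\ldots<j_m$ with leading factor $(1-k^{2(j_1-n)})$; after multiplication by $a_n$ the prefactor becomes simply $k^{2n}$. The shifted difference $\mathfrak{W}_{n-1}-\mathfrak{W}_n$ produces the analogous multi-sum but with leading factor $(1-k^{2(j_1-n+1)})$ and with the index range extended to $j_1\geq n$. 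Once both sums are brought to the common prefactor $k^{2n}$, their bulk parts (over $j_1\geq n+1$) combine through the telescoping identity
\[
\big(1-k^{2(j_1-n)}\big)-\big(1-k^{2(j_1-n+1)}\big)=-k^{2(j_1-n)}(1-k^2),
\]
which cancels one power of $(1-k^2)$ from the denominator; pulling out $k^{-2n}$ and relabelling $j_1,\ldots,j_m\mapsto j_0,\ldots,j_{m-1}$ then reproduces, up to sign, the coefficient of $z^{m-1}$ in $\mathfrak{W}_n$, matching the contribution of $z\,\mathfrak{W}_n$ on the right-hand side.

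The point that has to be watched is the boundary term. When the shifted sum is split into its $j_1=n$ piece and its $j_1\geq n+1$ piece, the factor $(1-k^{2(j_1-n+1)})$ degenerates to $(1-k^2)$ at $j_1=n$; this boundary contribution is precisely the $j_0=n$ part of the coefficient of $z^{m-1}$ in $\mathfrak{W}_n$, so that combining it with the telescoped bulk assembles the full coefficient and not merely its $j_0\geq n+1$ portion. The equation (\ref{eq:Phi-recurr0}) is handled by the same mechanism at $n=0$, the sole new feature being that $\mathfrak{F}$ carries the leading factor $(1-k^{2(j_1+1)})$ in place of the $(1-k^{2j_1})$ occurring in $a_0\big(\mathfrak{W}_0-\mathfrak{W}_1\big)$; the parallel telescoping $(1-k^{2(j_1+1)})-(1-k^{2j_1})=k^{2j_1}(1-k^2)$ shows that $\mathfrak{F}$ supplies exactly the boundary term that $z\,\mathfrak{W}_0$ requires. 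The constant ($m=0$) terms reduce to trivial identities --- $a_n\cdot(k^{2n}/a_n)=k^{2n}=k^2 a_{n-1}\cdot(k^{2(n-1)}/a_{n-1})$ for $n\geq 1$, and $a_0\cdot(1/a_0)=1=\mathfrak{F}(0)$ for (\ref{eq:Phi-recurr0}) --- and are checked directly.

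The main obstacle is therefore not conceptual but bookkeeping: keeping the index shifts and the separation of smallest-index terms consistent so that the relabelling genuinely matches the $z^{m-1}$ coefficients of $\mathfrak{W}_n$, including the boundary contributions and the sign conventions. Once the telescoping identity is in hand, the verification is mechanical.
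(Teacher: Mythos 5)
Your proof is correct, and for the recurrence (\ref{eq:Phi-recurr}) with $n\geq1$ it is essentially the paper's argument: both reduce the claim to the identity $a_{n}\big(\mathfrak{W}_{n}(z)-\mathfrak{W}_{n+1}(z)\big)-k^{2}a_{n-1}\big(\mathfrak{W}_{n-1}(z)-\mathfrak{W}_{n}(z)\big)=z\,\mathfrak{W}_{n}(z)$, isolate the $j_{0}=n$ terms in the difference $\mathfrak{W}_{n}-\mathfrak{W}_{n+1}$, and telescope the leading factors after aligning the index ranges (the paper relaxes the first sum's constraint from $j_{1}\geq n+1$ to $j_{1}\geq n$ using the vanishing factor, while you split off the $j_{1}=n$ boundary term of the shifted sum; these are the same bookkeeping). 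The one place you genuinely diverge is (\ref{eq:Phi-recurr0}): you verify it by a second direct coefficient comparison, using $(1-k^{2(j_{1}+1)})-(1-k^{2j_{1}})=k^{2j_{1}}(1-k^{2})$ to show that $\mathfrak{F}$ supplies exactly the boundary contribution that $z\,\mathfrak{W}_{0}$ requires, whereas the paper prepends an auxiliary member $a_{-1}>0$ to the sequence, notes that (\ref{eq:Phi-recurr}) then holds for $n=0$ as well, and deduces (\ref{eq:Phi-recurr0}) from the limit $\lim_{a_{-1}\to0+}ka_{-1}\Phi_{-1}(z)=-\mathfrak{F}(z)$. Your route is more elementary and self-contained; the paper's limiting device spares a second computation and, more importantly, is reused in the subsequent Wronskian and Christoffel--Darboux arguments, which is why the author sets it up here.
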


\begin{proof} From (\ref{eq:Wgt-n}) it is seen that
\begin{eqnarray*}
\mathfrak{W}_{n}(z)-\mathfrak{W}_{n+1}(z) & = & \frac{k^{2n}}{a_{n}}\,\sum_{m=0}^{\infty}(-1)^{m}\\
 &  & \hskip-5.2em\times\bigg(\sum_{n+1\leq j_{1}<j_{2}<\ldots<j_{m}<\infty}\!\frac{(1-k^{2(j_{1}-n)})(1-k^{2(j_{2}-j_{1})})\cdots(1-k^{2(j_{m}-j_{m-1})})}{(1-k^{2})^{m}a_{j_{1}}a_{j_{2}}\ldots a_{j_{m}}}\bigg)z^{m}.
\end{eqnarray*}
Here again, the constraint in the inner sum , $n+1\leq j_{1}<\ldots$,
can be replaced by $n\leq j_{1}<\ldots$ because of the factor $(1-k^{2(j_{1}-n)})$.
We have a similar expression for $\mathfrak{W}_{n-1}(z)-\mathfrak{W}_{n}(z)$.
Now one can readily verify that
\[
a_{n}\big(\mathfrak{W}_{n}(z)-\mathfrak{W}_{n+1}(z)\big)-k^{2}a_{n-1}\big(\mathfrak{W}_{n-1}(z)-\mathfrak{W}_{n}(z)\big)=z\,\mathfrak{W}_{n}(z).
\]
In regard of (\ref{eq:alpha-beta-n}) and (\ref{eq:Phin}), this equation
is equivalent to (\ref{eq:Phi-recurr}).

Let us extend the sequence $\{a_{n};\,n\geq0\}$ by prepending to
it an additional member $a_{-1}$, $0<a_{-1}<a_{\text{min}}$. Then
one can define functions $\mathfrak{W}_{-1}(z)$ and $\Phi_{-1}(z)$
correspondingly, just by extending definitions (\ref{eq:Wgt-n}),
(\ref{eq:Phin}) to $n=-1$. With this extended definition, equation
(\ref{eq:Phi-recurr}) holds also for $n=0$. It is immediate to see
that
\begin{equation}
\lim_{a_{-1}\to0+}a_{-1}\,\mathfrak{W}_{-1}(z)=k^{-2}\mathfrak{F}(z)\ \,\text{whence}\ \lim_{a_{-1}\to0+}ka_{-1}\,\Phi_{-1}(z)=-\mathfrak{F}(z).\label{eq:lim-a-1}
\end{equation}
Letting $n=0$ in (\ref{eq:Phi-recurr}) and applying this limit we
obtain (\ref{eq:Phi-recurr0}). \end{proof}

\begin{lemma} Let $\{\Phi_{n}(z)\}$ be the sequence defined in (\ref{eq:Phin}),
(\ref{eq:Wgt-n}). The Wronskian of the sequences $\{P_{n}(z)\}$
and $\{\Phi_{n}(z)\}$ is a constant and fulfills
\begin{equation}
\alpha_{n}\big(P_{n}(z)\Phi_{n+1}(z)-P_{n+1}(z)\Phi_{n}(z)\big)=\mathfrak{F}(z)\ \,\text{for all}\ n\geq0.\label{eq:Wronskian}
\end{equation}
\end{lemma}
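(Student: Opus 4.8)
The plan is to use the classical fact that the Wronskian of two solutions of one and the same second-order difference equation is independent of the index, and then to fix its value at the initial index. Concretely, I would set
\[
W_n(z):=\alpha_n\big(P_n(z)\Phi_{n+1}(z)-P_{n+1}(z)\Phi_n(z)\big)
\]
and first show that $W_n(z)=W_{n-1}(z)$ for every $n\geq1$. Both sequences satisfy the homogeneous recurrence $\alpha_n f_{n+1}+(\beta_n-z)f_n+\alpha_{n-1}f_{n-1}=0$ for $n\geq1$: for $\{P_n(z)\}$ this is (\ref{eq:Pn-recurr}), and for $\{\Phi_n(z)\}$ it is (\ref{eq:Phi-recurr}), just proven in the preceding lemma. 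Writing $W_n=P_n\,(\alpha_n\Phi_{n+1})-\Phi_n\,(\alpha_n P_{n+1})$ and substituting $\alpha_n\Phi_{n+1}=-(\beta_n-z)\Phi_n-\alpha_{n-1}\Phi_{n-1}$ together with the analogous identity for $\alpha_nP_{n+1}$, the two contributions proportional to $(\beta_n-z)P_n\Phi_n$ cancel, leaving
\[
W_n(z)=\alpha_{n-1}\big(P_{n-1}(z)\Phi_n(z)-P_n(z)\Phi_{n-1}(z)\big)=W_{n-1}(z).
\]
Hence $W_n(z)=W_0(z)$ for all $n\geq0$, which is the asserted constancy in $n$.

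It then remains to evaluate $W_0(z)=P_0\,(\alpha_0\Phi_1)-\Phi_0\,(\alpha_0 P_1)$. Here I would use $P_0=1$ together with the two initial relations: the starting equation in (\ref{eq:Pn-recurr}) gives $\alpha_0P_1(z)=-(\beta_0-z)$, while (\ref{eq:Phi-recurr0}) gives $\alpha_0\Phi_1(z)=-(\beta_0-z)\Phi_0(z)+\mathfrak{F}(z)$. Substituting these, the $(\beta_0-z)\Phi_0(z)$ terms cancel and one is left with $W_0(z)=\mathfrak{F}(z)$. Combined with the constancy this yields $W_n(z)=\mathfrak{F}(z)$ for all $n\geq0$, as claimed.

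The computation is entirely routine, so there is no genuine obstacle; the one point that deserves emphasis is structural rather than technical. The sequence $\{\Phi_n(z)\}$ does \emph{not} satisfy the homogeneous recurrence at the initial index $n=0$: relation (\ref{eq:Phi-recurr0}) carries the extra inhomogeneous term $\mathfrak{F}(z)$. This inhomogeneity is precisely what forces the (otherwise constant) Wronskian to equal $\mathfrak{F}(z)$ rather than $0$. Thus all the content of the lemma is concentrated in the base-case evaluation of $W_0(z)$, whereas the $n$-independence is a mechanical telescoping through the shared three-term recurrence. Consistently, at a point $z=\lambda_j$ with $\mathfrak{F}(\lambda_j)=0$ the Wronskian vanishes, so $\{P_n(\lambda_j)\}$ and $\{\Phi_n(\lambda_j)\}$ become proportional, which is exactly the degeneracy expected at an eigenvalue of $J$.
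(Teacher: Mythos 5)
Your proof is correct. The constancy argument (telescoping $W_n=W_{n-1}$ via the shared homogeneous recurrence for $n\geq1$) is the standard step and matches the paper; the difference lies in how the constant is pinned down. The paper extends the coefficient sequence by an auxiliary $a_{-1}>0$, observes that both recurrences then hold at $n=0$ with $P_{-1}(z)=0$, evaluates the Wronskian at index $-1$ as $-ka_{-1}\Phi_{-1}(z)$, and invokes the limit (\ref{eq:lim-a-1}) to get $\mathfrak{F}(z)$. You instead evaluate $W_0(z)$ directly from the two initial relations, $\alpha_0P_1(z)=-(\beta_0-z)$ and the inhomogeneous identity (\ref{eq:Phi-recurr0}), so that the $(\beta_0-z)\Phi_0(z)$ terms cancel and $W_0(z)=\mathfrak{F}(z)$ drops out. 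The two routes are logically equivalent — the paper's limiting device is exactly how (\ref{eq:Phi-recurr0}) was derived in the preceding lemma — but yours reuses that already-established relation rather than re-running the $a_{-1}\to0+$ argument, which is slightly more economical and makes transparent your (correct) structural observation that the entire content of the lemma sits in the inhomogeneity of the $n=0$ relation. One small caveat: if you state that the Wronskian is constant ``for all $n\geq0$'' via telescoping alone, note that the telescoping only compares indices $n\geq1$ with $n-1\geq0$, which is precisely what you need; the identity at $n=0$ itself is then supplied by your base-case computation, so no extension of the recurrence below $n=0$ is required.
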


\begin{proof} The sequences $\{P_{n}(z)\}$ and $\{\Phi_{n}(z)\}$
satisfy both the same three-term recurrence and therefore their Wronskian
is a constant for $n\geq0$. To find the constant one can use the
same reasoning as in the proof of Lemma \ref{thm:Phin} and consider
an extended sequence of positive numbers $\{a_{n};\,n\geq-1\}$. Then
(\ref{eq:Phi-recurr}) is valid also for $n=0$. The second equation
in (\ref{eq:Pn-recurr}) is valid for $n=0$ as well provided we let
$P_{-1}(z)=0$. Thus the constant equals
\[
\alpha_{-1}\big(P_{-1}(z)\Phi_{0}(z)-P_{0}(z)\Phi_{-1}(z)\big)=-ka_{-1}\Phi_{-1}(z).
\]
Applying the limit (\ref{eq:lim-a-1}) in this expression one obtains
(\ref{eq:Wronskian}). \end{proof}

\begin{proposition}\label{thm:Phi-eigenvec} If $\lambda$ is an
eigenvalue of $J$ then the column vector
\[
\pmb{\Phi}(\lambda):=\big(\Phi_{0}(\lambda),\Phi_{1}(\lambda),\Phi_{2}(\lambda),\ldots\big)^{T}\in\ell^{2}(\mathbb{Z}_{+})
\]
is an eigenvector of $J$ corresponding to $\lambda$. Here $\{\Phi_{n}(z)\}$
is again the sequences defined in (\ref{eq:Phin}), (\ref{eq:Wgt-n}).
The norm of the eigenvector fulfills
\[
\|\pmb{\Phi}(\lambda)\|^{2}=-\mathfrak{F}'(\lambda)\mathfrak{W}(\lambda).
\]
\end{proposition}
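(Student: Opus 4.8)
The plan is to prove, in this order, that $\pmb{\Phi}(\lambda)\in\ell^2(\mathbb{Z}_+)$, that it solves the eigenvalue equation, and finally to compute its norm by a confluent Christoffel--Darboux argument.

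Since $\lambda$ is an eigenvalue, Proposition~\ref{thm:Fchar} together with (\ref{eq:spec-J})--(\ref{eq:Fgt-prod}) gives $\mathfrak{F}(\lambda)=0$. Membership in $\ell^2(\mathbb{Z}_+)$ follows at once from (\ref{eq:Wn-estim}): as $a_n\to\infty$ one has $\min\{a_j;\,j\ge n\}\ge a_{\text{min}}>0$ and the exponential factor stays bounded, so $|\Phi_n(\lambda)|=k^{-n}|\mathfrak{W}_n(\lambda)|\le C\,k^{n}$ and hence $\sum_n|\Phi_n(\lambda)|^2<\infty$. With $\mathfrak{F}(\lambda)=0$ the boundary identity (\ref{eq:Phi-recurr0}) reduces to $\alpha_0\Phi_1(\lambda)+(\beta_0-\lambda)\Phi_0(\lambda)=0$, which together with (\ref{eq:Phi-recurr}) says exactly that the components of $\pmb{\Phi}(\lambda)$ satisfy $(\mathcal{J}-\lambda)\pmb{\Phi}(\lambda)=0$ row by row. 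Because the symmetric operator is essentially self-adjoint (Proposition~\ref{thm:J-trclass}), pairing these relations against the finitely supported $\pmb{e}_m$ yields $\langle\pmb{e}_m,\mathcal{J}\pmb{\Phi}(\lambda)\rangle=\lambda\langle\pmb{e}_m,\pmb{\Phi}(\lambda)\rangle$ for every $m$, whence $\pmb{\Phi}(\lambda)\in\Dom(J)$ and $J\pmb{\Phi}(\lambda)=\lambda\pmb{\Phi}(\lambda)$. It remains to know that $\pmb{\Phi}(\lambda)\ne0$, i.e. $\mathfrak{W}(\lambda)=\Phi_0(\lambda)\ne0$; this is the one point where I lean on the disjointness of the zero sets of $\mathfrak{F}$ and $\mathfrak{W}$ (the eigenvalues of $J$ and of the associated matrix $J^{(1)}$ interlace, cf. the Remark following Theorem~\ref{thm:P}).

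For the norm I would set up a two-parameter Christoffel--Darboux identity for $\{\Phi_n\}$. Multiplying the recurrence (\ref{eq:Phi-recurr}) for $\Phi_n(z)$ by $\Phi_n(w)$, subtracting the same expression with $z$ and $w$ interchanged, and telescoping from $n=1$ to $N$, while evaluating the $n=0$ term through (\ref{eq:Phi-recurr0}) and using $\Phi_0=\mathfrak{W}$, leads to
\[
\sum_{n=0}^{N}\Phi_n(z)\Phi_n(w)=\frac{\mathfrak{W}(z)\mathfrak{F}(w)-\mathfrak{F}(z)\mathfrak{W}(w)-C_N(z,w)}{z-w},
\]
where $C_N(z,w):=\alpha_N\big(\Phi_N(z)\Phi_{N+1}(w)-\Phi_{N+1}(z)\Phi_N(w)\big)$ is the discrete Wronskian at level $N$. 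The pleasant feature is that the two occurrences of $\Phi_0(z)\Phi_0(w)$ cancel, so that only $\mathfrak{W}$ and $\mathfrak{F}$ survive in the numerator.

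The step I expect to be the main obstacle is showing $C_N(z,w)\to0$; a crude bound on $\alpha_N\Phi_N\Phi_{N+1}$ is not obviously enough because $\alpha_N=ka_N$ may grow. I would instead substitute the first-difference formula for $\mathfrak{W}_n-\mathfrak{W}_{n+1}$ derived in the proof of Lemma~\ref{thm:Phin}, which rewrites $C_N$ as $\mathfrak{W}_N(z)G_N(w)-G_N(z)\mathfrak{W}_N(w)$, where $G_n(z):=a_n k^{-2n}\big(\mathfrak{W}_n(z)-\mathfrak{W}_{n+1}(z)\big)$ has constant term $1$. Since $\mathfrak{W}_N\to0$ and $G_N\to1$ as $N\to\infty$ (both governed by the tail $\sum_{j\ge N}1/a_j\to0$), we obtain $C_N\to0$, and therefore, for $z\ne w$,
\[
\sum_{n=0}^{\infty}\Phi_n(z)\Phi_n(w)=\frac{\mathfrak{W}(z)\mathfrak{F}(w)-\mathfrak{F}(z)\mathfrak{W}(w)}{z-w}.
\]
Taking the confluent limit $w\to z=\lambda$ (l'H\^{o}pital's rule in $w$) gives $\sum_n\Phi_n(\lambda)^2=\mathfrak{F}(\lambda)\mathfrak{W}'(\lambda)-\mathfrak{F}'(\lambda)\mathfrak{W}(\lambda)$, and since $\mathfrak{F}(\lambda)=0$ this is precisely $\|\pmb{\Phi}(\lambda)\|^2=-\mathfrak{F}'(\lambda)\mathfrak{W}(\lambda)$, as claimed.
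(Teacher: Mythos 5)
Your argument follows essentially the same route as the paper: square summability of $\{\Phi_n(\lambda)\}$ from (\ref{eq:Wn-estim}), the eigenvalue equation from Lemma~\ref{thm:Phin} specialized at $\mathfrak{F}(\lambda)=0$, and a Christoffel--Darboux identity $(z-w)\sum_{n}\Phi_n(z)\Phi_n(w)=\mathfrak{W}(z)\mathfrak{F}(w)-\mathfrak{F}(z)\mathfrak{W}(w)$ followed by the confluent limit; the paper obtains that identity via the auxiliary extension $a_{-1}\to0+$ and the limit (\ref{eq:lim-a-1}) rather than by direct telescoping, but the computation is the same. One place where you are actually more careful than the paper: the vanishing of the boundary Wronskian $C_N$. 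The crude bound $\alpha_N|\Phi_N\Phi_{N+1}|\lesssim k^{2N}a_N/\min\{a_j;\,j\ge N\}^2$ coming from (\ref{eq:Phin-estim}) need not tend to zero when $\{a_n\}$ oscillates wildly, and your rewriting $C_N=\mathfrak{W}_N(z)G_N(w)-G_N(z)\mathfrak{W}_N(w)$ with $G_N$ bounded and $\mathfrak{W}_N\to0$ settles this cleanly (the paper's version can also be patched by noting that $C_N$ converges and tends to zero along the subsequence where $a_N=\min\{a_j;\,j\ge N\}$).

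The one genuine soft spot is your justification of $\mathfrak{W}(\lambda)\neq0$. You appeal to the Remark following Theorem~\ref{thm:P}, but the paper explicitly omits the proof of that Remark, and no interlacing statement is made or proved anywhere, so as written this step rests on an unestablished fact (the paper's own proof is silent on nontriviality of $\pmb{\Phi}(\lambda)$, so you are at least not worse off, but you should not cite the Remark as if it were available). The repair is short and stays inside your own argument: if $\mathfrak{W}(\lambda)=\Phi_0(\lambda)=0$, then since $\mathfrak{F}(\lambda)=0$ the relation (\ref{eq:Phi-recurr0}) forces $\Phi_1(\lambda)=0$, and (\ref{eq:Phi-recurr}) then propagates $\Phi_n(\lambda)=0$ for all $n$. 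This is impossible, because for $n$ large the $m=0$ term of the series (\ref{eq:Wgt-n}), namely $\sum_{j\ge n}k^{2j}/a_j>0$, dominates the sum of all terms with $m\ge1$, whose total is at most $\big(\exp\big(\tfrac{|\lambda|}{1-k^2}\sum_{j>n}a_j^{-1}\big)-1\big)\sum_{j\ge n}k^{2j}/a_j$ and the first factor tends to $0$; hence $\mathfrak{W}_n(\lambda)\neq0$ eventually. With that substitution your proof is complete and correct.
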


\begin{proof} From (\ref{eq:Wn-estim}) and (\ref{eq:Phin}) it follows
that
\begin{equation}
\left|\Phi_{n}(z)\right|\leq\frac{k^{n}}{\min\{a_{j};\,j\geq n\}}\,\exp\!\bigg(\,\frac{|z|}{1-k^{2}}\,\sum_{j=n+1}^{\infty}\frac{1}{a_{j}}\bigg)\ \text{for}\ n\geq0,\label{eq:Phin-estim}
\end{equation}
and so the sequence $\{\Phi_{n}(z)\}$ is square summable for every
$z\in\mathbb{C}$. If $\lambda$ is an eigenvalue of $J$ then, by
Proposition \ref{thm:Fchar}, $\mathfrak{F}(\lambda)=0$ and, according
to (\ref{eq:Wronskian}), the sequences $\{P_{n}(\lambda)\}$ and
$\{\Phi_{n}(\lambda)\}$ are linearly dependent. From the recurrence
(\ref{eq:Pn-recurr}) one infers that $\pmb{\Phi}(\lambda)$ is an
eigenvector of $J$ corresponding to the eigenvalue $\lambda$.

Using again the extended sequence of positive numbers $\{a_{n};\,n\geq-1\}$,
as in the proof of Lemma \ref{thm:Phin}, and referring to the Cristoffel-Darboux
formula \cite{Akhiezer} we have, for $N\in\mathbb{N}$ and every
couple $\lambda,\theta\in\mathbb{C}$,
\begin{eqnarray*}
(\lambda-\theta)\sum_{n=0}^{N}\Phi_{n}(\lambda)\Phi_{n}(\theta) & = & \alpha_{-1}\big(\Phi_{-1}(\lambda)\Phi_{0}(\theta)-\Phi_{0}(\lambda)\Phi_{-1}(\theta)\big)\\
 &  & -\,\alpha_{N}\big(\Phi_{N}(\lambda)\Phi_{N+1}(\theta)-\Phi_{N+1}(\lambda)\Phi_{N}(\theta)\big).
\end{eqnarray*}
Owing to the estimate (\ref{eq:Phin-estim}) one can send $N\to\infty.$
In addition, one can again apply the limit (\ref{eq:lim-a-1}) to
this expression. This way we get
\[
(\lambda-\theta)\sum_{n=0}^{\infty}\Phi_{n}(\lambda)\Phi_{n}(\theta)=-\mathfrak{F}(\lambda)\Phi_{0}(\theta)+\Phi_{0}(\lambda)\mathfrak{F}(\theta).
\]
In the case considered $\lambda$ is an eigenvalue of $J$ and $\mathfrak{F}(\lambda)=0$.
Then in the limit $\theta\to\lambda$ we obtain
\begin{equation}
\sum_{n=0}^{\infty}\Phi_{n}(\lambda)^{2}=-\mathfrak{F}'(\lambda)\Phi_{0}(\lambda).\label{eq:sum-Phin}
\end{equation}
But note that $\Phi_{0}(z)=\mathfrak{W}(z)$. \end{proof}

\begin{corollary}\label{thm:mass} Let $\mu$ be the unique Borel
probability measure on $\mathbb{R}$ which is an orthogonality measure
for $\{P_{n}(x)\}$. Then for all eigenvalues $\lambda$ of $J$,
\[
\mu(\{\lambda\})=-\frac{\mathfrak{W}(\lambda)}{\mathfrak{F}'(\lambda)}.
\]
\end{corollary}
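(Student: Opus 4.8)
The plan is to identify the orthogonality measure $\mu$ with the spectral measure of $J$ associated with the cyclic vector $\pmb{e}_0$, and then to read off the mass at each eigenvalue from the explicit eigenvector furnished by Proposition~\ref{thm:Phi-eigenvec}.

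First I would recall the standard correspondence $P_n(J)\pmb{e}_0=\pmb{e}_n$, which follows by induction directly from the three-term recurrence (\ref{eq:Pn-recurr}) and the tridiagonal action of $J$ on the canonical basis. In particular $\pmb{e}_0$ is a cyclic vector for $J$, and since the Hamburger moment problem is determinate (Proposition~\ref{thm:J-trclass}) the orthogonality measure $\mu$ coincides with the spectral measure $\mathrm{d}\langle\pmb{e}_0,E(\cdot)\pmb{e}_0\rangle$ of $J$, where $E$ denotes the spectral resolution. By Theorem~\ref{thm:J} the spectrum is simple and purely discrete, $\Spec(J)=\{\lambda_j\}$, so if $\pmb{v}_j$ denotes the normalized eigenvector at $\lambda_j$ then $E(\{\lambda_j\})$ is the orthogonal projection onto $\span\{\pmb{v}_j\}$ and consequently
\[
\mu(\{\lambda_j\})=\langle\pmb{e}_0,E(\{\lambda_j\})\pmb{e}_0\rangle=\big|\langle\pmb{e}_0,\pmb{v}_j\rangle\big|^{2}.
\]

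Next I would insert the explicit eigenvector. By Proposition~\ref{thm:Phi-eigenvec} one may take $\pmb{v}_j=\pmb{\Phi}(\lambda_j)/\|\pmb{\Phi}(\lambda_j)\|$, whose first component is $\Phi_0(\lambda_j)=\mathfrak{W}(\lambda_j)$ (recall $\Phi_0\equiv\mathfrak{W}_0\equiv\mathfrak{W}$), while the same proposition gives $\|\pmb{\Phi}(\lambda_j)\|^{2}=-\mathfrak{F}'(\lambda_j)\mathfrak{W}(\lambda_j)$. Since all quantities here are real, $\langle\pmb{e}_0,\pmb{v}_j\rangle=\mathfrak{W}(\lambda_j)/\|\pmb{\Phi}(\lambda_j)\|$, and therefore
\[
\mu(\{\lambda_j\})=\frac{\mathfrak{W}(\lambda_j)^{2}}{\|\pmb{\Phi}(\lambda_j)\|^{2}}=\frac{\mathfrak{W}(\lambda_j)^{2}}{-\mathfrak{F}'(\lambda_j)\mathfrak{W}(\lambda_j)}=-\frac{\mathfrak{W}(\lambda_j)}{\mathfrak{F}'(\lambda_j)},
\]
which is the desired formula. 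The cancellation is legitimate because $\mathfrak{W}(\lambda_j)\neq0$: were it zero, the eigenvector $\pmb{\Phi}(\lambda_j)$ would be orthogonal to the cyclic vector $\pmb{e}_0$, which is impossible.

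I expect the only genuinely delicate point to be the first paragraph, namely the identification of the orthogonality measure with the spectral measure of the cyclic vector $\pmb{e}_0$; this is a classical fact in the theory of Jacobi matrices but rests squarely on the determinacy of the moment problem established earlier. Everything after that is a one-line substitution using Proposition~\ref{thm:Phi-eigenvec}. An alternative route would be to compute the residue of the Weyl function $w(z)=\int\mathrm{d}\mu(\lambda)/(\lambda-z)$ at each pole $\lambda_j$, using $\mu(\{\lambda_j\})=-\operatorname{Res}_{z=\lambda_j}w(z)$; this, however, would first require an explicit closed form for $w$ in terms of $\mathfrak{F}$ and $\mathfrak{W}$, so the eigenvector approach is more economical given the results already at hand.
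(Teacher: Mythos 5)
Your proof is correct and follows essentially the same route as the paper: the paper invokes the classical formula $\mu(\{\lambda\})=\big(\sum_{n}P_{n}(\lambda)^{2}\big)^{-1}$ from Akhiezer together with $P_{n}(\lambda)=\Phi_{n}(\lambda)/\Phi_{0}(\lambda)$ and the norm identity (\ref{eq:sum-Phin}), which is exactly your computation $\mu(\{\lambda_j\})=|\langle\pmb{e}_{0},\pmb{v}_{j}\rangle|^{2}$ in the spectral-projection language (the two are identical since $P_{0}\equiv1$). Your justification that $\mathfrak{W}(\lambda_j)\neq0$ via cyclicity of $\pmb{e}_{0}$ is a welcome explicit addition to a point the paper leaves implicit.
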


\begin{proof} According to the general theory \cite[Subsec. 2.5]{Akhiezer}
and by Proposition \ref{thm:Fchar},
\[
\supp\mu=\Spec J=\mathfrak{F}^{-1}(\{0\})
\]
 and
\[
\forall\lambda\in\Spec J,\ \mu(\{\lambda\})=\bigg(\,\sum_{n=0}^{\infty}P_{n}(\lambda)^{2}\bigg)^{-1}.
\]
In view of (\ref{eq:sum-Phin}), it suffices to observe that for every
$\lambda\in\mathfrak{F}^{-1}(\{0\})$ and all $n\geq0$, $P_{n}(\lambda)=\Phi_{n}(\lambda)/\Phi_{0}(\lambda)$.
Recall again that $\Phi_{0}(z)=\mathfrak{W}(z)$. \end{proof}

\begin{proposition} The functions of the second kind defined in (\ref{eq:fce-2nd-def})
can be expressed as
\begin{equation}
\forall n\geq0,\ \,w_{n}(z)=\frac{\Phi_{n}(z)}{\mathfrak{F}(z)}.\label{eq:fce-2nd}
\end{equation}
Here again, $\{\Phi_{n}(z)\}$ is the sequences defined in (\ref{eq:Phin}),
(\ref{eq:Wgt-n}). In particular, for $n=0$ we have an expression
for the Weyl function (\ref{eq:Weyl-def}),
\begin{equation}
w(z)=\frac{\mathfrak{W}(z)}{\mathfrak{F}(z)}.\label{eq:Weyl}
\end{equation}
\end{proposition}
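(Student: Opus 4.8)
The plan is to prove the formula $w_n(z)=\Phi_n(z)/\mathfrak{F}(z)$ by exploiting two facts already established: the sequence $\{w_n(z)\}$ satisfies the same three-term recurrence as $\{P_n(z)\}$ and $\{\Phi_n(z)\}$ (away from the spectrum), and we have the explicit Wronskian identity (\ref{eq:Wronskian}). First I would recall from the general theory of functions of the second kind that, for $z\in\varrho(J)$, the sequence $\{w_n(z)\}_{n\geq0}$ satisfies the recurrence (\ref{eq:Phi-recurr}) for $n\geq1$, together with the inhomogeneous relation $\alpha_0 w_1(z)+(\beta_0-z)w_0(z)=1$ coming from the first row of $(J-z)^{-1}$. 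Comparing this with (\ref{eq:Phi-recurr0}), which reads $\alpha_0\Phi_1(z)+(\beta_0-z)\Phi_0(z)=\mathfrak{F}(z)$, one sees that the normalized sequence $\Phi_n(z)/\mathfrak{F}(z)$ satisfies exactly the same initial inhomogeneous relation and the same homogeneous recurrence as $w_n(z)$.

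Next I would argue uniqueness. Since both $\{w_n(z)\}$ and $\{\Phi_n(z)/\mathfrak{F}(z)\}$ satisfy the same first-order inhomogeneous equation at $n=0$ and the same homogeneous three-term recurrence for $n\geq1$, their difference $\{d_n(z)\}$ with $d_n=w_n-\Phi_n/\mathfrak{F}$ satisfies the \emph{homogeneous} recurrence for all $n\geq0$ (including the reduced $n=0$ relation $\alpha_0 d_1+(\beta_0-z)d_0=0$). Any solution of this homogeneous system is a scalar multiple of $\{P_n(z)\}$, because $\{P_n(z)\}$ is the unique solution up to normalization determined by a single initial value. Thus $d_n(z)=c(z)P_n(z)$ for some constant $c(z)$ depending only on $z$. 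It remains to show $c(z)=0$.

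To pin down $c(z)=0$ I would use square-summability. For $z\in\varrho(J)\cap(\mathbb{C}\setminus[\gamma,\infty))$ the functions of the second kind $w_n(z)$ form a square-summable sequence, since $w_n(z)P_n(z)=\langle\pmb{e}_n,(J-z)^{-1}\pmb{e}_n\rangle$ and $(J-z)^{-1}$ is bounded; more directly, $\{w_n(z)\}\in\ell^2$ as the Weyl solution. Likewise $\{\Phi_n(z)\}$ is square-summable for every $z\in\mathbb{C}$ by the estimate (\ref{eq:Phin-estim}). Hence $\{d_n(z)\}$ is square-summable, so $c(z)\{P_n(z)\}$ is square-summable. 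But $\{P_n(z)\}$ fails to be in $\ell^2$ for $z\in\varrho(J)$ (the Hamburger problem being determinate and $z$ not an eigenvalue, the orthonormal polynomial solution is not subordinate and is not square-summable). Therefore $c(z)=0$, giving (\ref{eq:fce-2nd}) for all $z$ in a set with an accumulation point, and then for all $z\in\varrho(J)$ by analytic continuation since both sides are meromorphic.

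The main obstacle I anticipate is the rigorous justification that $\{P_n(z)\}\notin\ell^2$ at points of the resolvent set, which is exactly the content of determinacy of the moment problem combined with Weyl's limit-point classification; this is standard but must be invoked cleanly. An attractive alternative that sidesteps the $\ell^2$ subtlety is to combine the Wronskian identity (\ref{eq:Wronskian}) with the explicit summation formula (\ref{eq:wn}): dividing (\ref{eq:Wronskian}) by $\alpha_n P_n(z)P_{n+1}(z)$ and telescoping the resulting sum $\sum_{j\geq n}\big(\Phi_{j+1}/P_{j+1}-\Phi_j/P_j\big)$ yields, after using that $\Phi_n(z)/P_n(z)\to 0$ as $n\to\infty$ (again from (\ref{eq:Phin-estim})), the identity $\Phi_n(z)/P_n(z)=\mathfrak{F}(z)\sum_{j\geq n}1/(\alpha_j P_j(z)P_{j+1}(z))$, which is precisely $-w_n(z)\mathfrak{F}(z)/P_n(z)$ up to sign by (\ref{eq:wn}); comparing gives (\ref{eq:fce-2nd}) directly. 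I would likely present this telescoping computation as the cleaner route, since it uses only results already proven in the excerpt and avoids appealing to limit-point theory.
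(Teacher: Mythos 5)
Your first argument is essentially the paper's proof: the paper simply cites the known fact that, in the determinate case and for $z\in\varrho(J)$, $\{w_{n}(z)\}$ is the \emph{unique} square-summable solution of the inhomogeneous recurrence, and then observes that $\{\Phi_{n}(z)/\mathfrak{F}(z)\}$ is square-summable by (\ref{eq:Phin-estim}) and satisfies the same relations by (\ref{eq:Phi-recurr}), (\ref{eq:Phi-recurr0}). What you add is a proof of that uniqueness statement: the difference of two $\ell^{2}$ solutions solves the homogeneous system, hence is a multiple of $P_{n}(z)$, and $\{P_{n}(z)\}\notin\ell^{2}$ for $z\in\varrho(J)$ by determinacy (limit-point case). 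That is correct and standard, provided it is invoked for real $z$ in the resolvent set as well as for non-real $z$.

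The telescoping alternative that you say you would actually present, however, has a genuine gap. Dividing (\ref{eq:Wronskian}) by $\alpha_{n}P_{n}(z)P_{n+1}(z)$ and summing does give
\[
\frac{\Phi_{N+1}(z)}{P_{N+1}(z)}-\frac{\Phi_{n}(z)}{P_{n}(z)}=\mathfrak{F}(z)\sum_{j=n}^{N}\frac{1}{\alpha_{j}P_{j}(z)P_{j+1}(z)},
\]
and the right-hand side converges as $N\to\infty$; but the claim that $\Phi_{N}(z)/P_{N}(z)\to0$ does \emph{not} follow from (\ref{eq:Phin-estim}), which controls only the numerator. You need to rule out that $P_{N}(z)$ decays at least as fast as $\Phi_{N}(z)\sim k^{N}$. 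Writing $L$ for the (existing) limit of $\Phi_{N}/P_{N}$ and using (\ref{eq:wn}), one gets $\Phi_{n}(z)=L\,P_{n}(z)+\mathfrak{F}(z)w_{n}(z)$; since $\{\Phi_{n}(z)\}$ and $\{w_{n}(z)\}$ are square-summable, $L\neq0$ would force $\{P_{n}(z)\}\in\ell^{2}$ --- so the only way to conclude $L=0$ is again the non-square-summability of $\{P_{n}(z)\}$ at resolvent points, exactly the limit-point input you hoped to sidestep. (At $z=0$ one has $|P_{n}(0)|=k^{-n}$ explicitly from (\ref{eq:Pn0}), but no such bound is available at general $z\in\varrho(J)$ without further argument.) So either supply that lower bound, or present your first argument, which is sound and coincides with the paper's.
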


\begin{proof} This assertion is a direct consequence of the following
well known fact (for instance, it is in principle contained in \cite{Akhiezer},
using somewhat different terminology it can be found in \cite[Chap. 2]{Teschl},
a detailed discussion is also provided in \cite{Stovicek-JAT}): in
the Hamburger determinate case and for every $z\in\varrho(J)$, $\{w_{n}(z)\}$
is the unique square summable sequence satisfying
\begin{eqnarray*}
 &  & \alpha_{0}w_{1}(z)+\left(\beta_{0}-z\right)w_{0}(z)=1,\\
 &  & \alpha_{n}w_{n+1}(z)+(\beta_{n}-z)w_{n}(z)+\alpha_{n-1}w_{n-1}(z)=0\ \,\text{for}\ n\geq1
\end{eqnarray*}
The sequence $\{\Phi_{n}(z)\}$ is square summable, see (\ref{eq:Phin-estim}),
and a comparison with (\ref{eq:Phi-recurr}), (\ref{eq:Phi-recurr0})
leads to (\ref{eq:fce-2nd}). \end{proof}

\begin{remark}\label{rem:mass} Let us point out that there is an
alternative way how to derive the formula for the masses of atoms
of the discrete measure $\mu$ .We know that $\mu$ is supported on
the spectrum of $J$ which consists of simple eigenvalues $\lambda_{j}$,
$j\geq0$. On the other hand, the Weyl function equals the Stieltjes
transformation of $\mu$. Combining (\ref{eq:Weyl-def}) and (\ref{eq:Weyl})
we have
\begin{equation}
\frac{\mathfrak{W}(z)}{\mathfrak{F}(z)}=\int\frac{\text{d}\mu(\lambda)}{\lambda-z}=\sum_{j=0}^{\infty}\frac{\mu_{j}}{\lambda_{j}-z}\label{eq:Weyl-int}
\end{equation}
where $\mu_{j}:=\mu(\{\lambda_{j}\})$. From here we immediately obtain
(\ref{eq:mass-muj}). \end{remark}

\begin{proof}[Proof of Theorem \ref{thm:J}] Proposition \ref{thm:J-trclass},
Remark \ref{rem:tr-Jinv}, Proposition \ref{thm:Fchar} and Proposition
\ref{thm:Phi-eigenvec} jointly imply Theorem \ref{thm:J}. \end{proof}

\begin{proof}[Proof of Theorem \ref{thm:P}] Proposition \ref{thm:Pn},
Proposition \ref{thm:Fchar}, Proposition \ref{thm:J-trclass} and
Corollary \ref{thm:mass} (or Remark \ref{rem:mass}) jointly imply
Theorem \ref{thm:P}. \end{proof}

\section{Auxiliary summation identities\label{sec:Auxiliary}}

Everywhere in what follows $q\in(0,1)$. In the sequel we are using
standard notations as far as the $q$-Pochhammer symbol and the basic
hypergeometric functions are concerned, see \cite{GasperRahman,KoekoekLeskySwarttouw}.

\begin{remark} We shall need the identity
\begin{equation}
\sum_{n=0}^{\infty}\frac{q^{rn}}{(q^{n}w;q)_{r+1}}=\frac{1}{(1-q^{r})(w;q)_{r}},\ \forall r\in\mathbb{N}.\label{eq:basic-id}
\end{equation}
The both sides are regarded as meromorphic functions in $w\in\mathbb{C}$.

The identity is a straightforward consequence of a well known formula
for\linebreak{}
$\,_{1}\phi_{0}(q^{m};;q,w)$ telling us that \cite{GasperRahman,KoekoekLeskySwarttouw}
\begin{equation}
\frac{1}{(w;q)_{m}}=\sum_{s=0}^{\infty}\frac{(q^{m};q)_{s}}{(q;q)_{s}}\,w^{s}.\label{eq:phi10}
\end{equation}
Here $w\in\mathbb{C}$, $|w|<1$, and $m\in\mathbb{Z}_{+}$. In fact,
it suffices to show (\ref{eq:basic-id}) for $|w|<1$. Then, in view
of (\ref{eq:phi10}), the LHS of (\ref{eq:basic-id}) equals
\begin{eqnarray*}
\sum_{n=0}^{\infty}q^{rn}\sum_{s=0}^{\infty}\frac{(q^{r+1};q)_{s}}{(q;q)_{s}}\,q^{ns}w^{s} & = & \sum_{s=0}^{\infty}\frac{(q^{r+1};q)_{s}}{(1-q^{r+s})(q;q)_{s}}\,w^{s}\,=\,\frac{1}{1-q^{r}}\,\sum_{s=0}^{\infty}\frac{(q^{r};q)_{s}}{(q;q)_{s}}\,w^{s}\\
 & = & \frac{1}{(1-q^{r})(w;q)_{r}}.
\end{eqnarray*}
\end{remark}

\begin{lemma}\label{thm:lemma1} For every $w\in\mathbb{C}$, $|w|<1$,
and $m\in\mathbb{N}$,
\begin{equation}
(1-q^{m})(w;q)_{m}\sum_{j=0}^{\infty}\frac{q^{(m+1)j}w^{j}}{1-q^{j+m}}\,\sum_{n=0}^{\infty}\frac{q^{(j+m+1)n}}{\left(q^{n}w;q\right){}_{m+1}}=\sum_{j=0}^{\infty}\frac{q^{mj}w^{j}}{1-q^{j+m+1}}.\label{eq:lemma1}
\end{equation}
\end{lemma}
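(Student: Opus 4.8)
The plan is to work throughout in the disc $|w|<1$, where by (\ref{eq:phi10}) every series in sight converges absolutely; this legitimizes all interchanges of summation and lets me reduce the identity to a comparison of power-series coefficients in $w$.

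First I would dispose of the inner $n$-sum. Expanding $1/(q^{n}w;q)_{m+1}$ by (\ref{eq:phi10}) (with $m$ replaced by $m+1$ and $w$ by $q^{n}w$) and summing the resulting geometric series in $n$ gives
\[
\sum_{n=0}^{\infty}\frac{q^{(j+m+1)n}}{(q^{n}w;q)_{m+1}}=\sum_{t=0}^{\infty}\frac{(q^{m+1};q)_{t}}{(q;q)_{t}}\,\frac{w^{t}}{1-q^{\,j+m+1+t}}.
\]
I note in passing that for $j=0$ this collapses, via the telescoping $(q^{m+1};q)_{t}/(1-q^{m+t})=(q^{m};q)_{t}/(1-q^{m})$, back to (\ref{eq:basic-id}) with $r=m$; but for $j\geq1$ the inner sum has no single-Pochhammer closed form, so the simplification must come from the outer sum.

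Next I would substitute this expression into the left-hand side of (\ref{eq:lemma1}) and read off the coefficient of $w^{N}$ in the double series over $j,t$. The key observation is that the $w$-power is $w^{\,j+t}$, so only pairs with $j+t=N$ contribute, and then $q^{\,j+m+1+t}=q^{\,N+m+1}$ becomes independent of $j$. Hence the factor $1/(1-q^{\,N+m+1})$ pulls out of the $j$-sum, exactly matching the $w^{N}$-dependence $q^{mN}/(1-q^{\,N+m+1})$ of the right-hand side. After this step the claim (\ref{eq:lemma1}) reduces, for every $N\in\mathbb{Z}_{+}$, to the terminating identity obtained by also expanding the polynomial factor $(w;q)_{m}$ through the $q$-binomial theorem and matching the coefficient of $w^{N}$.

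Proving that terminating identity is the main obstacle: it is a finite $q$-hypergeometric sum in which the $q$-binomial coefficients from $(w;q)_{m}$ must conspire with the finite $j$-sum $\sum_{j}q^{(m+1)j}(q^{m+1};q)_{N-j}/\big[(1-q^{\,j+m})(q;q)_{N-j}\big]$ to collapse everything to the single monomial $q^{mN}$. I expect this to follow from the $q$-Chu--Vandermonde evaluation (equivalently, a further application of (\ref{eq:phi10})/(\ref{eq:basic-id})). Should the coefficient bookkeeping prove unwieldy, an alternative that avoids extracting coefficients is to keep the double sum over $j,t$ and use the partial-fraction identity
\[
\frac{1}{(1-q^{\,j+m})(1-q^{\,j+m+1+t})}=\frac{1}{1-q^{\,t+1}}\left(\frac{1}{1-q^{\,j+m}}-\frac{q^{\,t+1}}{1-q^{\,j+m+1+t}}\right)
\]
to separate the two $j$-dependent denominators, after which the $j$- and $t$-summations can be resummed by (\ref{eq:basic-id}) and (\ref{eq:phi10}) and recombined into the right-hand side.
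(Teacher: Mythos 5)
Your first step is correct: expanding $1/(q^{n}w;q)_{m+1}$ by (\ref{eq:phi10}) and summing the geometric series in $n$ does give $\sum_{t\ge 0}\frac{(q^{m+1};q)_{t}}{(q;q)_{t}}\frac{w^{t}}{1-q^{j+m+1+t}}$, and absolute convergence for $|w|<1$ justifies the rearrangements. The gap is everything after that. The ``terminating identity'' you reduce to is not the clean statement your denominator remark suggests: once the polynomial $(w;q)_{m}$ is folded in, contributing powers $w^{s}$ with $0\le s\le m$, the coefficient of $w^{N}$ on the left is a double sum over $s$ and $j$ (with $t=N-s-j$) whose denominators are $1-q^{N-s+m+1}$ and $1-q^{j+m}$, both varying with the summation indices; the factor $1/(1-q^{N+m+1})$ does \emph{not} pull out, and nothing yet matches the right-hand side. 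That identity is true, but it is essentially as hard as the lemma itself, and it is not a $q$-Chu--Vandermonde evaluation: writing $q^{(m+1)j}/(1-q^{j+m})=\frac{1}{1-q^{m}}\frac{(q^{m};q)_{j}}{(q^{m+1};q)_{j}}q^{(m+1)j}$ shows the inner $j$-sum is a coefficient of a non-terminating $\,_2\phi_1(q^{m},q;q^{m+1};q,\cdot)$ with $c=ab$ at a generic argument, which has no elementary closed form. Your fallback via partial fractions does not close either: after splitting, the first piece contains $\sum_{t}(q^{m+1};q)_{t}w^{t}/\big((q;q)_{t}(1-q^{t+1})\big)=\frac{1}{1-q}\,_2\phi_1(q^{m+1},q;q^{2};q,w)$, which (\ref{eq:phi10}) does not evaluate, while the second piece still couples $j$ and $t$ through $1-q^{j+m+1+t}$. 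So neither route is actually carried to the end, and the step you defer is precisely where the difficulty lives.

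For comparison, the paper never extracts coefficients. It uses the telescoping identity
\[
\frac{(1-q^{m})\,q^{n}w}{(q^{n}w;q)_{m+1}}=\frac{1}{(q^{n}w;q)_{m}}-\frac{1}{(q^{n+1}w;q)_{m}}
\]
and performs an Abel summation in $n$. The boundary term $1/(w;q)_{m}$ cancels the prefactor $(w;q)_{m}$, and in the remaining sum the factor $q^{j+m}-1$ produced by the summation by parts cancels the denominator $1-q^{j+m}$, so the $j$-sum becomes geometric and the $n$-sum is then evaluated in closed form by (\ref{eq:basic-id}); a short algebraic rearrangement gives the right-hand side. If you want to salvage your approach, that telescoping step is the missing idea.
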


\begin{proof} We can express the LHS of (\ref{eq:lemma1}) as
\begin{eqnarray*}
 &  & \frac{(w;q)_{m}}{w}\sum_{j=0}^{\infty}\frac{q^{(m+1)j}w^{j}}{1-q^{j+m}}\,\sum_{n=0}^{\infty}q^{(j+m)n}\bigg(\frac{1}{(q^{n}w;q)_{m}}-\frac{1}{(q^{n+1}w;q)_{m}}\bigg)\\
 &  & =\,\frac{(w;q)_{m}}{w}\sum_{j=0}^{\infty}\frac{q^{(m+1)j}w^{j}}{1-q^{j+m}}\bigg(\frac{1}{(w;q)_{m}}+\sum_{n=0}^{\infty}\frac{(q^{j+m}-1)q^{(j+m)n}}{(q^{n+1}w;q)_{m}}\bigg)\\
 &  & =\,\frac{1}{w}\sum_{j=0}^{\infty}\frac{q^{(m+1)j}w^{j}}{1-q^{j+m}}-\frac{(w;q)_{m}}{w}\sum_{n=0}^{\infty}\frac{q^{mn}}{(q^{n+1}w;q)_{m+1}}.
\end{eqnarray*}
Referring to (\ref{eq:basic-id}), this expression can be further
simplified and we obtain
\begin{eqnarray*}
\frac{1}{w}\sum_{j=0}^{\infty}\frac{q^{(m+1)j}w^{j}}{1-q^{j+m}}-\frac{(w;q)_{m}}{w(1-q^{m})(qw;q)_{m}} & = & \frac{1}{w}\sum_{j=0}^{\infty}\frac{q^{(m+1)j}w^{j}}{1-q^{j+m}}-\frac{1-w}{w\,(1-q^{m})(1-q^{m}w)}\\
 & = & \frac{1}{1-q^{m}w}+\sum_{j=0}^{\infty}\frac{q^{(m+1)(j+1)}w^{j}}{1-q^{j+m+1}}.
\end{eqnarray*}
Now one can readily check that the last expression actually equals
the RHS of (\ref{eq:lemma1}). \end{proof}

\begin{proposition}\label{thm:denom-2-2--2-1} For every $a>0$ and
$m\in\mathbb{Z}_{+}$,
\begin{eqnarray}
 &  & \sum_{0\leq n_{m}\leq n_{m-1}\leq\ldots\leq n_{0}<\infty}\frac{q^{n_{0}+\ldots+n_{m-1}+n_{m}}}{(q^{n_{m}+a};q)_{2}(q^{n_{m-1}+a+2};q)_{2}\cdots(q^{n_{1}+a+2m-2};q)_{2}(q^{n_{0}+a+2m};q)_{1}}\nonumber \\
 &  & =\,\frac{1}{(q;q)_{m}(q^{a};q)_{m}}\,\sum_{j=0}^{\infty}\frac{q^{(m+a)j}}{1-q^{j+m+1}}.\label{eq:denom-2-2--2-1}
\end{eqnarray}
\end{proposition}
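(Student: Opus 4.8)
The plan is to prove \eqref{eq:denom-2-2--2-1} by induction on $m$, reserving Lemma~\ref{thm:lemma1} for the final reassembly of the closed form. Throughout write $S_m(a)$ for the multiple sum on the left-hand side of \eqref{eq:denom-2-2--2-1}, and note that, upon setting $w=q^a$, the right-hand side of Lemma~\ref{thm:lemma1}, namely $\sum_{j\ge0}q^{mj}w^j/(1-q^{j+m+1})$, is exactly $\sum_{j\ge0}q^{(m+a)j}/(1-q^{j+m+1})$, the very sum occurring on the right-hand side of \eqref{eq:denom-2-2--2-1}. Hence Lemma~\ref{thm:lemma1} is the tool that will turn a double sum produced by the induction into the single sum we want. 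For the base case $m=0$ the left-hand side of \eqref{eq:denom-2-2--2-1} collapses to $\sum_{n_0\ge0}q^{n_0}/(1-q^{n_0+a})$; expanding $1/(1-q^{n_0+a})=\sum_{s\ge0}q^{(n_0+a)s}$ and interchanging the two summations gives $\sum_{s\ge0}q^{as}/(1-q^{s+1})$, which is the right-hand side for $m=0$ and is also a special case of \eqref{eq:basic-id}.

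For the inductive step I would carry out the innermost summation over $n_m$ explicitly. The elementary tool is the telescoping identity
\[
\frac{q^{n}}{(q^{n+c};q)_2}=\frac{q^{-c}}{1-q}\left(\frac{1}{1-q^{n+c}}-\frac{1}{1-q^{(n+1)+c}}\right),
\]
together with its length-$L$ analogue $q^{n}/(q^{n+c};q)_L=\frac{q^{-c}}{1-q^{L-1}}\bigl((q^{n+c};q)_{L-1}^{-1}-(q^{(n+1)+c};q)_{L-1}^{-1}\bigr)$, which will govern the correction terms. Summing $n_m$ from $0$ to $n_{m-1}$ splits $S_m(a)$ into a leading piece, in which the factor $(q^{n_m+a};q)_2$ is simply deleted and the remaining sum is precisely $S_{m-1}(a+2)$, so that the induction hypothesis applies, plus a correction piece, in which the leftover factor $1/(1-q^{n_{m-1}+a+1})$ fuses with $(q^{n_{m-1}+a+2};q)_2$ to produce a Pochhammer symbol of length three at the new innermost slot. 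Peeling the variables off one at a time in this way, each correction that is generated is pushed outward; once a variable reaches the outermost position its summation runs to infinity and telescopes to a closed form.

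The final step is to collect the terms left by this cascade — each a product of $q$-Pochhammer prefactors times a single series of the type $\sum_j q^{(m+a)j}/(1-q^{j+m+1})$ — and to check that they assemble into the right-hand side of \eqref{eq:denom-2-2--2-1}. This is exactly the juncture at which Lemma~\ref{thm:lemma1} (with $w=q^a$) is invoked to convert the accumulated double sum into the single sum on the right of \eqref{eq:denom-2-2--2-1}, while the basic identity \eqref{eq:basic-id} disposes of any residual inner summations. I expect the main obstacle to be precisely this bookkeeping: keeping track of the shifts $a\mapsto a+2$ and of the lengthening Pochhammer symbols in the cascade, and then verifying that after all cancellations the pieces recombine in exactly the pattern that \eqref{eq:lemma1} collapses. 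Each individual summation is routine; the difficulty is organizational, and an equivalent and perhaps tidier route would be to reverse the order of summation from the outset, expanding every factor of the denominator as a $q$-series and reducing the whole expression directly to the left-hand side of \eqref{eq:lemma1}.
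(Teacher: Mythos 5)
Your overall frame (induction on $m$, base case via geometric expansion, Lemma~\ref{thm:lemma1} at $w=q^{a}$ as the engine) matches the paper's, and your base case is fine; but the inductive step as you describe it has a genuine gap, and it misses the mechanism that actually makes the induction close. The paper's key observation is a homogeneity property of the truncated sum: since all remaining variables are bounded below by $\ell$, shifting them by $\ell$ gives
\[
\sum_{\ell\leq n_{m-1}\leq\ldots\leq n_{0}<\infty}(\cdots)\;=\;q^{m\ell}\,T_{m-1}(\ell+a+2),
\]
so peeling off the \emph{smallest} variable $n_{m}$ as the \emph{outer} summation index yields the clean one-term recurrence
\[
T_{m}(a)=\sum_{n=0}^{\infty}\frac{q^{(m+1)n}}{(q^{n+a};q)_{2}}\,T_{m-1}(n+a+2),
\]
and checking that the closed form on the right of \eqref{eq:denom-2-2--2-1} satisfies this same recurrence is, after simplification, exactly Lemma~\ref{thm:lemma1}. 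Your step instead sums $n_{m}$ from $0$ to $n_{m-1}$ by telescoping. That does produce a leading piece proportional to $S_{m-1}(a+2)$, to which the induction hypothesis applies, but the correction piece carries a length-three (and, on iteration, ever-lengthening) Pochhammer symbol and is \emph{not} of the form $S_{m'}(\cdot)$ for any $m'$; the induction hypothesis never applies to it, and you are driven into an uncontrolled cascade of auxiliary sums.

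There are concrete reasons to doubt that this cascade ``reassembles'' as claimed. First, the numerator exponents do not track the Pochhammer lengths: identity \eqref{eq:basic-id} needs the summand $q^{rn}/(q^{n}w;q)_{r+1}$ with matching $r$, whereas your fused terms carry numerator $q^{n}$ against symbols of growing length, so \eqref{eq:basic-id} will not ``dispose of any residual inner summations.'' Second, the outermost summation against the single factor $1/(1-q^{n_{0}+a+2m})$ is a Lambert-type series with no closed form --- it is precisely the origin of the series $\sum_{j}q^{(m+a)j}/(1-q^{j+m+1})$ on the right-hand side, not something that telescopes away. None of this proves your route impossible, but the recombination you defer to ``bookkeeping'' is the entire content of the proof and is not established. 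The repair is to take the innermost variable as the outer index and invoke the shift property above, which reduces the whole inductive step to a single application of Lemma~\ref{thm:lemma1}.
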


\begin{remark} Note that the LHS of (\ref{eq:denom-2-2--2-1}) can
be rewritten as
\begin{eqnarray*}
 &  & \hskip-1.5em\sum_{n_{0}=0}^{\infty}\sum_{n_{1}=0}^{\infty}\ldots\sum_{n_{m}=0}^{\infty}\\
 &  & \times\,\frac{q^{n_{0}+2n_{1}+\ldots+(m+1)n_{m}}}{(q^{n_{m}+a};q)_{2}(q^{n_{m-1}+n_{m}+a+2};q)_{2}\cdots(q^{n_{1}+\ldots+n_{m}+a+2m-2};q)_{2}(q^{n_{0}+n_{1}+\ldots+n_{m}+a+2m};q)_{1}}.
\end{eqnarray*}
To see it one can simply apply in this expression the substitution
\[
n_{j}+n_{j+1}+\ldots+n_{m}=n_{j}',\ 0\leq j\leq m.
\]
\end{remark}

\begin{proof} The claim is true for $m=0$ since
\[
\sum_{n=0}^{\infty}\frac{q^{n}}{1-q^{n+a}}=\sum_{j=0}^{\infty}\frac{q^{aj}}{1-q^{j+1}}.
\]
Let us denote (in this proof only) the LHS of (\ref{eq:denom-2-2--2-1})
as $T_{m}(a)$. Note that, for $\ell\in\mathbb{Z}_{+}$,
\begin{eqnarray*}
 &  & \sum_{\ell\leq n_{m}\leq n_{m-1}\leq\ldots\leq n_{0}<\infty}\frac{q^{n_{0}+\ldots+n_{m-1}+n_{m}}}{(q^{n_{m}+a};q)_{2}(q^{n_{m-1}+a+2};q)_{2}\cdots(q^{n_{1}+a+2m-2};q)_{2}(q^{n_{0}+a+2m};q)_{1}}\\
\noalign{\smallskip} &  & \qquad\quad=\,q^{(m+1)\ell}\,T_{m}(\ell+a).
\end{eqnarray*}
Thus we get, for $m\geq1$,
\begin{eqnarray*}
T_{m}(a) & = & \sum_{n=0}^{\infty}\frac{q^{n}}{\left(q^{n+a};q\right){}_{2}}\sum_{n\leq n_{m-1}\leq\ldots\leq n_{0}<\infty}q^{n_{0}+\ldots+n_{m-1}}\\
\noalign{\smallskip} &  & \hskip7em\times\,\left((q^{n_{m-1}+a+2};q)_{2}\,\cdots\,(q^{n_{1}+a+2m-2};q)_{2}(q^{n_{0}+a+2m};q)_{1}\right)^{-1}\\
 & = & \sum_{n=0}^{\infty}\frac{q^{(m+1)n}}{\left(q^{n+a};q\right){}_{2}}\,T_{m-1}(n+a+2).
\end{eqnarray*}
Hence in order to prove the formula by mathematical induction on $m$
it suffices to show that the RHS of (\ref{eq:denom-2-2--2-1}) satisfies
the same recurrence. Thus we have to verify that, for all $m\geq1$,
\begin{eqnarray*}
 &  & \frac{1}{(q;q)_{m}(q^{a};q)_{m}}\,\text{ }\sum_{j=0}^{\infty}\frac{q^{(m+a)j}}{1-q^{j+m+1}}\\
 &  & =\,\sum_{n=0}^{\infty}\frac{q^{(m+1)n}}{(q^{n+a};q)_{2}}\,\frac{1}{(q;q)_{m-1}(q^{n+a+2};q)_{m-1}}\,\sum_{j=0}^{\infty}\frac{q^{(m+n+a+1)j}}{1-q^{j+m}}.
\end{eqnarray*}
This equation can be simplified so that it takes the form
\[
\sum_{j=0}^{\infty}\frac{q^{(m+a)j}}{1-q^{j+m+1}}=(1-q^{m})(q^{a};q)_{m}\sum_{j=0}^{\infty}\frac{q^{(m+a+1)j}}{1-q^{j+m}}\,\sum_{n=0}^{\infty}\frac{q^{(m+j+1)n}}{(q^{n+a};q)_{m+1}}.
\]
But this is a consequence of Lemma \ref{thm:lemma1} for $w=q^{a}$.
\end{proof}

\begin{proposition}\label{thm:nodenom-j1-j0} Let $m\in\mathbb{Z}_{+}$
and $c_{0},c_{1},\ldots,c_{m}>0$. Then
\begin{eqnarray}
 &  & \hskip-1em\frac{1}{(1-q)^{m}}\sum_{0\leq j_{0}\leq j_{1}\leq\ldots\leq j_{m}<\infty}q^{c_{0}j_{0}+c_{1}j_{1}+\ldots+c_{m}j_{m}}(1-q^{j_{1}-j_{0}})(1-q^{j_{2}-j_{1}})\cdots(1-q^{j_{m}-j_{m-1}})\nonumber \\
\noalign{\smallskip} &  & \hskip-1em=\,\frac{q^{c_{1}+2c_{2}+\ldots+mc_{m}}}{(q^{c_{0}+c_{1}+c_{2}+\ldots+c_{m}};q)_{1}(q^{c_{1}+c_{2}+\ldots+c_{m}};q)_{2}(q^{c_{2}+\ldots+c_{m}};q)_{2}\cdots(q^{c_{m}};q)_{2}}.\label{eq:nodenom-j1-j0}
\end{eqnarray}
\end{proposition}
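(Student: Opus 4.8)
The plan is to remove the ordering constraint $0\le j_0\le j_1\le\ldots\le j_m$ by passing to gap variables, after which the multiple sum factorizes completely into elementary geometric series. Concretely, I would set $n_0:=j_0$ and $n_i:=j_i-j_{i-1}$ for $1\le i\le m$; this is a bijection of the index set onto $\mathbb{Z}_+^{m+1}$, with inverse $j_i=n_0+n_1+\ldots+n_i$. Writing $C_\ell:=c_\ell+c_{\ell+1}+\ldots+c_m$ for $0\le\ell\le m$, the linear form in the exponent becomes
\[
c_0 j_0+c_1 j_1+\ldots+c_m j_m=\sum_{\ell=0}^{m}C_\ell\,n_\ell,
\]
while the product of factors turns into $\prod_{i=1}^m(1-q^{n_i})$, since $j_i-j_{i-1}=n_i$.

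With these substitutions the $(m+1)$-fold sum separates into a product of one-dimensional sums, the summand being a product of functions each depending on a single $n_\ell$. Since every $c_i>0$ forces $C_\ell>0$, we have $0<q^{C_\ell}<1$, so all the series below converge absolutely and the factorization is legitimate. The $n_0$-sum is the plain geometric series $\sum_{n_0\ge0}q^{C_0 n_0}=(q^{C_0};q)_1^{-1}$, while for each $1\le i\le m$ the $n_i$-sum evaluates to
\[
\sum_{n_i=0}^{\infty}q^{C_i n_i}\bigl(1-q^{n_i}\bigr)=\frac{1}{1-q^{C_i}}-\frac{1}{1-q^{C_i+1}}=\frac{(1-q)\,q^{C_i}}{(q^{C_i};q)_2}.
\]
Multiplying these together, the $m$ factors of $(1-q)$ cancel the prefactor $(1-q)^{-m}$ on the left, yielding
\[
\frac{q^{\,C_1+C_2+\ldots+C_m}}{(q^{C_0};q)_1\,(q^{C_1};q)_2\,(q^{C_2};q)_2\cdots(q^{C_m};q)_2}.
\]

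It then remains only to match this against the right-hand side of (\ref{eq:nodenom-j1-j0}). The identities $C_0=c_0+\ldots+c_m$ and $C_i=c_i+\ldots+c_m$ make the denominators coincide termwise, and the numerator exponent is settled by the double-counting identity
\[
\sum_{i=1}^{m}C_i=\sum_{i=1}^{m}\sum_{\ell=i}^{m}c_\ell=\sum_{\ell=1}^{m}\ell\,c_\ell=c_1+2c_2+\ldots+m c_m,
\]
which is precisely the exponent in the numerator; the degenerate case $m=0$ is covered by the same computation, all the $(q^{C_i};q)_2$ factors and the product over $i$ being empty. I expect no genuine obstacle: unlike Proposition \ref{thm:denom-2-2--2-1}, this identity needs neither Lemma \ref{thm:lemma1} nor an induction, its whole content being the gap substitution together with the elementary evaluation of $\sum_n q^{C_i n}(1-q^{n})$. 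The only point requiring a little care is the bookkeeping of the exponent $\sum_{i=1}^m C_i$, which the reordering of the double sum resolves at once.
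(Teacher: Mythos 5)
Your argument is correct and complete, and it takes a genuinely different route from the paper. The paper proves the identity by induction on $m$: it substitutes only $j_i=j_i'+j_0$, sums the geometric series in $j_0$ to obtain the recursion
\[
\Psi_{m}(c_{0},\ldots,c_{m})=\frac{\Psi_{m-1}(c_{1},\ldots,c_{m})-\Psi_{m-1}(c_{1}+1,c_{2},\ldots,c_{m})}{(1-q)\,(1-q^{c_{0}+c_{1}+\ldots+c_{m}})}\,,
\]
and then checks that the right-hand side of (\ref{eq:nodenom-j1-j0}) satisfies the same recursion. You instead pass at once to the full set of gap variables $n_{0}=j_{0}$, $n_{i}=j_{i}-j_{i-1}$, under which the exponent becomes $\sum_{\ell}C_{\ell}n_{\ell}$ with $C_{\ell}=c_{\ell}+\ldots+c_{m}$ and the product becomes $\prod_{i=1}^{m}(1-q^{n_{i}})$, so the multiple sum factorizes completely into one-dimensional sums; each of these is elementary, and the exponent bookkeeping $\sum_{i=1}^{m}C_{i}=c_{1}+2c_{2}+\ldots+mc_{m}$ closes the computation. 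Your route is shorter and avoids induction altogether (the factorization is justified by Tonelli since all terms are non-negative, or by the absolute convergence you invoke); what the paper's recursive formulation buys is stylistic uniformity with the neighbouring Propositions \ref{thm:denom-2-2--2-1}, \ref{thm:nodenom-j0plus1} and \ref{thm:s1--sm-synchro}, where a genuine induction is needed because the summands there do not factor. Both proofs are valid.
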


\begin{proof} Let us denote (in this proof only) the LHS of (\ref{eq:nodenom-j1-j0})
as $\Psi_{m}(c_{0},c_{1},\ldots,c_{m})$. We proceed by mathematical
induction on $m$. For $m=0$ the equation is just the sum of a geometric
series. For $m\geq1$ we use the substitution
\[
j_{0}=j,\ j_{1}=j_{1}'+j,\,\ldots\,,j_{m}=j_{m}'+j,
\]
and get
\begin{eqnarray*}
 &  & \hskip-2.5em\Psi_{m}(c_{0},c_{1},\ldots,c_{m})\,=\,\frac{1}{(1-q)^{m}}\sum_{j=0}^{\infty}q^{\left(c_{0}+c_{1}+\ldots+c_{m}\right)j}\sum_{0\leq j_{1}\leq\ldots\leq j_{m}<\infty}q^{c_{1}j_{1}+\ldots+c_{m}j_{m}}\\
\noalign{\medskip} &  & \qquad\qquad\qquad\qquad\qquad\qquad\qquad\qquad\times\,(1-q^{j_{1}})(1-q^{j_{2}-j_{1}})\cdots(1-q^{j_{m}-j_{m-1}})\\
\noalign{\smallskip} &  & \hskip3em=\frac{1}{(1-q)\left(1-q^{c_{0}+c_{1}+\ldots+c_{m}}\right)}\big(\Psi_{m-1}(c_{1},\ldots,c_{m})-\Psi_{m-1}(c_{1}+1,\ldots,c_{m})\big).
\end{eqnarray*}
By the induction hypothesis, the last expression equals
\begin{eqnarray*}
 &  & \hskip-2em\frac{q^{c_{2}+\ldots+(m-1)c_{m}}}{(1-q)(1-q^{c_{0}+c_{1}+\ldots+c_{m}})(q^{c_{2}+\ldots+c_{m}};q)_{2}\cdots(q^{c_{m}};q)_{2}}\\
\noalign{\smallskip} &  & \qquad\qquad\qquad\qquad\qquad\qquad\qquad\qquad\times\bigg(\frac{1}{1-q^{c_{1}+c_{2}+\ldots+c_{m}}}-\frac{1}{1-q^{c_{1}+c_{2}+\ldots+c_{m}+1}}\bigg),
\end{eqnarray*}
and this is readily seen to be equal to the RHS of (\ref{eq:nodenom-j1-j0}).
\end{proof}

\begin{proposition}\label{thm:nodenom-j0plus1} Let $m\in\mathbb{Z}_{+}$
and $c_{0},c_{1},\ldots,c_{m}>0$. Then
\begin{eqnarray}
 &  & \hskip-2em\frac{1}{(1-q)^{m+1}}\sum_{0\leq j_{0}<j_{1}<\ldots<j_{m}<\infty}q^{c_{0}j_{0}+c_{1}j_{1}+\ldots+c_{m}j_{m}}\nonumber \\
\noalign{\smallskip} &  & \qquad\qquad\qquad\qquad\qquad\times\,(1-q^{j_{0}+1})(1-q^{j_{1}-j_{0}})(1-q^{j_{2}-j_{1}})\cdots(1-q^{j_{m}-j_{m-1}})\nonumber \\
\noalign{\smallskip} &  & \hskip-2em=\frac{q^{c_{1}+2c_{2}+\ldots+mc_{m}}}{(q^{c_{0}+c_{1}+c_{2}+\ldots+c_{m}};q)_{2}(q^{c_{1}+c_{2}+\ldots+c_{m}};q)_{2}\cdots(q^{c_{m}};q)_{2}}.\label{eq:nodenom-j0plus1}
\end{eqnarray}
\end{proposition}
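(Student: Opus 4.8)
The plan is to prove (\ref{eq:nodenom-j0plus1}) by induction on $m$, following closely the strategy used for Proposition \ref{thm:nodenom-j1-j0}. Denote the left-hand side of (\ref{eq:nodenom-j0plus1}) by $\Psi_m(c_0,\ldots,c_m)$ (in this proof only). For the base case $m=0$ the sum reduces to a pair of geometric series,
\[
\Psi_0(c_0)=\frac{1}{1-q}\sum_{j_0=0}^{\infty}q^{c_0 j_0}(1-q^{j_0+1})=\frac{1}{1-q}\left(\frac{1}{1-q^{c_0}}-\frac{q}{1-q^{c_0+1}}\right)=\frac{1}{(q^{c_0};q)_2},
\]
which matches the right-hand side (an empty numerator product).

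For the inductive step I would first apply the substitution $j_0=j$, $j_s=j_s'+j$ for $1\le s\le m$, exactly as in the proof of Proposition \ref{thm:nodenom-j1-j0}. Since the original inequalities are strict, this forces $1\le j_1'<j_2'<\ldots<j_m'$, while the factor $(1-q^{j_0+1})$ becomes $(1-q^{j+1})$, the factor $(1-q^{j_1-j_0})$ becomes $(1-q^{j_1'})$, and the remaining factors are unchanged. The exponent splits as $\sum_s c_s j_s=(c_0+\ldots+c_m)j+\sum_{s\ge1}c_s j_s'$, so the $j$-summation factors out as the geometric sum already evaluated in the base case,
\[
\sum_{j=0}^{\infty}q^{(c_0+\ldots+c_m)j}(1-q^{j+1})=\frac{1-q}{(q^{c_0+\ldots+c_m};q)_2}.
\]

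What remains is the inner sum over $1\le j_1'<\ldots<j_m'$ carrying the weight $(1-q^{j_1'})\prod_{s=2}^m(1-q^{j_s'-j_{s-1}'})$. A second shift $j_s'=i_{s-1}+1$ ($1\le s\le m$) turns this into a sum over $0\le i_0<\ldots<i_{m-1}$ with weight $(1-q^{i_0+1})\prod(1-q^{i_s-i_{s-1}})$, producing an overall factor $q^{c_1+\ldots+c_m}$ from the exponent shift and reproducing exactly $(1-q)^m\,\Psi_{m-1}(c_1,\ldots,c_m)$. Collecting the powers of $1-q$ (namely $(1-q)^{-(m+1)}\cdot(1-q)\cdot(1-q)^m=1$) yields the clean recurrence
\[
\Psi_m(c_0,\ldots,c_m)=\frac{q^{c_1+\ldots+c_m}}{(q^{c_0+\ldots+c_m};q)_2}\,\Psi_{m-1}(c_1,\ldots,c_m),
\]
and a direct check confirms that the closed form in (\ref{eq:nodenom-j0plus1}) satisfies it: the prefactor supplies the missing denominator factor $(q^{c_0+\ldots+c_m};q)_2$, while the two numerator exponents combine as $(c_1+\ldots+c_m)+(c_2+2c_3+\ldots+(m-1)c_m)=c_1+2c_2+\ldots+mc_m$.

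The computations are all elementary; the main obstacle is the bookkeeping through the two successive index substitutions, in particular making sure the strict inequalities and the boundary factor $(1-q^{j_0+1})$ are tracked correctly so that the inner sum is recognized as $\Psi_{m-1}$ itself rather than a shifted variant. As an aside, a comparison of the two closed forms reveals that the present sum equals $(1-q^{c_0+\ldots+c_m+1})^{-1}$ times the left-hand side of (\ref{eq:nodenom-j1-j0}); establishing this single factor directly would give an alternative derivation based on Proposition \ref{thm:nodenom-j1-j0}.
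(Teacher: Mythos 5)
Your proof is correct and follows essentially the same route as the paper: induction on $m$ with the shift $j_0=j$, $j_s=j_s'+j+1$ (which you merely split into two successive substitutions), leading to the identical recurrence $\Psi_m(c_0,\ldots,c_m)=q^{c_1+\cdots+c_m}(q^{c_0+\cdots+c_m};q)_2^{-1}\,\Psi_{m-1}(c_1,\ldots,c_m)$. The bookkeeping of the strict inequalities, the boundary factor $(1-q^{j_0+1})$, and the powers of $1-q$ all checks out, so nothing further is needed.
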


\begin{proof} The proof is similar to that of Proposition \ref{thm:nodenom-j1-j0}.
We again denote the LHS of (\ref{eq:nodenom-j0plus1}) as $\Psi_{m}(c_{0},c_{1},\ldots,c_{m})$
and proceed by mathematical induction on $m$. Verification of the
equation for $m=0$ is elementary. For $m\geq1$ we use the substitution
\[
j_{0}=j,\ j_{1}=j_{1}'+j+1,\ldots,\ j_{m}=j_{m}'+j+1,
\]
and get
\begin{eqnarray*}
 &  & \hskip-2em\Psi_{m}(c_{0},c_{1},\ldots,c_{m})\,=\,\frac{q^{c_{1}+\ldots+c_{m}}}{(1-q)^{m+1}}\,\sum_{j=0}^{\infty}q^{\left(c_{0}+c_{1}+\ldots+c_{m}\right)j}(1-q^{j+1})\\
\noalign{\smallskip} &  & \qquad\qquad\quad\times\,\sum_{0\leq j_{1}\leq\ldots\leq j_{m}<\infty}q^{c_{1}j_{1}+\ldots+c_{m}j_{m}}\,(1-q^{j_{1}+1})(1-q^{j_{2}-j_{1}})\cdots(1-q^{j_{m}-j_{m-1}})\\
\noalign{\smallskip} &  & \qquad\qquad\qquad=\,\frac{q^{c_{1}+\ldots+c_{m}}}{(q^{c_{0}+c_{1}+\ldots+c_{m}};q)_{2}}\,\Psi_{m-1}(c_{1},\ldots,c_{m}).
\end{eqnarray*}
From here the induction step readily follows. \end{proof}

\begin{proposition}\label{thm:s1--sm-synchro} Let $m\in\mathbb{N}$,
$s_{1},\ldots,s_{m}\in\mathbb{N}$, and $a>0$. Then
\begin{eqnarray}
 &  & \hskip-1em\sum_{0\leq n_{m}\leq\ldots\leq n_{2}\leq n_{1}<\infty}q^{s_{1}n_{1}+\ldots+s_{m}n_{m}}\bigg((q^{n_{m}+a};q)_{s_{m}+1}(q^{n_{m-1}+s_{m}+a+1};q)_{s_{m-1}+1}\nonumber \\
 &  & \qquad\qquad\times\,(q^{n_{m-2}+s_{m-1}+s_{m}+a+2};q)_{s_{m-2}+1}\cdots\text{ }(q^{n_{1}+s_{2}+\ldots+s_{m}+a+m-1};q)_{s_{1}+1}\bigg)^{-1}\nonumber \\
 & = & \frac{1}{(1-q^{s_{1}})(1-q^{s_{1}+s_{2}})\cdots(1-q^{s_{1}+\ldots+s_{m}})(q^{a};q)_{s_{1}+\ldots+s_{m}}}.\label{eq:s1--sm-synchro}
\end{eqnarray}
\end{proposition}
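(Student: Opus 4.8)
The plan is to prove
Proposition~\ref{thm:s1--sm-synchro} by mathematical induction on $m$,
following the same strategy that was successful in Propositions
\ref{thm:denom-2-2--2-1}, \ref{thm:nodenom-j1-j0} and
\ref{thm:nodenom-j0plus1}. The base case $m=1$ reduces to the
single-index identity
\[
\sum_{n=0}^{\infty}\frac{q^{s_{1}n}}{(q^{n+a};q)_{s_{1}+1}}=\frac{1}{(1-q^{s_{1}})(q^{a};q)_{s_{1}}},
\]
which is precisely the basic identity (\ref{eq:basic-id}) after the
substitution $r=s_{1}$, $w=q^{a}$. So the real work is in the
induction step.

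For the induction step I would isolate the innermost summation variable
$n_{1}$, which ranges over $n_{2}\leq n_{1}<\infty$ and carries the
factor $q^{s_{1}n_{1}}$ together with the single Pochhammer block
$(q^{n_{1}+s_{2}+\ldots+s_{m}+a+m-1};q)_{s_{1}+1}$ in the denominator.
First I would split off the $n_{1}$-sum from the rest of the nested
sum. Because the constraint on $n_{1}$ is $n_{1}\geq n_{2}$, I would
substitute $n_{1}=n_{1}'+n_{2}$ to reduce the inner sum to a sum
starting from zero; this produces an extra factor $q^{s_{1}n_{2}}$
that merges with the existing $q^{s_{2}n_{2}}$ to give $q^{(s_{1}+s_{2})n_{2}}$,
and it shifts the base exponent of the innermost Pochhammer symbol so
that it matches the pattern required by (\ref{eq:basic-id}). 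Evaluating
the resulting geometric-type inner sum by the basic identity
(\ref{eq:basic-id}) yields a factor $1/\big((1-q^{s_{1}})(q^{\,n_{2}+s_{2}+\ldots+s_{m}+a+m-1};q)_{s_{1}}\big)$.
The point of the careful exponent bookkeeping is that this leftover
Pochhammer symbol of length $s_{1}$ combines with the next block
$(q^{n_{2}+s_{3}+\ldots+s_{m}+a+m-2};q)_{s_{2}+1}$ by the concatenation
rule $(x;q)_{p}(xq^{p};q)_{r}=(x;q)_{p+r}$ into a single block of
length $s_{1}+s_{2}+1$, with the shifted argument $s_{1}$ now folded
into the first entry.

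After this merge the remaining $(m-1)$-fold sum over $0\leq n_{m}\leq\ldots\leq n_{2}<\infty$
has exactly the form of the left-hand side of
(\ref{eq:s1--sm-synchro}) with $m$ replaced by $m-1$, with the
renamed parameters $s_{1}'=s_{1}+s_{2}$, $s_{2}'=s_{3},\ldots,s_{m-1}'=s_{m}$
and the same $a$. Applying the induction hypothesis to this
$(m-1)$-fold sum gives a product
\[
\frac{1}{(1-q^{s_{1}+s_{2}})(1-q^{s_{1}+s_{2}+s_{3}})\cdots(1-q^{s_{1}+\ldots+s_{m}})(q^{a};q)_{s_{1}+\ldots+s_{m}}},
\]
and multiplying by the factor $1/(1-q^{s_{1}})$ extracted from the
$n_{1}$-sum reproduces exactly the right-hand side of
(\ref{eq:s1--sm-synchro}).

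The main obstacle, and the step I would scrutinize most carefully, is
the exponent bookkeeping in the inner summation: one must check that
after the shift $n_{1}=n_{1}'+n_{2}$ the base of the innermost
Pochhammer symbol is $q^{\,n_{1}'+n_{2}+s_{2}+\ldots+s_{m}+a+m-1}$, so
that (\ref{eq:basic-id}) applies with the correct value of $r=s_{1}$
and leaves behind a Pochhammer block whose argument
$q^{\,n_{2}+s_{2}+\ldots+s_{m}+a+m-1}$ is precisely $q^{s_{1}}$ times
the argument needed for the Pochhammer concatenation with the next
block to succeed. If the shift of the additive constant ($m-1$ versus
$m-2$) and the telescoping of the $s$-sums in the exponents are
tracked correctly, the merge is forced and the reduction to the
$(m-1)$-case is automatic; all the other manipulations are routine
geometric-sum and Pochhammer-symbol algebra.
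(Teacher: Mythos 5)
Your proof is correct, but it runs the induction in the opposite direction from the paper. You strip off the innermost (largest) index $n_{1}$, apply the basic identity (\ref{eq:basic-id}) with $r=s_{1}$ and $w=q^{n_{2}+s_{2}+\ldots+s_{m}+a+m-1}$ immediately to extract the factor $1/(1-q^{s_{1}})$, merge the leftover length-$s_{1}$ Pochhammer block into the next one, and recurse to $S_{m-1}(s_{1}+s_{2},s_{3},\ldots,s_{m};a)$ with the \emph{same} base parameter $a$ but merged $s$-parameters. The paper instead strips off the outermost (smallest) index $n_{m}$, shifts all remaining variables by $n_{m}=n$, applies the induction hypothesis first to obtain $\sum_{n}q^{(s_{1}+\ldots+s_{m})n}(q^{n+a};q)_{s_{m}+1}^{-1}\,S_{m-1}(s_{1},\ldots,s_{m-1};n+a+s_{m}+1)$ --- i.e.\ it keeps the $s$-parameters and shifts $a$ --- and only invokes (\ref{eq:basic-id}), with $r=s_{1}+\ldots+s_{m}$, at the very end after concatenating $(q^{n+a};q)_{s_{m}+1}(q^{n+a+s_{m}+1};q)_{s_{1}+\ldots+s_{m-1}}=(q^{n+a};q)_{s_{1}+\ldots+s_{m}+1}$. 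Both routes use exactly the same two ingredients (the basic identity and Pochhammer concatenation); yours produces the factors $(1-q^{s_{1}})^{-1},(1-q^{s_{1}+s_{2}})^{-1},\ldots$ of the answer one at a time, which is arguably slightly more transparent, while the paper's keeps the recursion structurally identical to the one used in Proposition \ref{thm:denom-2-2--2-1}. I verified your exponent bookkeeping: after $n_{1}=n_{1}'+n_{2}$ the residual block is $(q^{n_{2}+s_{2}+\ldots+s_{m}+a+m-1};q)_{s_{1}}$, whose argument equals $q^{s_{2}+1}$ times that of the adjacent block $(q^{n_{2}+s_{3}+\ldots+s_{m}+a+m-2};q)_{s_{2}+1}$, so the concatenation $(x;q)_{s_{2}+1}(xq^{s_{2}+1};q)_{s_{1}}=(x;q)_{s_{1}+s_{2}+1}$ applies and the reduced sum is exactly the $(m-1)$-instance with parameters $(s_{1}+s_{2},s_{3},\ldots,s_{m};a)$; your prose says ``$q^{s_{1}}$ times the argument needed,'' which should read $q^{s_{2}+1}$, but this slip is immaterial since the displayed exponents are right and the induction hypothesis is quantified over all $s_{i}\in\mathbb{N}$, so $s_{1}+s_{2}$ is admissible.
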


\begin{remark}\label{rem:s1--sm-synchro-part} In particular,
\begin{equation}
\sum_{0\leq n_{m}\leq n_{m-1}\leq\ldots\leq n_{1}<\infty}\frac{q^{n_{1}+\ldots+n_{m-1}+n_{m}}}{(q^{n_{m}+2};q)_{2}(q^{n_{m-1}+4};q)_{2}\cdots(q^{n_{1}+2m};q)_{2}}=\frac{1}{(q;q)_{m}(q^{2};q)_{m}}.\label{eq:s1--sm-synchro-part}
\end{equation}
\end{remark}

\begin{proof} Let us denote the LHS of (\ref{eq:s1--sm-synchro})
as $S_{m}(s_{1},\ldots,s_{m};a)$. We proceed by mathematical induction
on $m$. For $m=1$ the equation reduces to (\ref{eq:basic-id}).
For $m\geq2$ we have
\begin{eqnarray*}
 &  & \hskip-1.4emS_{m}(s_{1},\ldots,s_{m};a)\\
 &  & \hskip-1.4em=\,\sum_{n=0}^{\infty}\frac{q^{(s_{1}+\ldots+s_{m-1}+s_{m})n}}{(q^{n+a};q)_{s_{m}+1}}\Bigg(\sum_{0\leq n_{m-1}\leq\ldots\leq n_{2}\leq n_{1}<\infty}q^{s_{1}n_{1}+\ldots+s_{m-1}n_{m-1}}\\
\noalign{\bigskip} &  & \qquad\qquad\qquad\qquad\quad\ \times\big((q^{n_{m-1}+s_{m}+n+a+1};q)_{s_{m-1}+1}(q^{n_{m-2}+s_{m-1}+s_{m}+n+a+2};q)_{s_{m-2}+1}\\
 &  & \qquad\qquad\qquad\qquad\qquad\quad\times\,\cdots(q^{n_{1}+s_{2}+\ldots+s_{m}+n+a+m-1};q)_{s_{1}+1}\big)^{-1}\Bigg)\\
 &  & \hskip-1.4em=\,\sum_{n=0}^{\infty}\frac{q^{(s_{1}+\ldots+s_{m-1}+s_{m})n}}{(q^{n+a};q)_{s_{m}+1}}\,S_{m-1}(s_{1},\ldots,s_{m-1};n+a+s_{m}+1).
\end{eqnarray*}
By the induction hypothesis and again by equation (\ref{eq:basic-id})
the last expression equals
\begin{eqnarray*}
 &  & \hskip-1em\frac{1}{(1-q^{s_{1}})(1-q^{s_{1}+s_{2}})\cdots(1-q^{s_{1}+\ldots+s_{m-1}})}\,\sum_{n=0}^{\infty}\frac{q^{(s_{1}+\ldots+s_{m-1}+s_{m})n}}{(q^{n+a};q)_{s_{1}+\ldots+s_{m-1}+s_{m}+1}}\\
\noalign{\smallskip} &  & \hskip-1em=\,\frac{1}{(1-q^{s_{1}})(1-q^{s_{1}+s_{2}})\cdots(1-q^{s_{1}+\ldots+s_{m-1}})(1-q^{s_{1}+\ldots+s_{m-1}+s_{m}})(q^{a};q)_{s_{1}+\ldots+s_{m-1}+s_{m}}}.
\end{eqnarray*}
This concludes the proof.. \end{proof}

\section{Modified $q$-Laguerre polynomials\label{sec:qLaguerre}}

We are going to consider a particular case of (\ref{eq:alpha-beta-n})
when
\begin{equation}
a_{n}=q^{-2(n+1)}(1-q^{n+1}),\ k=q^{1/2},\label{eq:an-k-part}
\end{equation}
and so
\begin{eqnarray}
\alpha_{n} & = & q^{-2n-3/2}(1-q^{n+1}),\label{eq:alpha-beta-part}\\
\beta_{n} & = & q^{-2n-2}(1-q^{n+1})+q^{-2n+1}(1-q^{n})=q^{-2(n+1)}\big(1-q^{n+1}+q^{3}(1-q^{n})\big).\nonumber 
\end{eqnarray}
The corresponding orthonormal polynomial sequence is again denoted
as $\{P_{n}(x)\}$.

Recall that the $q$-Laguerre polynomials are defined as follows \cite{Moak,KoekoekLeskySwarttouw}
\[
L_{n}^{(a)}(x;q):=\frac{(q^{a+1};q)_{n}}{(q;q)_{n}}\,\,_{1}\phi_{1}(q^{-n};q^{a+1};q,-q^{n+a+1}x),
\]
particularly,
\begin{equation}
L_{n}^{(0)}(x;q):=\,_{1}\phi_{1}(q^{-n};q;q,-q^{n+1}x),\ L_{n}^{(1)}(x;q):=\frac{1-q^{n+1}}{1-q}\,\,_{1}\phi_{1}(q^{-n};q^{2};q,-q^{n+2}x).\label{eq:L0-L1-phi11}
\end{equation}
It is known that
\[
\underset{q\to1}{\text{lim}}\,L_{n}^{(a)}\big((1-q)x;q\big)=L_{n}^{(a)}(x).
\]
Recall, too, that the Jackson $q$-Bessel functions of the second
kind are defined as \cite{Jackson-a,Jackson-b,KoekoekLeskySwarttouw}
\begin{equation}
J_{\nu}^{(2)}(x;q):=\frac{(q^{\nu+1};q)_{\infty}}{(q;q)_{\infty}}\left(\frac{x}{2}\right)^{\!\nu}\,_{0}\phi_{1}\bigg(\,;q^{\nu+1};q,-\frac{q^{\nu+1}x^{2}}{4}\bigg),\label{eq:q-Jnu-phi01}
\end{equation}
particularly,
\begin{equation}
J_{1}^{(2)}(x;q):=\frac{x}{2(1-q)}\,\,_{0}\phi_{1}\bigg(;q^{2};q,-\frac{q^{2}x^{2}}{4}\bigg).\label{eq:q-J1-phi01}
\end{equation}
It is known that all the roots of $J_{\nu}^{(2)}(x;q)$ for $\nu>-1$
are real and simple with the only cluster point at infinity \cite{Ismail}.

Here we propose a modification of the $q$-Laguerre polynomials though
this is done for the parameter $a=0$ only.

\begin{definition} \emph{The modified $q$-Laguerre polynomials}
are introduced by the equation
\begin{equation}
\tilde{L}_{n}(x;q):=q^{n+1}L_{n}^{(0)}(x;q)+(1-q)L_{n}^{(1)}(x;q),\ n\in\mathbb{Z}_{+}.\label{eq:Lmod-def}
\end{equation}
\end{definition}

Comparing (\ref{eq:Lmod-def}) to (\ref{eq:L0-L1-phi11}) one finds
that, for all $n\geq0$, $\tilde{L}_{n}(0;q)=1$. Clearly,
\[
\underset{q\to1}{\text{lim}}\,\tilde{L}_{n}\big((1-q)x;q\big)=L_{n}^{(0)}(x).
\]

\begin{lemma} It holds true that
\begin{equation}
q^{n}L_{n}^{(0)}(x;q)+L_{n-1}^{(1)}(x;q)-L_{n}^{(1)}(x;q)=0,\ \forall n\in\mathbb{Z}_{+}.\label{eq:L0-L1-id}
\end{equation}
\end{lemma}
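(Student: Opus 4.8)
The identity to be proven is
\[
q^{n}L_{n}^{(0)}(x;q)+L_{n-1}^{(1)}(x;q)-L_{n}^{(1)}(x;q)=0,
\]
and the plan is to verify it directly by expanding each term as a finite $q$-series in powers of $x$ and comparing coefficients. Using the definitions in (\ref{eq:L0-L1-phi11}), I would write
\[
L_{n}^{(0)}(x;q)=\sum_{s=0}^{n}\frac{(q^{-n};q)_{s}}{(q;q)_{s}^{\,2}}\,(-q^{n+1})^{s}q^{\binom{s}{2}}x^{s},
\qquad
L_{n}^{(1)}(x;q)=\frac{1-q^{n+1}}{1-q}\sum_{s=0}^{n}\frac{(q^{-n};q)_{s}}{(q;q)_{s}(q^{2};q)_{s}}\,(-q^{n+2})^{s}q^{\binom{s}{2}}x^{s},
\]
where I have unwound the $_1\phi_1$ notation using the standard convention $_1\phi_1(q^{-n};c;q,z)=\sum_s \frac{(q^{-n};q)_s}{(c;q)_s(q;q)_s}(-1)^s q^{\binom{s}{2}}z^s$. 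The problem then reduces to proving, for each fixed power $s$ with $0\le s\le n$, that the corresponding coefficients satisfy the claimed linear relation.

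The main computation is therefore the coefficient identity at $x^{s}$. After factoring out the common piece $\dfrac{(q^{-n};q)_{s}}{(q;q)_{s}}(-1)^{s}q^{\binom{s}{2}}q^{ns}x^{s}$, I expect the relation to collapse to an elementary identity among the $q$-factorials $(q;q)_s$ and $(q^2;q)_s$ together with the scalar prefactors $q^n$, the shift $n\mapsto n-1$ affecting $(q^{-n};q)_s$ and $(1-q^{n+1})$, and the powers $q^{(n+1)s}$ versus $q^{(n+2)s}$. The key simplifications I anticipate using are $(q^{-(n-1)};q)_s=\dfrac{1-q^{s-n}}{1-q^{-n}}(q^{-n};q)_s$ (relating the $n-1$ and $n$ shifted factorials), and the factorization $(q;q)_{s}(1-q^{s+1})\cdots$ versus $(q^2;q)_s$, namely $(q^2;q)_s=\dfrac{(q;q)_{s+1}}{1-q}$. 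I would substitute these, clear denominators, and check that the remaining polynomial identity in $q^n$ and $q^s$ is an algebraic tautology.

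The hard part will be bookkeeping of the $q$-powers and the shifted Pochhammer symbols rather than any conceptual difficulty: one must track carefully how the factor $q^n$ on the first term, the index shift in $L_{n-1}^{(1)}$, and the differing arguments $-q^{n+1}x$ versus $-q^{n+2}x$ interact, and confirm the cancellation holds uniformly in $s$ (including the boundary cases $s=0$, where it reduces to $q^n+1-1=q^n$ times the constant term—here I should double-check the normalization, since $L_{n}^{(1)}$ and $L_{n-1}^{(1)}$ have different constant terms $\frac{1-q^{n+1}}{1-q}$ and $\frac{1-q^{n}}{1-q}$, whose difference is $\frac{q^n(1-q)}{1-q}=q^n$, matching $q^nL_n^{(0)}(0;q)=q^n$, and the case $s=n$, where $(q^{-(n-1)};q)_n$ vanishes so only two of the three terms survive). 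An alternative, perhaps cleaner, route is to recognize (\ref{eq:L0-L1-id}) as an instance of a known contiguous or structure relation for $q$-Laguerre polynomials and to invoke the three-term-type recurrence relating $L_n^{(0)}$, $L_n^{(1)}$, and $L_{n-1}^{(1)}$ found in the standard references \cite{KoekoekLeskySwarttouw}; I would keep this in reserve as a shortcut, but I expect the direct coefficient comparison to be self-contained and reliable.
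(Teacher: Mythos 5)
Your proposal matches the paper's proof, which is exactly the one-line argument "expand the ${}_1\phi_1$'s via (\ref{eq:L0-L1-phi11}) into power series in $x$ and compare coefficients"; the coefficient identity at $x^s$ does reduce, via $(1-q^n)(q^{1-n};q)_s=-q^n(1-q^{s-n})(q^{-n};q)_s$ and $(q^2;q)_s=(q;q)_{s+1}/(1-q)$, to the tautology $q^n(1-q^{s+1})-q^n(1-q^{s-n})-q^s(1-q^{n+1})=0$. (The stray $(-1)^s$ in your written expansions — the convention's $(-1)^sq^{\binom{s}{2}}$ already absorbs the sign of $z=-q^{n+1}x$ — is uniform across all three terms and so does not affect the vanishing.)
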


\begin{proof} Verification is straightforward. It is sufficient to
use definition (\ref{eq:L0-L1-phi11}) and expand the basic hypergeometric
functions in the resulting expression into power series in $x$. \end{proof}

\begin{proposition}\label{thm:Lmod-recurr} The modified $q$-Laguerre
polynomials obey the three-term recurrence
\begin{eqnarray}
 &  & (1-q^{n+1})\tilde{L}_{n+1}(x;q)-\big(1-q^{n+1}+q^{3}(1-q^{n})\big)\tilde{L}_{n}(x;q)+q^{3}(1-q^{n})\tilde{L}_{n-1}(x;q)\nonumber \\
 &  & =\,-xq^{2n+2}\tilde{L}_{n}(x;q),\ \forall n\in\mathbb{Z}_{+}\label{eq:Lmod-recurr}
\end{eqnarray}
($\tilde{L}_{-1}(x;q)$ is undefined), and $\tilde{L}_{0}(x;q)=1$.
\end{proposition}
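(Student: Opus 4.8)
The plan is to reduce the three-term recurrence for the modified $q$-Laguerre polynomials to the known three-term recurrences for the ordinary $q$-Laguerre polynomials $L_n^{(0)}$ and $L_n^{(1)}$, using the defining combination (\ref{eq:Lmod-def}) together with the auxiliary identity (\ref{eq:L0-L1-id}). The normalization $\tilde{L}_0(x;q)=1$ is immediate from (\ref{eq:Lmod-def}) and the fact that $L_0^{(0)}(x;q)=1$, $L_0^{(1)}(x;q)=\tfrac{1}{1-q}(1-q)=1$ at $n=0$, so that $\tilde{L}_0(x;q)=q\cdot 1+(1-q)\cdot 1=1$. The substance of the proof is the recurrence itself.

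First I would record the standard three-term recurrences satisfied by $L_n^{(0)}(x;q)$ and $L_n^{(1)}(x;q)$; these are available from the defining $_1\phi_1$ representations in (\ref{eq:L0-L1-phi11}) or may be quoted from \cite{KoekoekLeskySwarttouw}. Each $q$-Laguerre family $L_n^{(a)}$ satisfies a recurrence of the form $-xL_n^{(a)}=A_n L_{n+1}^{(a)}+B_n L_n^{(a)}+C_n L_{n-1}^{(a)}$ with explicit coefficients in $q^n$. The idea is then to form the linear combination $q^{n+1}\cdot(\text{recurrence for }L^{(0)})+(1-q)\cdot(\text{recurrence for }L^{(1)})$, taking care that the index shift $n\mapsto n+1$ in the $\tilde L_{n+1}$ term changes the prefactor $q^{n+1}$ to $q^{n+2}$, and likewise $n\mapsto n-1$ changes it to $q^n$. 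Consequently the naive combination does \emph{not} immediately assemble into $\tilde L_{n\pm1}$, and the mismatched $q$-power factors must be reconciled.

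This reconciliation is exactly where the identity (\ref{eq:L0-L1-id}), namely $q^nL_n^{(0)}+L_{n-1}^{(1)}-L_n^{(1)}=0$, does the essential work: it lets one trade a stray $L_n^{(0)}$ term (carrying the wrong power of $q$) against a difference $L_n^{(1)}-L_{n-1}^{(1)}$, thereby rebalancing the coefficients so that the $L^{(0)}$ and $L^{(1)}$ pieces regroup precisely into the combinations $q^{n+2}L_{n+1}^{(0)}+(1-q)L_{n+1}^{(1)}=\tilde L_{n+1}$, $q^{n+1}L_n^{(0)}+(1-q)L_n^{(1)}=\tilde L_n$, and $q^nL_{n-1}^{(0)}+(1-q)L_{n-1}^{(1)}=\tilde L_{n-1}$. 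I expect the main obstacle to be purely bookkeeping: keeping the three different $q$-power prefactors aligned while substituting (\ref{eq:L0-L1-id}) in the right places, and verifying that the leftover coefficients collapse to exactly $(1-q^{n+1})$, $-\big(1-q^{n+1}+q^3(1-q^n)\big)$, $q^3(1-q^n)$ and $-xq^{2n+2}$ as stated in (\ref{eq:Lmod-recurr}).

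As an alternative that sidesteps quoting the individual $q$-Laguerre recurrences, I would note that (\ref{eq:Lmod-recurr}) is precisely the recurrence (\ref{eq:Pn-recurr}) with the coefficients (\ref{eq:alpha-beta-part}) after clearing the common factor $q^{-2(n+1)}$; indeed $\alpha_n=q^{-2n-3/2}(1-q^{n+1})$ and $\beta_n=q^{-2(n+1)}\big(1-q^{n+1}+q^3(1-q^n)\big)$. Thus a second route is to show directly that $\tilde L_n(x;q)$ coincides with the polynomial $(-1)^nk^nP_n(x)$ of (\ref{eq:Pn}) under the specialization (\ref{eq:an-k-part}), $k=q^{1/2}$, by expanding both sides as power series in $x$ and matching coefficients via the summation identities of Section~\ref{sec:Auxiliary}; the recurrence then follows from Proposition~\ref{thm:Pn}. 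Either way, the verification is elementary once the correct identity is invoked, and I would present the first route as the cleaner of the two.
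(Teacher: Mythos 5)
Your proposal is correct and follows essentially the same route as the paper: the author likewise quotes the standard three-term recurrences for $L_n^{(0)}$ and $L_n^{(1)}$, uses the identity (\ref{eq:L0-L1-id}) to rewrite the $L^{(1)}$ recurrence so that the $q$-power prefactors align, and then takes the linear combination dictated by (\ref{eq:Lmod-def}). The alternative route you sketch (matching $\tilde L_n$ against $(-1)^nk^nP_n$ via the explicit formula) is not used by the paper for this proposition, and your preference for the first route matches the author's choice.
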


\begin{proof} First note that $\tilde{L}_{0}(x;q)=\tilde{L}_{0}(0;q)=1$.

The $q$-Laguerre polynomials are known to obey the three-term recurrence
\cite{KoekoekLeskySwarttouw}
\begin{eqnarray}
-q^{2n+1}xL_{n}^{(0)}(x;q) & = & (1-q^{n+1})\big(L_{n+1}^{(0)}(x;q)-L_{n}^{(0)}(x;q)\big)\nonumber \\
 &  & -\,q(1-q^{n})\big(L_{n}^{(0)}(x;q)-L_{n-1}^{(0)}(x;q)\big)\label{eq:L0-recurr}
\end{eqnarray}
and
\begin{eqnarray}
-q^{2n+2}xL_{n}^{(1)}(x;q) & = & (1-q^{n+1})\big(L_{n+1}^{(1)}(x;q)-L_{n}^{(1)}(x;q)\big)\label{eq:L1-recurr}\\
 &  & -\,q(1-q^{n+1})\big(L_{n}^{(1)}(x;q)-L_{n-1}^{(1)}(x;q)\big).\nonumber 
\end{eqnarray}
Using (\ref{eq:L0-L1-id}) one can rewrite (\ref{eq:L1-recurr}) as
\begin{eqnarray}
\hskip-1em-q^{2n+2}xL_{n}^{(1)}(x;q) & = & (1-q^{n+1})\big(L_{n+1}^{(1)}(x;q)-L_{n}^{(1)}(x;q)-q^{n+1}L_{n}^{(0)}(x;q)\big)\nonumber \\
\hskip-1em &  & -\,q^{3}(1-q^{n})\big(L_{n}^{(1)}(x;q)-L_{n-1}^{(1)}(x;q)-q^{n}L_{n}^{(0)}(x;q)\big).\label{eq:L1-recurr-2}
\end{eqnarray}
Taking an appropriate linear combination of (\ref{eq:L0-recurr})
and (\ref{eq:L1-recurr-2}) and using the defining equation (\ref{eq:Lmod-def})
one obtains (\ref{eq:Lmod-recurr}). \end{proof}

\begin{proposition}\label{thm:mod-q-Laguerre} Let $\{P_{n}(x);\,n\geq0\}$
be the orthonormal polynomial sequence defined in (\ref{eq:Pn-recurr}),
with $\alpha_{n}$, $\beta_{n}$ given in (\ref{eq:alpha-beta-part}).
Then
\begin{equation}
P_{n}(x)=(-1)^{n}q^{-n/2}\tilde{L}_{n}(x;q),\ \forall n\in\mathbb{Z}_{+}.\label{eq:Pn-qLaguerre}
\end{equation}
The Hamburger moment problem for $\{P_{n}(x)\}$ is determinate, the
corresponding orthogonality measure $\mu$ (normalized as a probability
measure) is supported on the roots of the function
\begin{equation}
\mathfrak{F}(z)=\frac{1-q}{\sqrt{z}}\,J_{1}^{(2)}(2\sqrt{z};q),\label{eq:Fgt-part-J1}
\end{equation}
with all the roots being positive. The masses of the roots satisfy
\[
\forall\lambda\in\mathfrak{F}^{-1}(\{0\}),\ \mu(\{\lambda\})=-\frac{\mathfrak{W}(\lambda)}{\mathfrak{F}'(\lambda)}
\]
where
\begin{equation}
\mathfrak{\mathfrak{W}}(z)=\frac{(1-q)q}{z}\,J_{2}^{(2)}(2\,\sqrt{qz};q)+\sum_{m=0}^{\infty}\frac{q^{(m+3)(m+1)}\,_{2}\phi_{1}(q^{m+2},q;q^{m+3};q,q^{m+2})}{(q;q)_{m}(q^{2};q)_{m+1}}\,(-z)^{m}.\label{eq:Wgt-part}
\end{equation}
\end{proposition}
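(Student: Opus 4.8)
The plan is to establish the four claims of Proposition~\ref{thm:mod-q-Laguerre} by specializing the general theory of Sections~\ref{sec:Preliminaries}--\ref{sec:Proofs} to the sequence (\ref{eq:an-k-part}) and then invoking the summation identities of Section~\ref{sec:Auxiliary} to identify the resulting series with $q$-Bessel functions. First I would verify (\ref{eq:Pn-qLaguerre}). Both sides satisfy a three-term recurrence: by Proposition~\ref{thm:Lmod-recurr} the polynomials $\tilde L_n(x;q)$ obey (\ref{eq:Lmod-recurr}), while $P_n(x)$ obeys (\ref{eq:Pn-recurr}) with $\alpha_n,\beta_n$ from (\ref{eq:alpha-beta-part}). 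Substituting the ansatz $P_n(x)=(-1)^nq^{-n/2}\tilde L_n(x;q)$ into (\ref{eq:Pn-recurr}) and clearing signs and powers of $q^{1/2}$ should reproduce (\ref{eq:Lmod-recurr}) exactly (note $\alpha_n=q^{-2n-3/2}(1-q^{n+1})$ supplies the factor $q^{-n/2}\cdot q^{-(n+1)/2}$ matching the shift $n\to n+1$, and $\beta_n$ matches the diagonal coefficient). Since $\tilde L_0(x;q)=1=P_0(x)$ and the recurrence determines the whole sequence, (\ref{eq:Pn-qLaguerre}) follows. The determinacy of the moment problem is then immediate from Proposition~\ref{thm:J-trclass}, once I check that (\ref{eq:an-k-part}) satisfies the standing hypotheses: $a_n=q^{-2(n+1)}(1-q^{n+1})>0$ and, since $a_n\sim q^{-2(n+1)}\to\infty$ geometrically, $\sum a_n^{-1}<\infty$, while $k=q^{1/2}\in(0,1)$.

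Next I would identify $\mathfrak{F}(z)$. By Proposition~\ref{thm:Fchar} the function $\mathfrak{F}(z)$ given by the series (\ref{eq:Fgt}) is the characteristic function, so it suffices to compute its coefficients explicitly for the choice (\ref{eq:an-k-part}) and match them against the power series of $J_1^{(2)}$. The coefficient of $z^m$ in (\ref{eq:Fgt}) is a nested sum over $0\le j_1<\dots<j_m<\infty$ of the product of factors $(1-k^{2(j_i-j_{i-1})})/((1-k^2)a_{j_i})$ with $k^2=q$. Writing $1-k^{2(j_i-j_{i-1})}=1-q^{j_i-j_{i-1}}$ and $a_{j_i}^{-1}=q^{2(j_i+1)}/(1-q^{j_i+1})$, this is precisely a sum of the type handled by Proposition~\ref{thm:nodenom-j0plus1} (after absorbing the $1/(1-q^{j_i+1})$ denominators via a telescoping or a direct application of one of the synchronized-summation lemmas). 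I expect the coefficient of $z^m$ to collapse to a closed form of the shape $q^{m(m+1)}/\big((q;q)_m(q^2;q)_m\big)$ up to explicit powers of $q$; comparing with (\ref{eq:q-J1-phi01}), where $\frac{1-q}{\sqrt z}J_1^{(2)}(2\sqrt z;q)={}_0\phi_1(;q^2;q,-q^2z)=\sum_m \frac{(-1)^m q^{m(m+1)}}{(q;q)_m(q^2;q)_m}z^m$, should then yield (\ref{eq:Fgt-part-J1}). Positivity and reality of the roots follows either from Theorem~\ref{thm:J} (all eigenvalues of $J$ are positive and simple) or from the cited fact that the roots of $J_1^{(2)}$ are real and simple.

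Finally I would treat the masses. The formula $\mu(\{\lambda\})=-\mathfrak{W}(\lambda)/\mathfrak{F}'(\lambda)$ is already furnished by Corollary~\ref{thm:mass} (equivalently Remark~\ref{rem:mass}) and transfers verbatim, so the only real work is evaluating the series (\ref{eq:Wgt}) for the special data to obtain (\ref{eq:Wgt-part}). This is the analogue of the $\mathfrak{F}$ computation but with the extra leading index $j_0$ carrying the weight $k^{2j_0}=q^{j_0}$ and an extra denominator $a_{j_0}$. Splitting the inner sum according to whether the smallest index $j_0$ equals its lower bound or not --- or, more systematically, isolating the $j_0$-summation and applying Proposition~\ref{thm:nodenom-j1-j0} to the remaining synchronized sum over $j_0\le j_1\le\dots$ --- should produce two pieces: a genuine ${}_0\phi_1$-type series that reassembles into $\frac{(1-q)q}{z}J_2^{(2)}(2\sqrt{qz};q)$ via (\ref{eq:q-Jnu-phi01}) with $\nu=2$, and a residual ${}_2\phi_1$ correction term matching the second summand of (\ref{eq:Wgt-part}). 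The main obstacle I anticipate is precisely this bookkeeping in the $\mathfrak{W}$ computation: unlike $\mathfrak{F}$, whose coefficients reduce to a single clean $q$-factorial quotient, the coefficients of $\mathfrak{W}$ do not fully telescope, and the delicate point is to organize the nested sum so that exactly the right portion condenses into the closed $J_2^{(2)}$ form while the remainder is recognized as a terminating-in-structure ${}_2\phi_1$. Choosing the correct specialization $c_i\mapsto$ (shifts determined by the $a_n^{-1}$ exponents) in Propositions~\ref{thm:nodenom-j1-j0} and~\ref{thm:s1--sm-synchro}, and correctly tracking the powers of $q$ introduced by $a_n=q^{-2(n+1)}(1-q^{n+1})$, is where the computation is most error-prone and must be carried out carefully.
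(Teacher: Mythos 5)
Your plan coincides with the paper's proof in all four steps: identification of $P_n$ with $(-1)^nq^{-n/2}\tilde L_n$ via the recurrences of Proposition~\ref{thm:Lmod-recurr}, determinacy from the general Theorems~\ref{thm:J} and \ref{thm:P}, the mass formula transferred from Corollary~\ref{thm:mass}, and coefficient-wise identification of $\mathfrak{F}$ and $\mathfrak{W}$ with $q$-Bessel-type series using the Section~\ref{sec:Auxiliary} identities. The $\mathfrak{F}$ computation is exactly as you describe (Proposition~\ref{thm:nodenom-j0plus1} followed by (\ref{eq:s1--sm-synchro-part}) gives the coefficient $q^{m(m+1)}/\big((q;q)_m(q^2;q)_m\big)$ of $(-z)^m$).

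The one place where your sketch is under-determined is the $\mathfrak{W}$ evaluation. The split into the $J_2^{(2)}$ piece and the ${}_2\phi_1$ piece does \emph{not} come from separating the cases of the smallest index $j_0$; in the paper one first applies Proposition~\ref{thm:nodenom-j1-j0} to the full sum over $0\le j_0<j_1<\dots<j_m$ (after expanding each $1/(1-q^{j_i+1})$ as a geometric series in an auxiliary index $n_i$), which produces a synchronized sum whose $n_0$-slot carries a $(q^{\cdot};q)_1$ Pochhammer while the others carry $(q^{\cdot};q)_2$. The split is then effected by the partial-fraction identity $q/(1-q^{n_0+2m+3})=1/(1-q^{n_0+2m+2})-(1-q)/(q^{n_0+2m+3};q)_2$, after which one sum is closed by (\ref{eq:s1--sm-synchro-part}) (giving the ${}_0\phi_1$ that becomes $J_2^{(2)}$) and the other by Proposition~\ref{thm:denom-2-2--2-1} (whose right-hand side supplies the factor $\sum_j q^{(m+a)j}/(1-q^{j+m+1})$ that is rewritten as the ${}_2\phi_1$). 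You never invoke Proposition~\ref{thm:denom-2-2--2-1}, yet it is the identity built precisely for this step; without it (or an equivalent), the residual term of (\ref{eq:Wgt-part}) does not emerge from Propositions~\ref{thm:nodenom-j1-j0} and \ref{thm:s1--sm-synchro} alone. Since you flag this bookkeeping as the anticipated obstacle, this is a gap in execution rather than in strategy, but as written the $\mathfrak{W}$ part of the argument is not yet a proof.
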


\begin{remark*} Let us point out once more that the respective Weyl
function satisfies
\[
w(z):=\int\frac{\text{d}\mu(\lambda)}{\lambda-z}=\frac{\mathfrak{W}(z)}{\mathfrak{F}(z)},
\]
see (\ref{eq:Weyl}) and (\ref{eq:Weyl-int}). \end{remark*}

\begin{proof} The choice (\ref{eq:alpha-beta-part}) is covered by
Theorems \ref{thm:J} and \ref{thm:P} as a particular case. This
means, among others, that the polynomials $P_{n}(x)$ satisfy (\ref{eq:Pn}),
the orthogonality measure $\mu$ is supported on the roots of the
function $\mathfrak{F}(z)$, with all the roots being positive, the
masses of the roots satisfy (\ref{eq:mass-muj}), and functions $\mathfrak{F}(z)$
and $\mathfrak{W}(z)$ are defined in (\ref{eq:Fgt}) and (\ref{eq:Wgt}),
respectively.

Let us show (\ref{eq:Pn-qLaguerre}). Comparing (\ref{eq:Pn-recurr}),
where $\alpha_{n}$, $\beta_{n}$ are given in (\ref{eq:alpha-beta-part}),
with (\ref{eq:Lmod-recurr}) in Proposition \ref{eq:Lmod-recurr}
we see that the sequences $\{P_{n}(x)\}$ and $\{(-1)^{n}q^{-n/2}\tilde{L}_{n}(x;q)\}$
obey the same three-term recurrence as well as the same initial condition.
Hence the sequences are necessarily equal.

Let us show (\ref{eq:Fgt-part-J1}). Expressing the $q$-Bessel function
in terms of a basic hypergeometric function, see (\ref{eq:q-J1-phi01}),
one finds that equation (\ref{eq:Fgt-part-J1}) means that
\begin{equation}
\mathfrak{F}(z)=\,_{0}\phi_{1}(\,;q^{2};q,-q^{2}z).\label{eq:Fgoth-phi01}
\end{equation}
Recalling (\ref{eq:Fgt}) and making the choice (\ref{eq:an-k-part})
we have
\begin{eqnarray}
 &  & \hskip-1.3em\mathfrak{F}(z)\,=\,1+\sum_{n=1}^{\infty}(-1)^{n}\label{eq:Fgt-qLaguerre}\\
 &  & \hskip-1.3em\times\bigg(\,\sum_{0\leq j_{1}<j_{2}<\ldots<j_{n}<\infty}\frac{(1-q^{j_{1}+1})(1-q^{j_{2}-j_{1}})\cdots(1-q^{j_{n}-j_{n-1}})}{(1-q)^{n}(1-q^{j_{1}+1})(1-q^{j_{2}+1})\cdots(1-q^{j_{n}+1})}\,q^{2(j_{1}+j_{2}+\ldots+j_{n})+2n}\bigg)z^{n}.\nonumber 
\end{eqnarray}
Thus, comparing the coefficients at respective powers of $z$ on the
right-hand sides of (\ref{eq:Fgoth-phi01}) and (\ref{eq:Fgt-qLaguerre}),
one can see that (\ref{eq:Fgoth-phi01}) is equivalent to the countably
many equations, numbered by $m\in\mathbb{N}$,
\begin{eqnarray*}
 &  & \sum_{0\leq j_{1}<j_{2}<\ldots<j_{m}<\infty}\frac{(1-q^{j_{1}+1})(1-q^{j_{2}-j_{1}})\cdots(1-q^{j_{m}-j_{m-1}})}{(1-q)^{m}(1-q^{j_{1}+1})(1-q^{j_{2}+1})\cdots(1-q^{j_{m}+1})}\,q^{2(j_{1}+j_{2}+\ldots+j_{m})}\\
 &  & =\,\frac{q^{m(m-1)}}{(q;q)_{m}(q^{2};q)_{m}}\,.
\end{eqnarray*}
For a given $m\in\mathbb{N}$, the LHS here can be actually simplified
with the aid of (\ref{eq:nodenom-j0plus1}) and also (\ref{eq:s1--sm-synchro-part}),
and we obtain
\begin{eqnarray*}
 &  & \frac{1}{(1-q)^{m}}\sum_{n_{1}=0}^{\infty}\ldots\sum_{n_{m}=0}^{\infty}q^{n_{1}+n_{2}+\ldots+n_{m}}\\
\noalign{\smallskip} &  & \qquad\ \ \times\sum_{0\leq j_{1}<\ldots<j_{m}<\infty}q^{(n_{1}+2)j_{1}+\ldots+(n_{m}+2)j_{m}}(1-q^{j_{1}+1})(1-q^{j_{2}-j_{1}})\ldots(1-q^{j_{m}-j_{m-1}})\\
 &  & =\,q^{m(m-1)}\sum_{n_{1}=0}^{\infty}\ldots\sum_{n_{m}=0}^{\infty}\frac{q^{n_{1}+2n_{2}+\ldots+mn_{m}}}{(q^{n_{1}+n_{2}+\ldots+n_{m}+2m};q)_{2}(q^{n_{2}+\ldots+n_{m}+2m-2};q)_{2}\cdots(q^{n_{m}+2};q)_{2}}\\
 &  & =\,q^{m(m-1)}\sum_{0\leq n_{m}\leq n_{m-1}\leq\ldots\leq n_{1}<\infty}\frac{q^{n_{1}+n_{2}+\ldots+n_{m}}}{(q^{n_{1}+2m};q)_{2}(q^{n_{2}+2m-2};q)_{2}\cdots(q^{n_{m}+2};q)_{2}}\\
 &  & =\,\frac{q^{m(m-1)}}{(q;q)_{m}(q^{2};q)_{m}}.
\end{eqnarray*}

Let us show (\ref{eq:Wgt-part}). Recalling (\ref{eq:Wgt}) and making
the choice (\ref{eq:an-k-part}) we have
\begin{eqnarray*}
 &  & \hskip-2.8em\mathfrak{W}(z)=\,q^{2}\sum_{m=0}^{\infty}(-1)^{m}\bigg(\,\sum_{0\leq j_{0}<j_{1}<j_{2}<\ldots<j_{m}}q^{3j_{0}+2j_{1}+2j_{2}+\ldots+2j_{m}}\\
\noalign{\smallskip} &  & \hskip7.8em\times\,\frac{(1-q^{j_{1}-j_{0}})(1-q^{j_{2}-j_{1}})\cdots(1-q^{j_{m}-j_{m-1}})}{(1-q^{j_{0}+1})(1-q^{j_{1}+1})\cdots(1-q^{j_{m}+1})}\bigg)\!\left(\frac{q^{2}z}{1-q}\right)^{\!m}.
\end{eqnarray*}
For a given $m\in\mathbb{Z}_{m}$ let
\begin{eqnarray*}
 &  & \hskip-1.7emX_{m}\,:=\,\frac{1}{(1-q)^{m}}\sum_{0\leq j_{0}<j_{1}<j_{2}<\ldots<j_{m}}q^{3j_{0}+2j_{1}+2j_{2}+\ldots+2j_{m}}\\
 &  & \qquad\qquad\qquad\qquad\qquad\qquad\times\,\frac{(1-q^{j_{1}-j_{0}})(1-q^{j_{2}-j_{1}})\cdots(1-q^{j_{m}-j_{m-1}})}{(1-q^{j_{0}+1})(1-q^{j_{1}+1})\cdots(1-q^{j_{m}+1})}\\
\noalign{\smallskip} &  & =\,\frac{1}{(1-q)^{m}}\sum_{n_{0}=0}^{\infty}\sum_{n_{1}=0}^{\infty}\sum_{n_{2}=0}^{\infty}\ldots\sum_{n_{m}=0}^{\infty}q^{n_{0}+n_{1}+n_{2}+\ldots+n_{m}}\,\sum_{0\leq j_{0}<j_{1}<j_{2}<\ldots<j_{m}}\\
\noalign{\smallskip} &  & \quad\times\,q^{(n_{0}+3)j_{0}+(n_{1}+2)j_{1}+(n_{2}+2)j_{2}+\ldots+(n_{m}+2)j_{m}}\,(1-q^{j_{1}-j_{0}})(1-q^{j_{2}-j_{1}})\cdots(1-q^{j_{m}-j_{m-1}}).
\end{eqnarray*}
Using formula (\ref{eq:nodenom-j1-j0}) we can compute
\begin{eqnarray*}
 &  & \hskip-1.5emX_{m}\,=\,\sum_{n_{0}=0}^{\infty}\sum_{n_{1}=0}^{\infty}\sum_{n_{2}=0}^{\infty}\ldots\sum_{n_{m}=0}^{\infty}\\
 &  & \hskip-1.5em\times\,\frac{q^{n_{0}+2n_{1}+3n_{2}+\ldots+(m+1)n_{m}+(m+1)m}}{(q^{n_{0}+n_{1}+n_{2}+\ldots+n_{m}+2m+3};q)_{1}(q^{n_{1}+n_{2}+\ldots+n_{m}+2m};q)_{2}(q^{n_{2}+\ldots+n_{m}+2m-2};q)_{2}\cdots(q^{n_{m}+2};q)_{2}}\\
\noalign{\smallskip} &  & \hskip-1.5em=\,q^{(m+1)m}\sum_{0\leq n_{m}\leq\ldots\leq n_{2}\leq n_{1}\leq n_{0}<\infty}\frac{q^{n_{0}+n_{1}+n_{2}+\ldots+n_{m}}}{(q^{n_{0}+2m+3};q)_{1}(q^{n_{1}+2m};q)_{2}(q^{n_{2}+2m-2};q)_{2}\cdots(q^{n_{m}+2};q)_{2}}.
\end{eqnarray*}
Writing
\[
\frac{q}{1-q^{n_{0}+2m+3}}=\frac{1}{1-q^{n_{0}+2m+2}}-\frac{1-q}{(q^{n_{0}+2m+3};q)_{2}}
\]
we get
\begin{eqnarray*}
 &  & X_{m}\,=\,q^{(m+1)m-1}\sum_{0\leq n_{m}\leq\ldots\leq n_{2}\leq n_{1}\leq n_{0}<\infty}\\
 &  & \qquad\qquad\qquad\qquad\qquad\ \times\,\frac{q^{n_{0}+n_{1}+n_{2}+\ldots+n_{m}}}{(q^{n_{0}+2m+2};q)_{1}(q^{n_{1}+2m};q)_{2}(q^{n_{2}+2m-2};q)_{2}\cdots(q^{n_{m}+2};q)_{2}}\\
\noalign{\smallskip} &  & \qquad\ \ -\,(1-q)q^{(m+1)m-1}\sum_{0\leq n_{m}\leq\ldots\leq n_{2}\leq n_{1}\leq n_{0}<\infty}\\
 &  & \qquad\qquad\qquad\qquad\qquad\ \times\,\frac{q^{n_{0}+n_{1}+n_{2}+\ldots+n_{m}}}{(q^{n_{0}+2m+2};q)_{2}(q^{n_{1}+2m};q)_{2}(q^{n_{2}+2m-2};q)_{2}\cdots(q^{n_{m}+2};q)_{2}}.
\end{eqnarray*}
Next we use identity (\ref{eq:denom-2-2--2-1}) and again (\ref{eq:s1--sm-synchro-part})
thus obtaining
\begin{eqnarray*}
X_{m} & = & \frac{q^{(m+1)m-1}}{(q;q)_{m}(q^{2};q)_{m}}\,\sum_{j=0}^{\infty}\frac{q^{(m+2)j}}{1-q^{j+m+1}}-\frac{(1-q)q^{(m+1)m-1}}{(q;q)_{m+1}(q^{2};q)_{m+1}}\\
 & = & \frac{q^{(m+1)^{2}}}{(q;q)_{m}(q^{2};q)_{m}}\,\sum_{j=0}^{\infty}\frac{q^{(m+2)j}}{1-q^{j+m+2}}+\frac{q^{(m+1)m}}{(q;q)_{m}(q^{2};q)_{m+1}}\,.
\end{eqnarray*}

From the last expression one can deduce that
\[
\mathfrak{W}(z)=\sum_{m=0}^{\infty}\bigg(\frac{q^{(m+3)(m+1)}}{(q;q)_{m}(q^{2};q)_{m}}\,\sum_{j=0}^{\infty}\frac{q^{(m+2)j}}{1-q^{j+m+2}}+\frac{q^{(m+2)(m+1)}}{(q;q)_{m}(q^{2};q)_{m+1}}\bigg)(-z)^{m}.
\]
Let us write $\mathfrak{W}(z)=\mathfrak{W}_{\text{I}}(z)+\mathfrak{W}_{\text{II}}(z)$
where
\[
\mathfrak{W}_{\text{I}}(z)=\frac{q^{2}}{1-q^{2}}\,\sum_{m=0}^{\infty}\frac{q^{m(m-1)}}{(q;q)_{m}(q^{3};q)_{m}}\,(-q^{4}z)^{m}=\frac{q^{2}}{1-q^{2}}\,\,_{0}\phi_{1}(\,;q^{3};q,-q^{4}z)
\]
and
\[
\mathfrak{W}_{\text{II}}(z)=\sum_{m=0}^{\infty}\frac{q^{(m+3)(m+1)}}{(q;q)_{m}(q^{2};q)_{m}}\,\sum_{j=0}^{\infty}\frac{q^{(m+2)j}}{1-q^{j+m+2}}\,(-z)^{m}.
\]
Recalling (\ref{eq:q-Jnu-phi01}) we have
\[
\mathfrak{W}_{\text{I}}(z)=\frac{(1-q)q}{z}\,J_{2}^{(2)}(2\,\sqrt{qz};q).
\]
Furthermore,
\[
\sum_{j=0}^{\infty}\frac{q^{aj}}{1-q^{j+a}}=\frac{1}{1-q^{a}}\,\sum_{j=0}^{\infty}\frac{(q^{a};q)_{j}}{(q^{a+1};q)_{j}}\,q^{aj}=\frac{\,_{2}\phi_{1}(q^{a},q;q^{a+1};q,q^{a})}{1-q^{a}},
\]
and therefore
\[
\mathfrak{W}_{\text{II}}(z)=\sum_{m=0}^{\infty}\frac{q^{(m+3)(m+1)}\,_{2}\phi_{1}(q^{m+2},q;q^{m+3};q,q^{m+2})}{(q;q)_{m}(q^{2};q)_{m+1}}\,(-z)^{m}.
\]
This concludes the proof. \end{proof}

\section*{Acknowledgments}

The author acknowledges partial support by European Regional Development
Fund Project \textquotedblleft Center for Advanced Applied Science\textquotedblright{}
No. CZ.02.1.01/0.0/0.0/16\_019/0000778.

\end{document}